\long\def\remove#1{}
\newtheorem{theorem}{Theorem}[section] 
\newtheorem{lemma}[theorem]{Lemma}
\newtheorem{obs}[theorem]{Observation}
\newtheorem{corollary}[theorem]{Corollary}
\newenvironment{proof}{{\em Proof:}}{\hfill{\hfill\rule{2mm}{2mm}}}
\def\marrow{{\marginpar[\hfill$\longrightarrow$]{$\longleftarrow$}}}
\newcommand{\erin}[1] {{\sc Erin says: }{\marrow\sf [#1]}}
\newcommand {\mm}[1] {\ifmmode{#1}\else{\mbox{\(#1\)}}\fi}
\newcommand{\denselist}{\itemsep 0pt\parsep=1pt\partopsep 0pt}
\newcommand{\area}	{\rm {\mm Area}}
\newcommand{\reals}	{{\rm I\!\hspace{-0.025em} R}}
\newcommand{\morph}	{homotopy\xspace}
\newcommand{\energy}	{cost\xspace}
\newcommand{\similarity}	{similarity\xspace}
\newcommand{\simC}	{\sigma}
\newcommand{\breakpt}	{anchor point\xspace}
\newcommand{\wn}	{\mm {\rm wn}}
\newcommand{\totalW}	{total winding number\xspace}
\newcommand{\totalwn}	{\mm {\rm Tw}}
\newcommand{\acurve}	{\gamma}
\newcommand{\acell}		{R}
\newcommand{\arr}			{\mm {\rm Arr}}
\newcommand{\uc}	{\mathcal{U}}
\newcommand{\lift}[1]	{{\tilde{{#1}}}}
\newcommand{\concatenate}		{\circ}
\newcommand{\sphere}	{\mathbf{S}} 
\newcommand{\sphereminus}[1]   {\sphere_{#1}}
\newcommand{\zerobf}		{\mathbf{0}}
\newcommand{\B}			{\mathrm {\mathbf{B}}}
\newcommand{\Z}			{\mathrm {\mathbb{Z}}}
\newcommand{\z}			{\mathbf{z}}
\newcommand{\w}			{\mathbf{w}}
\newcommand{\bpt}		{\mathbf{b}}
\newcommand{\mysquare}		{\Box} 
\newcommand{\amorph}		{H}
\newcommand{\Ipt}		{\mathbf{x}}
\newcommand{\optmorph}		{H^*}
\newcommand{\p}			{\mathbf{p}}
\newcommand{\q}			{\mathbf{q}}
\newcommand{\pin}		{\p}
\newcommand{\qin}		{\q}
\newcommand{\myproofbegin}	{{\noindent {\em Proof:~}}}
\newcommand{\myproofend}   {\hfill{\hfill\rule{2mm}{2mm}}}
\newcommand{\sensepreserving}	{sense-preserving\xspace}
\newcommand{\allpositive}	{consistent\xspace}
\newcommand{\extendQ}	{\hat{Q}}
\newcommand{\extIpt}		{\hat{\Ipt}} 
\newcommand{\representative}	{representative\xspace}
\newcommand{\addwn}		{addW} 
\newcommand{\myparagraph}[1]	{\vspace*{0.1in}{\noindent {\bf {#1}~}}}
\newcommand{\anothermorph}	{H}
\newcommand{\Wmin}		{wn-min\xspace}
\newcommand{\Wmax}		{wn-max\xspace}
\newcommand{\valid}		{valid\xspace}
\newcommand{\Frechet}   {Fr\'{e}chet\xspace}
\newcommand{\etal}      {et al.\@\xspace}
\newcommand{\trimesh}		{K}
\newcommand{\tricomplexity}		{N}
\begin{document}

\title{Measuring Similarity Between Curves on 2-Manifolds via Homotopy Area}
\author{Erin Wolf Chambers \thanks{Dept. of Math and Computer Science, Saint Louis University, Saint Louis, MO. } \and 
Yusu Wang \thanks{Dept. of Computer Science and Engineering, The Ohio State University, Columbus, OH 43210.}}
\date{}
\maketitle

\begin{abstract}
Measuring the similarity of curves is a fundamental problem arising in many application fields.  There has been considerable interest in several such measures, both in Euclidean space and in more general setting such as curves on Riemannian surfaces or curves in the plane minus a set of obstacles.  
However, so far, efficiently computable similarity measures for curves on general surfaces remain elusive. 
This paper aims at developing a natural curve similarity measure that can be easily extended and computed for curves on general orientable $2$-manifolds. Specifically, we measure similarity between homotopic curves based on how hard it is to deform one curve into the other one continuously, and define this ``hardness'' as the minimum possible surface area swept by a homotopy between the curves. We consider cases where curves are embedded in the plane or on a triangulated orientable surface with genus $g$, and we present efficient algorithms (which are either quadratic or near linear time, depending on the setting) for both cases. 

\end{abstract}

\section{Introduction}
Measuring curve similarity is a fundamental problem arising in many application fields, including graphics, computer vision, and geographic information systems. Traditionally, much research has been done on comparing curves embedded in the Euclidean space. However, in many cases it is natural to study curves embedded in a more general space, such as a terrain or  a surface. 

In this paper, we study the problem of measuring curve similarity on surfaces. Specifically, given two simple homotopic curves embedded on an orientable $2$-manifold (including the plane), we measure their similarity by the minimum total area swept when deforming one curve to the other (the ``area" of the homotopy between them), and present efficient algorithms to compute this new measure. 

\myparagraph{Related work.}
From the perspective of computational geometry, the most widely studied similarity measures for curves is the \Frechet{} distance. Intuitively, imagine that a man and his dog are walking along two paths with a leash between them. The \Frechet{} distance between these two paths is the minimum leash length necessary for them to move from one end of the paths to the other end without back-tracking. Since the \Frechet{} distance takes the ``flow" of the curves into account,  in many settings it is a better similarity measure for curves than alternatives
such as the Hausdorff distance \cite{ag-dgsmi-00,akw-cdmpc-04}. 

Given two polygonal curves $P$ and $Q$ with $n$ total edges in $\reals^d$, the \Frechet{} distance can be computed in $O(n^2\log n)$ time \cite{ag-cfdbt-95}. An $\Omega(n\log n)$ lower bound for the decision  problem in the algebraic computation tree model is known \cite{bbkrw-hdwd-07}, and Alt has conjectured that the decision problem is 3SUM-Hard~\cite{alt2009}.  Recently, Buchin \etal{}~\cite{bbmm-fswd-12} show that there is a real algebraic decision tree to solve the \Frechet{} problem with sub-quadratic depth, suggesting that perhaps this is not the case. They also give an improved algorithm which runs in $O(n^2 \sqrt{\log n} (\log \log n)^2)$ time.
Very recently, Agarwal \etal{} present a novel approach to compute the discrete version of the \Frechet{} distance between two polygonal curves in sub-quadratic time \cite{AAKS13}. This is the first algorithm for any variant of the \Frechet{} distance to have a sub-quadratic running time for general curves. No previous algorithm, \emph{exact or approximate}, with running time $o(n ^2)$ is known for general curves, although sub-quadratic approximation algorithms for special families of curves are known \cite{akw-cdmpc-04,ahkww-fdcr-06,dreimelharpeled2012}. 

While the \Frechet{} distance is a natural curve similarity measure, it is sensitive to outliers. Variants of it, such as the summed-\Frechet{} distance, and the partial \Frechet{} similarity, have been proposed \cite{BBW09,Buchin07,EFV07}, usually at the cost of further increasing the time complexity. 

The problem of extending and computing the \Frechet{} distance to more general metric space has also received much attention. Geodesic distance between points is usually considered when the underlying domain is not $\reals^d$. For example, Maheshwari and Yi \cite{MY05} computed the geodesic \Frechet{} distance between two polygonal paths on a convex polytope in roughly $O(n^3K^4\log (Kn))$ time, where $n$ and $K$ are the complexity of the input paths and of the convex polytope, respectively.  Raichel and Har-Peled consider approximating the weak \Frechet{} distance between simplicial complexes in $\mathbb{R}^d$~\cite{harpeledraichel2011}.
Geodesic \Frechet{} distance between polygonal curves in the plane within a simple polygon has also been studied~\cite{b-ombp-02,CW08,ehgmm-nsmpa-02}. 


Rather than comparing distance between only two curves, Buchin et.~al.~\cite{bbklsww-mt-10} propose the concept of a median in a group of curves (or trajectories, in their setting).  They give two algorithms to compute such a median.   The first is based simply on the concept of remaining in the middle of the set of curves; this algorithm, while fast and simple, has a drawback in that the representative curve might not capture relevant features shared by a majority of the input curves.  Their second algorithm addresses this issue by instead isolating a subset of  relevant curves which share the same homotopy type with respect to obstacles that are placed in empty regions of the blame; it then computes a medial curve from this relevant subset.

One issue with generalizing \Frechet{} distance directly to surfaces is that the underlying topology is not taken into account; for example, in geodesic \Frechet{} distance, while the length of the leash varies continuously, the actual leash itself does not.  
As a result, several measures of similarity have been proposed which take the underlying topology into account. 
Chambers \etal{} \cite{CVE08} proposed the so-called homotopic \Frechet{} distance and gave a polynomial (although not efficient) algorithm for when the curves reside in a planar domain with a set of polygonal obstacles. The extra requirement for this homotopic \Frechet{} distance is that the leash itself and not just its length has to vary in a continuous manner, essentially restricting the homotopy class which the leash is in.  A stronger variant called isotopic \Frechet{} distance has also been proposed and investigated, although no algorithms at all are known to even approximate this distance~\cite{Chambers_isotopicfrechet}.

Orthogonal to homotopic \Frechet{} distance is the concept of the \emph{height} of a homotopy;  instead of minimizing the maximum leash length, this measure views the homotopy as tracing a way for the first curve to deform to the second curve, where the goal is to minimize the longest intermediate curve length.  Introduced independently in two very different contexts~\cite{Brightwell2009, chambersletscher2009}, it is not even known if the problem is in NP. 

Recent work on approximating the homotopy height and the homotopic \Frechet{} distance has yielded efficient $O(\log n)$ approximation algorithms for both of these problems~\cite{nomagicleash2012}. However, exact algorithms on surfaces for either problem are still unknown.

\myparagraph{New work. }
In this paper, we develop a natural similarity measure for curves on general surfaces that can be computed both quickly and exactly. Intuitively, we measure distances between homotopic curves based on how hard it is to deform one curve into the other one, and define this ``hardness'' as the minimal total surface area swept by a homotopy between them, which we call the optimal homotopy area. 
Our similarity measure is natural, and robust against noise (as the area in a sense captures average, instead of maximum, deviation from one curve to the other). To the best of our knowledge, this is the first similarity measure for curves on general surfaces with efficient polynomial-time algorithms to compute it exactly. 

It is worth noting that this definition in a way combines homotopic \Frechet{} distance with homotopy height; those measures compute the ``width" and ``height" of the homotopy, while our measure calculates the total area.  It is thus interesting that while no exact algorithms are known for either of those measures on surfaces, we are able to provide a polynomial running time for computing the area of a homotopy. 

We consider both cases where curves are embedded in the plane, or on a closed, triangulated orientable surface with genus $g$. For the former case, our algorithm runs in $O(n \log n + I^2 \log I)$ time, where $n$ is the total complexity of input curves and $I$ is the number of intersections between them. 
On a surface, if the input  is a triangulation of complexity $\tricomplexity$, then our algorithm runs in time $O(I^2\log I + ng\log n + \tricomplexity)$. 
While our similarity measure is more expensive to compute for the case of  curves in the plane than the \Frechet{} distance when $I = \omega(n)$, one  major advantage is that this measure can be computed on general orientable surfaces efficiently. In fact, the ideas and algorithms behind the planar case form the foundation for the handling of the case on general surfaces.  

The main ideas behind our approach are developed by examining some properties of one natural class of homotopies, including a relation with the  winding number of a closed curve. Specifically, the use of the winding number enables us to compute the optimal homotopy area  efficiently in the plane, where the homotopy is restricted to be piecewise differential and regular. This forms the basis of our  dynamic programming framework to compute similarity between curves in the plane. We also show how to build efficient data structures to keep the total cost of the dynamic program low.

For the case where the underlying surface is a topological sphere, we extend the winding number in a natural way and show how to adapt our planar algorithm without additional blow-up in the time complexity. 
For the case when the  surface has non-zero genus, we must extend our algorithm to run efficiently in the universal cover (which is homeomorphic to the plane) by using only a small portion of it. 

We remark that the idea of measuring deformation areas has been used before in practice \cite{Cro92,MS92}. For example, similarity between two convex polygons can be measured by their symmetric difference~\cite{afrw-mcsrs-98,Vel01}; we note that this is not equivalent to homotopy area, although it may be the same value in some situations.  In another paper, the area sandwiched between an \emph{$x$-monotone curve} and another curve is used to measure their similarity~\cite{BC06}. However, computing the ``area'' between general curves has not been investigated prior to this work. 

\section{Definitions and Background}

\myparagraph{Paths and cycle.}
We will assume that we are working on an orientable 2-manifold $M$ (which could be the plane). A \emph{curve} (or a \emph{path}) on a surface $M$ is a map $P: [0,1] \rightarrow M$; a \emph{cycle} (or a \emph{loop}) is a continuous map $\gamma: S^1 \rightarrow M$ where $S^1$ is the unit circle. A curve $P$ or a cycle $\gamma$ is \emph{simple} if $P(t_1) \neq P(t_2)$ (resp. $\gamma(t_1) \neq \gamma(t_2)$) for any $t_1 \neq t_2$.  

\myparagraph{Homotopy}
A \emph{homotopy} between two paths $P$ and $Q$ (with the same endpoints) is a continuous map $H: [0,1] \times [0,1] \rightarrow M$ where $H(0, \cdot) = P$, $H(1, \cdot) = Q$, $H(\cdot, 0) = P(0) = Q(0)$ and $H(\cdot, 1) = P(1) = Q(1)$.  
A homotopy describes a continuous deformation between the two paths or curves: for any value $t \in [0,1]$, we let $\amorph_t = H(t, \cdot)$ be the intermediate curve at time $t$, where $\amorph_0 = P$ and $\amorph_1 = Q$.

We define the \emph{area} of a homotopy $H$ to be the total area covered by the image of the homotopy on the surface, where an area that is covered multiple times will be counted with multiplicity.   More precisely, given a homotopy $H$ whose image is piecewise differentiable, \[ \text{Area}(H) = \int_{s \in [0,1]} \int_{t \in [0,1]} \left\lvert \frac{dH}{ds} \times \frac{dH}{dt} \right\rvert ds dt \].  The \emph{minimum homotopy area} between $P$ and $Q$ is the infimum of the areas of all homotopies between $P$ and $Q$, denoted by $\simC(P,Q)$.  
If such an infimum does exist and can be achieved by a homotopy, we call that homotopy an \emph{optimal homotopy}. 

We note that it is not immediately clear that this value exists, depending on the curves and underlying homotopy. 
Minimum area homotopies were considered by Douglas~\cite{douglas} and Rado~\cite{rado} in the context of Plateau's problem; they noted that not only is the integral improper in general, but the infimum itself may not be continuous.   The eventual proof that these exist in $\mathbb{R}^n$ relies on a definition using Dirichlet integrals which ensure (almost) conformal parameterizations of the homotopy.  See the book by Lawson~\cite{lawson1980lectures} for an  overview of this result as well as several extensions to minimal area submanifolds in more general settings.  

\parpic[r]{\includegraphics[width=3.5cm]{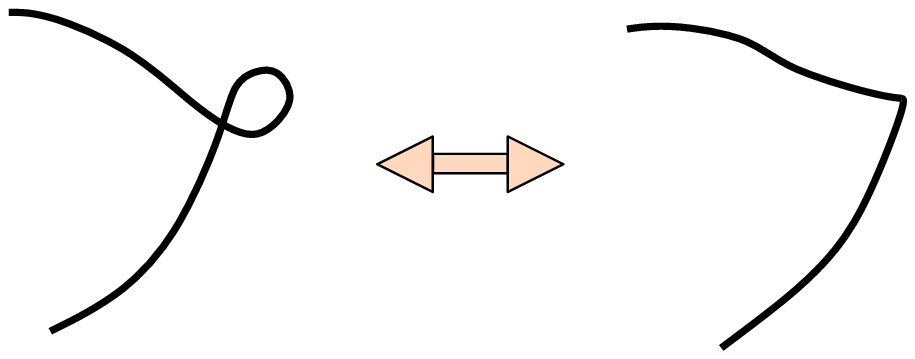}}
However, beyond a proof of existence, we are interested in computing such homotopies, or at least measuring their actual area, in much simpler settings such as $\mathbb{R}^2$ or a  surface.  
To this end, we restrict the input curves to be \emph{simple} curves which consist of a finite number of piecewise analytic components. We also need $H$ to be continuous and piecewise differentiable, so that the integral can be defined. Finally, we will also require that at any time $t$, the intermediate curve $H_t$ is \emph{regular} (see \cite{Whi37} for smooth curves and \cite{MY91} for piecewise-linear curves). 
Intuitively, this means that the deformation is ``kink''-free \cite{MY91}, and cannot create or destroy a local loop as shown in the right figure (the singular point in the right curve is a kink).   
Note that this is required for the minimum homotopy to even exist; again we refer the reader to the book by Lawson~\cite{lawson1980lectures} for details.

\myparagraph{Decomposing arrangements.}
Consider two simple piecewise analytic curves $P$ and $Q$ with the same endpoints.  Their concatenation forms a (not necessarily simple) closed curve denoted by $C = P \concatenate \text{rev}(Q)$, where rev$(Q)$ is the reversal of $Q$.  
Let $\arr(C)$ denote the arrangement formed by $C$, where vertices in $\arr(C)$ are the intersection points between $P$ and $Q$. 
An edge / arc in $\arr(C)$ is a subcurve of either $P$ or $Q$. 

\parpic[r]{\includegraphics[width=3.5cm]{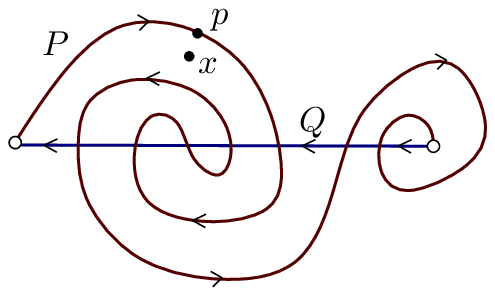}}
We give $C$ (and thus $P$ and $Q$) an arbitrary orientation. Hence we can talk about the \emph{sidedness} with respect to $C$ at a point $p \in P$.  Specifically, a point $x \in \reals^2$ is to the \emph{right} of $C$ at $p$ if it is a counter-clockwise turn from the orientation of the vector $p x$ the orientation of (tangent of) $C$ at $p$ (see the right figure for an example).  
Given two oriented curves $\gamma_1$ and $\gamma_2$, an intersection point $p$ of them is \emph{positive} if it is a counter-clockwise turn from the orientation of $\gamma_1$ to that of $\gamma_2$ at $p$.  
For a curve $\gamma$ and a point $x \in \gamma$, the \emph{index of x} is the parameter of $x$ under the arc-length parameterization of $\gamma$. We sometimes use $x$ to represent its index along $\gamma$ when its meaning is clear from the context. 
Given two points $x,y \in \gamma$, we will use $\gamma[x,y]$ to denote the unique sub-curve of $\gamma$ between points $x$ and $y$. 

We say that a \morph{} $\amorph$ from $P$ to $Q$ is \emph{right \sensepreserving{}} if for any $t, s \in [0,1]$, we have that either $\amorph_{t+dt}(s) = \amorph_t(s)$ or $\amorph_{t+dt}(s)$ is to the right of the oriented curve $\amorph_t$ at $\amorph_t(s)$. If it is the former case, then we say that $p = \amorph_t(s)$ is a \emph{fixed point} at time $t$. 
Similarly, we say that $\amorph$ is \emph{left \sensepreserving{}} if for any $t,s \in [0,1]$, $\amorph_t(s)$ is either a fixed point or deforms to the \emph{left} of the curve $\amorph_t$. 
Our homotopy $\amorph$ is \emph{\sensepreserving{}} if it is either right or left \sensepreserving{}. 
The \sensepreserving{} property means that we can continuously deform the curve $P$ always in the same direction, without causing local folds in the regions swept. Intuitively, any optimal \morph{} should have this property to some extent, which we will make more precise and prove later.  

\section{Structure of Optimal Homotopies}
\label{sec:optimal}

Given two simple curves $P$ and $Q$ (with the same end points) embedded on an orientable $2$-manifold $M$, let $X = \{ \Ipt_1, \ldots, \Ipt_I \}$ denote the set of $I$ intersection points between them, sorted by their order along $P$. 
Given a homotopy $H$ from $P$ to $Q$, a point $p \in M$ is called an \emph{\breakpt{} with respect to $\amorph$} if it remains on $H(t, \cdot) = \amorph_t$ at all times $t \in [0,1]$. Of course not all intersection points are anchor points.   However, if $p$ is an \breakpt{}, then it is necessarily an intersection point between $P$ and $Q$, as $p \in \amorph_0 = P$ and $p \in \amorph_1 = Q$. We exclude the beginning and ending end points of $P$ and $Q$ from the list of \breakpt{}s, as they remain fixed for all homotopies.
In what follows, we show that any optimal \morph{} can be decomposed by \breakpt{}s such that each of the resulting smaller homotopies has a simple structure. 

Specifically, consider an arbitrary optimal \morph{} $\optmorph$. Let $\B = \{ \bpt_1, \ldots, \bpt_k \}$ be the set of \breakpt{}s with respect to $\optmorph$, the minimum area homotopy. We order the $\bpt_i$'s by their indices along $P$. It turns out that the order of their indices along $Q$ is the same, and the proof of this simple observation is in Appendix \ref{appendix:breakptorder}. 
\begin{obs}
The order of $\bpt_i$'s along $P$ and along $Q$ are the same. 
\label{obs:breakptorder}
\end{obs}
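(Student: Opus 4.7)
My plan is to argue by contradiction: if the order of the $\bpt_i$'s along $P$ and along $Q$ differed, then two anchor-point ``worldlines'' in the parameter domain of the homotopy would have to cross, forcing two distinct anchor points to coincide.

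To set this up, suppose for contradiction that there exist $\bpt_i, \bpt_j$ such that $\bpt_i$ precedes $\bpt_j$ along $P$ but $\bpt_j$ precedes $\bpt_i$ along $Q$. For each anchor $\bpt_\ell \in \B$, I would construct a continuous worldline $\gamma_\ell : [0,1] \to [0,1]^2$ of the form $\gamma_\ell(t) = (t, s_\ell(t))$ with $\optmorph(\gamma_\ell(t)) = \bpt_\ell$ for every $t$. The existence of such a section rests on two facts: the preimage $(\optmorph)^{-1}(\bpt_\ell)$ is a nonempty closed subset of $[0,1]^2$ whose projection to the $t$-coordinate covers $[0,1]$ (because $\bpt_\ell$ is an \breakpt{}), and the regularity of each intermediate curve $\optmorph_t$ makes this preimage locally one-dimensional, so that a standard compactness/connectedness argument yields a continuous branch through both the initial and final fibers.

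Once the worldlines are in hand the rest is immediate. Define $\varphi(t) = s_i(t) - s_j(t)$; it is continuous on $[0,1]$ with $\varphi(0) < 0$ (by the order along $P$) and $\varphi(1) > 0$ (by the assumed reversed order along $Q$). The intermediate value theorem produces some $t^\ast \in (0,1)$ with $s_i(t^\ast) = s_j(t^\ast)$, and then
\[
\bpt_i \;=\; \optmorph(t^\ast, s_i(t^\ast)) \;=\; \optmorph(t^\ast, s_j(t^\ast)) \;=\; \bpt_j,
\]
contradicting $\bpt_i \neq \bpt_j$.

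The main obstacle is the first step: producing genuinely continuous worldlines. Because an intermediate curve $\optmorph_t$ need not be simple, the fiber $(\optmorph_t)^{-1}(\bpt_\ell)$ can consist of several parameter values, so one must consistently choose one branch as $t$ varies. To avoid an explicit selection, I would instead define
\[
T \;=\; \{\, t \in [0,1] : \text{the induced ordering of } \B \text{ on } \optmorph_t \text{ agrees with the ordering on } P \,\}
\]
and show that $T$ is both open and closed in $[0,1]$. Openness follows from the uniform continuity of $\optmorph$ together with the pairwise distinctness of the $\bpt_i$, and closedness holds because the ordering can only change by a collision of two anchors, which is impossible. Since $0 \in T$ and $T$ is clopen, $1 \in T$, giving the claim.
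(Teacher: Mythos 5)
Your underlying idea---that the preimages of distinct \breakpt{}s under $\optmorph$ are disjoint subsets of the parameter square, so their traces cannot cross and the orders along $P$ and $Q$ must agree---is the same idea as in the paper's proof. But both of your executions leave a gap at the same place. The worldline version needs a continuous section $s_\ell(t)$, i.e.\ it needs the preimage of $\bpt_\ell$ to contain a \emph{graph} over the $t$-axis; as you note yourself, the fiber $(\optmorph_t)^{-1}(\bpt_\ell)$ may have several branches, a continuous selection need not exist, and even a connecting curve in the preimage need not be monotone in $t$. The paper sidesteps exactly this: it views $\optmorph$ as a map on $\mysquare=[0,1]^2$, asserts only that the preimage of $\bpt_i$ contains a \emph{connected curve} (not a graph) joining the parameter $\pin_i$ on the bottom edge to $\qin_i$ on the top edge, and then uses disjointness plus planarity of the square: pairwise disjoint curves joining the bottom edge to the top edge must meet the two edges in the same order. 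No intermediate value theorem applied to $s_i(t)-s_j(t)$ is needed.

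Your fallback clopen argument does not repair this. First, ``the induced ordering of $\B$ on $\optmorph_t$'' is not well defined: an intermediate curve can pass through the fixed point $\bpt_i\in M$ at several parameter values, and these can interleave with the parameters realizing $\bpt_j$. Second, the closedness step relies on the claim that ``the ordering can only change by a collision of two anchors.'' The anchors are distinct fixed points of $M$ and never collide; what actually can happen is that the fibers $S_i(t)=\{s:\optmorph_t(s)=\bpt_i\}$ vary only upper semicontinuously in $t$, so a new pair of preimage branches can be born at a time when the sweeping curve touches $\bpt_i$ again, possibly at a parameter located beyond $S_j(t)$. At such a moment the quantity $\max S_i(t)-\min S_j(t)$ can jump sign without ever vanishing (it cannot vanish, since $\bpt_i\neq\bpt_j$), so your set $T$ need not be closed and no collision/intermediate-value argument applies. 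Note also that the property you try to propagate over all intermediate times (every parameter realizing $\bpt_i$ precedes every parameter realizing $\bpt_j$) is strictly stronger than the observation, which concerns only the bottom and top edges of $\mysquare$; the disjoint-connecting-curves argument proves the latter without needing the former.
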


This observation implies that we can decompose $\optmorph$ into a list of sub-homotopies, where $\optmorph_i$ 
morphs $P[\bpt_i, \bpt_{i+1}]$ to $Q[\bpt_i, \bpt_{i+1}]$. Obviously, each $\optmorph_i$ is necessarily optimal, and it induces no \breakpt{}s. The following result states that an optimal \morph{} without \breakpt{}s has a simple structure, which is \sensepreserving{}. Intuitively, if any point changes its deformation direction at any moment, the deformation will sweep across some area redundantly and thus cannot be optimal. The detailed proof is in Appendix \ref{appendix:sensepreserving}. 

\begin{lemma}
If an optimal \morph{} $\anothermorph$ from $P$ to $Q$ has no \breakpt{}s, then it is \sensepreserving{}. 
\label{lem:sensepreserving}
\end{lemma}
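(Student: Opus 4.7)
The plan is to argue by contradiction. Suppose $\anothermorph$ is an optimal \morph{} from $P$ to $Q$ with no \breakpt{}s, but is neither right nor left \sensepreserving{}; I will produce a strictly lower-area \morph{} $\tilde\anothermorph$ between $P$ and $Q$, contradicting optimality. At a regular point $(t,s) \in (0,1)^2$, decompose the time-derivative $\partial \anothermorph / \partial t$ into components tangent and normal to the oriented intermediate curve $\anothermorph_t$ at $\anothermorph_t(s)$. Writing $v_n(t,s)$ for the signed normal speed (positive when the curve moves to the right, negative when it moves to the left, zero at fixed points) and $v_s(t,s)$ for the tangential speed along $\anothermorph_t$, only the normal component contributes to the Jacobian, so
\[
\text{Area}(\anothermorph) = \int_0^1 \int_0^1 |v_s(t,s)| \cdot |v_n(t,s)| \, ds \, dt.
\]
The hypothesis that $\anothermorph$ is not \sensepreserving{} forces $v_n$ to be strictly positive on a set $U^+ \subset [0,1]^2$ of positive measure and strictly negative on a set $U^- \subset [0,1]^2$ of positive measure.

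Next, I would carry out a local surgery. Select a point $(t^*, s^*)$ on the common boundary of $U^+$ and $U^-$, and work inside a small rectangle $R$ around it on which the image of $\anothermorph$ sweeps across a thin strip in one normal direction and then back across it. Replace $\anothermorph|_R$ with a \morph{} $\tilde\anothermorph|_R$ that skips the backward sweep --- for example, by holding the arc stationary (or letting it slide tangentially along itself) during the time of reversal --- while matching $\anothermorph$ along $\bnd R$. Because $|v_n|$ is bounded away from zero on each side of the sign change and is integrated against $|v_s| > 0$, the original homotopy sweeps the strip at least twice while $\tilde\anothermorph$ sweeps it at most once, giving $\text{Area}(\tilde\anothermorph) < \text{Area}(\anothermorph)$.

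The main obstacle is ensuring that $\tilde\anothermorph$ is a \emph{valid} \morph{} under the technical restrictions in force: it must remain continuous, piecewise differentiable, and --- most delicately --- every intermediate curve $\tilde\anothermorph_t$ must be regular, with no kinks, cusps, or collapsed arcs. The surgery must therefore be glued into $\anothermorph$ along $\bnd R$ with matching tangents and without creating pinned degeneracies. The no-\breakpt{} hypothesis plays its essential role here: since no point of $R$ is forced to lie on $\anothermorph_t$ for every $t$, there is enough geometric slack to smoothly insert a straightened arc. Turning this local area-reducing modification into a rigorously valid global homotopy --- and confirming that the excess area removed at $R$ is not merely rearranged elsewhere --- is the technical crux of the proof.
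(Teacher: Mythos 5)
Your overall strategy---assume the optimal homotopy is not \sensepreserving{} and perform a local surgery that removes a doubly swept region---is the same exchange argument the paper uses, but there is a genuine gap at the step where you claim that near any sign change of $v_n$ ``the image of $\anothermorph$ sweeps across a thin strip in one normal direction and then back across it.'' That is only true when the sign change occurs in the \emph{time} direction, i.e.\ some portion of the curve moves right and later reverses and moves left over the same strip. But $v_n$ can also change sign in the \emph{space} direction: at a single time $t$, one portion of $\anothermorph_t$ moves to its right while another portion simultaneously moves to its left, the two being separated by an interval of instantaneously fixed points. In that configuration the two portions sweep regions on \emph{opposite} sides of the curve, which are locally disjoint, so at that moment nothing is covered twice and your local rectangle $R$ contains no doubly swept strip to excise. (Also note $U^+$ and $U^-$ need not share any boundary point at all; they can be separated by a slab where $v_n \equiv 0$.) As written, your argument never really uses the no-\breakpt{} hypothesis---``geometric slack to insert a straightened arc'' is not where it is needed---and an argument that ignores that hypothesis must be wrong, because an optimal homotopy that pivots around an anchor point, sweeping left on one side of it and right on the other, is exactly an optimal, non-\sensepreserving{} homotopy with no locally doubly covered area.

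The paper closes this case with a genuinely different step that you are missing: take an extremal fixed point $p$ separating the two oppositely moving portions; since $p$ is \emph{not} an anchor point, it must eventually leave the intermediate curves, and once it does, the later intermediate curve $\gamma^{++}$ is forced to cross the earlier curve $\gamma$ at some other point $q \neq p$ (the curve effectively rotates about the fixed set). Only then does a doubly swept ``triangular'' region bounded by $\gamma$, $\gamma^{+}$, $\gamma^{++}$ appear, and the shortcut homotopy that goes directly from $\gamma$ to $\gamma^{++}$ strictly reduces area. To repair your proof you would need to add this temporal, not purely local, analysis of the spatial sign-change case (and, as a minor point, fix the area formula: the Jacobian is $\lvert \partial \anothermorph/\partial s\rvert \cdot \lvert v_n\rvert$, not $\lvert v_s\rvert\cdot\lvert v_n\rvert$ if $v_s$ denotes the tangential component of the time derivative).
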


\section{Minimum Area Homotopies In The Plane}
\label{sec:plane}

In this section, we consider the case where the input consists of two simple polygonal curves in the plane. We develop an algorithm to compute the \similarity{} between $P$ and $Q$ in $O(I^2 \log I + n \log n)$ time, where $n$ is the total complexity of input curves and $I$ is the number of intersections.  Note that $I = \Theta(n^2)$ in the worst case, although of course it may be much smaller in some cases. Although efficient algorithms for comparing curves in the plane exist (such as  the \Frechet{} distance), our planar algorithm will be the fundamental component for comparing curves on general surfaces in the next section. 
It turns out that our approach can easily be extended to measure similarity between simple cycles in the plane; see Appendix \ref{appendix:sec:planarcycles}. 

\subsection{Relations to Winding Numbers}
\label{sec:curvesinplane}

We are given two simple (open) curves in the plane which share common endpoints. 
Previously, we have shown that if an optimal \morph{} does not induce \breakpt{}s, then it is \sensepreserving{}. The implication of this result is manifested by using the \emph{winding number}, defined for a loop $\gamma$ in the plane at a base point $x$. 

Intuitively, imagine starting from a point $y$ on $\gamma$, and connecting $x$ and $y$ by a string. The winding number at $x$ w.r.t. $\gamma$, denoted by $\wn(x; \gamma)$, is an integer measuring how many times this string winds, in a clockwise manner, around $x$ as $y$ traverses $\gamma$. 
More formally, consider an infinite ray $f$ based at $x$ which is generic (so it has a finite set of transversal intersections / crossings with $\gamma$).  
Consider a crossing $\gamma(t)$ between the ray $f$ and $\gamma$. 
This crossing is \emph{positive} if the triangle  $x$, $\gamma(t)$, and $\gamma(t+\epsilon)$ is oriented counterclockwise, 
and is \emph{negative} if oriented clockwise. 
Then $\wn(x; \gamma)$ is the number of positive crossings minus the number of negative crossings with respect to any generic ray from $x$. 

\parpic[r]{\includegraphics[width=2.7cm]{./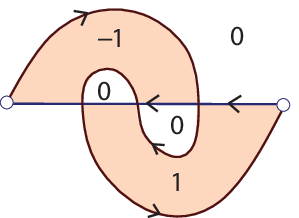} \includegraphics[width=2.5cm]{./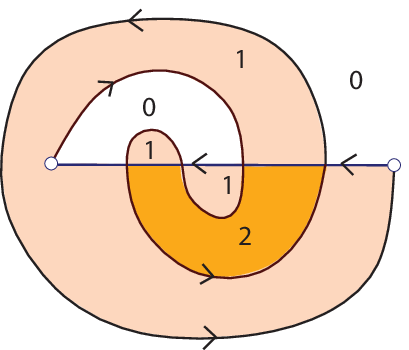}}
We say an oriented curve $\gamma$ has \emph{\allpositive{} winding numbers} if $\wn(x, \gamma)$ is either all non-negative, or all non-positive, for all $x \in \reals^2$. Note that for a curve with \allpositive{} winding numbers, we can always orient the curve appropriately so that $\wn(x, \gamma)$ is all non-negative. Two examples are shown in the figure on the right, where the second example has \allpositive{} winding numbers.  
Let $\arr(\gamma)$ denote the arrangement formed by the curve $\acurve$. All points in the same cell of the arrangement $\arr(\acurve)$ of $\acurve$ have the same winding number, and the winding numbers of two neighboring cells differ by $1$. 
The relation of \allpositive{} winding numbers and \sensepreserving{} homotopies is given below, and the proof can be found in Appendix \ref{appendix:negativewn}. 
\begin{lemma}
If there is a sense-preserving \morph{} $\amorph$ from $P$ to $Q$, then the closed curve $P \concatenate \text{rev}(Q)$ has \allpositive{} winding numbers. 
\label{lem:negativewn}
\end{lemma}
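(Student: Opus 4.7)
The plan is to relate the winding number of $C = P \concatenate \text{rev}(Q)$ at a point $x \notin C$ to a signed count of preimages of $x$ under the homotopy $H \colon [0,1]^2 \to \reals^2$, and then use the sense-preserving hypothesis to pin down the sign of each contribution.

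Assume without loss of generality that $H$ is right sense-preserving (the left case is symmetric). Fix a generic $x \in \reals^2 \setminus C$ that is a regular value of $H$, i.e., one that avoids the (measure-zero) image of the critical set $\{(t,s) : \det J_H(t,s) = 0\}$ and of the finitely many corners of the piecewise-analytic pieces. Going counter-clockwise around $\partial [0,1]^2$ from $(0,0)$, the map $H$ traces in four stages the constant $P(0)$, then $Q$, then the constant $P(1)$, then $\text{rev}(P)$; as a closed loop in the image this is exactly $\text{rev}(C)$. By standard degree theory the degree of $H|_{\partial [0,1]^2} \colon \partial [0,1]^2 \to \reals^2 \setminus \{x\}$ equals on one hand $\wn(x;\text{rev}(C)) = -\wn(x; C)$, and on the other hand the signed count $\sum_{(t,s) \in H^{-1}(x)} \operatorname{sign}(\det J_H(t,s))$.

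The sense-preserving hypothesis is precisely a sign condition on $J_H$: at any non-fixed preimage $(t,s)$ of $x$, right sense-preservation says the velocity $\partial_t H$ lies strictly to the right of the tangent $\partial_s H$ of $\amorph_t$, which unpacking the paper's ``right of $C$ at $p$'' convention is the statement $(\partial_t H) \times (\partial_s H) > 0$, i.e., $\det J_H(t,s) > 0$. Thus every term in the signed sum equals $+1$, forcing $\wn(x; C) \leq 0$; local constancy of the winding number on each cell of $\arr(C)$ extends this bound from generic $x$ to every $x \notin C$. The left sense-preserving case is symmetric and yields all winding numbers $\geq 0$, so in either case $C$ has \allpositive{} winding numbers. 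The main obstacle I foresee is the bookkeeping around non-smooth features of $H$: the fixed-point set (where $\det J_H = 0$), the piecewise-analytic corners, and any other degenerate loci. Because these form a set of measure zero whose image misses our generic $x$, the degree calculation can be run on the smooth patches of $H$ and pulled back to the entire cell containing $x$ by continuity.
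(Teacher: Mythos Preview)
Your argument is correct and takes a genuinely different route from the paper's own proof. The paper argues dynamically: it tracks the function $F(t,x) = \wn(x;\, \amorph_t \concatenate \text{rev}(Q))$ as $t$ runs from $0$ to $1$, observes that $F(0,x) = \wn(x;C)$ and $F(1,x)=0$, and uses sense-preservation to show that each crossing of $x$ by an intermediate curve changes $F(t,x)$ with a fixed sign, so $F(\cdot,x)$ is monotone and hence $\wn(x;C)$ has the claimed sign. Your argument is static: you read $\wn(x;C)$ off in one shot as a Brouwer degree of $H$ over $x$, and use sense-preservation as a Jacobian sign condition to force every local degree contribution to be $+1$ (or $-1$). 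The two arguments are really dual views of the same phenomenon---your signed preimage count is exactly the net number of times an intermediate curve sweeps across $x$---but the packaging differs. The paper's version is more elementary (no appeal to degree theory, no regularity bookkeeping) and makes the geometric picture of ``monotone sweeping'' explicit; your version is cleaner once one has the degree machinery in hand, and it makes transparent why only generic $x$ need be considered, with the result propagated to all of $\reals^2\setminus C$ by local constancy of the winding number.
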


\begin{proof}
Without loss of generality, assume that the map $\amorph$ is left \sensepreserving{}, always deforming an intermediate curve to its left. 
Consider the time-varying function $F: [0,1] \times \reals^2 \rightarrow \Z$, where $F(t, x) = \wn(x; \amorph_t)$ is the winding number at $x \in \reals^2$ with respect to the curve parameterized by $\amorph_t$. 
Obviously, $F(0, x) = \wn(x; P \concatenate \text{rev}(Q))$, and $F(1, x) = 0$. 
During the deformation, $F(t, x)$ changes by either $1$ or $-1$ whenever the intermediate curve sweep over it. 
Since the \morph{} is left \sensepreserving{}, when an intermediate curve sweeps $x$, $x$ always moves from the left side of the intermediate curve to its right side. Hence the winding number $x$ decreases monotonically. Since in the end, the winding number at each point is zero, $\wn(x; P \concatenate \text{rev}(Q)) = F(0,x) \ge 0$. 

If the map $\amorph$ is left \sensepreserving{}, then a symmetric argument shows that $\wn(x; P \concatenate \text{rev}(Q)) \le 0$ for all $x \in \reals^2$. 
\end{proof}

Next, we describe two results to connect the above lemma to the computation of optimal \morph{}. First, we define the \emph{\totalW{} $\totalwn(\gamma)$} of a curve $\gamma$ as 
$$\totalwn(\gamma) = \int_{\reals^2} \wn(x; \gamma) d\nu(x),$$ 
where $d\nu(x)$ is the area form\footnote{Note that this allows us to use any Riemannian metric on the plane (including the standard Euclidean metric). This will be necessary later when we use the same algorithm for curves in a universal covering space.}.  
The following observation is straightforward.

\begin{obs}
For any $P$ and $Q$ in the plane, 
$$\simC(P, Q) \ge | \totalwn(P \concatenate \text{rev}(Q)) |.$$
\label{obs:lowerbound}
\end{obs}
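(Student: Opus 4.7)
The plan is to bound $\mathrm{Area}(H)$ from below pointwise (in the image), by comparing the multiplicity with which $H$ covers each point to the winding number of the closed curve $C = P \concatenate \mathrm{rev}(Q)$ around that point, and then integrating.

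First I would rewrite $\mathrm{Area}(H)$ via the standard change-of-variables/area formula:
\[
\mathrm{Area}(H) \;=\; \int_{[0,1]^2} \left| \tfrac{\partial H}{\partial s} \times \tfrac{\partial H}{\partial t} \right| ds\, dt \;=\; \int_{\reals^2} m_H(x)\, d\nu(x),
\]
where $m_H(x) = \#H^{-1}(x)$ is the multiplicity with which $H$ covers $x$. For almost every $x \in \reals^2 \setminus C$, this count is finite and equals the number of times an intermediate curve $\amorph_t$ sweeps transversally across $x$ as $t$ goes from $0$ to $1$. So it suffices to show $m_H(x) \geq |\wn(x; C)|$ for a.e. such $x$.

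Next, for a fixed generic $x \notin C$, I would consider the one-parameter family of closed curves $C_t := \amorph_t \concatenate \mathrm{rev}(Q)$ and the integer-valued function
\[
\varphi(t) \;:=\; \wn(x;\, C_t).
\]
At the endpoints, $\varphi(0) = \wn(x; P \concatenate \mathrm{rev}(Q)) = \wn(x; C)$ and $\varphi(1) = \wn(x; Q \concatenate \mathrm{rev}(Q)) = 0$, since $Q \concatenate \mathrm{rev}(Q)$ is a trivial loop. As $t$ varies, $\varphi(t)$ is locally constant except at the finitely many times when $\amorph_t$ transversally sweeps across $x$; at each such event $\varphi$ jumps by exactly $\pm 1$. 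Hence the total number of sweeps is at least $|\varphi(0) - \varphi(1)| = |\wn(x; C)|$, giving $m_H(x) \geq |\wn(x; C)|$.

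Integrating and applying the ordinary triangle inequality for integrals then yields
\[
\mathrm{Area}(H) \;\geq\; \int_{\reals^2} |\wn(x; C)|\, d\nu(x) \;\geq\; \left| \int_{\reals^2} \wn(x; C)\, d\nu(x) \right| \;=\; |\totalwn(C)|.
\]
Taking the infimum over all admissible homotopies $H$ gives $\simC(P,Q) \geq |\totalwn(P \concatenate \mathrm{rev}(Q))|$.

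The main obstacle I anticipate is the rigorous justification that $m_H(x)$ actually equals the sweep count for a.e. $x$, and that the one-parameter winding-number function $\varphi(t)$ is well-defined and changes only by $\pm 1$ at transversal sweeps. Both rely on the assumption (already made in the paper) that $H$ is piecewise differentiable and each $\amorph_t$ is regular, so that Sard's theorem applies to pick a full-measure set of generic $x$'s for which $H^{-1}(x)$ is a finite collection of transverse crossings, and the remaining non-generic $x$'s form a null set that does not affect the integrals.
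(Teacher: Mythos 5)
Your proposal is correct and follows essentially the same route as the paper's proof: both rewrite $\mathrm{Area}(H)$ as the integral of the covering multiplicity over the plane, bound that multiplicity pointwise below by $|\wn(x; P\concatenate \text{rev}(Q))|$ by tracking the $\pm 1$ jumps of the winding number at $x$ as the intermediate curve sweeps past it (from $\wn(x;C)$ at $t=0$ down to $0$ at $t=1$), and then integrate and take the infimum over homotopies. Your added care about genericity via Sard's theorem is a finer justification of a step the paper treats informally, but it is not a different argument.
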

\begin{proof}
Take any regular \morph{} $H$ from $P$ to $Q$. 
The area of a regular homotopy $H$ in our setting can be reformulated as an integral on the image domain as 
$$Area(H) = \int_{\reals^2} deg_H(x) d \nu(x),$$ 
where $deg_H(x)$ is defined as the number of connected components in the pre-image of $x$ under $H$. In other words, $deg_H(x)$ is the number of times that any intermediate curve $H_t$ sweeps through $x$. 
Now consider the function $F: [0,1] \rightarrow \reals$ defined as $F(t) = \totalwn(\optmorph_t \concatenate Q)$. Obviously, $F(0) = \totalwn(P\concatenate Q)$, $F(1) = 0$, and $F$ is a continuous function. Furthermore, each time the winding number at a point $x$ changes by 1 for some $t \in [0,1]$, it means that some intermediate curve $\amorph(t)$ sweeps through it. 
Hence $|\wn(x)|$ is a lower bound for $deg_H(x)$. We thus have that 
$$|\totalwn(P \concatenate \text{rev}(Q))| \le \int_{\reals^2} |\wn(x)| d\nu(x) \le \int_{\reals^2} deg_H(x) d\nu(x)$$ 
for any regular homotopy $H$, implying that 
$$|\totalwn(P\concatenate Q)| \le \inf_{H} Area(H) = \simC(P,Q).$$
\end{proof}
\begin{figure*}[tbp]
\begin{center}
\begin{tabular}{ccccccc}
\includegraphics[height=3cm]{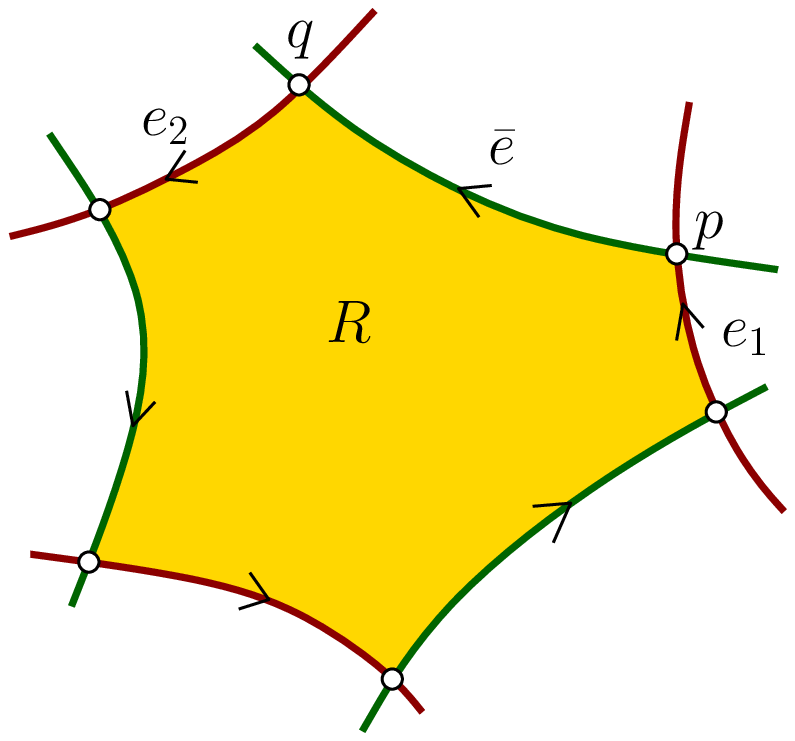} & & 
\includegraphics[height=3cm]{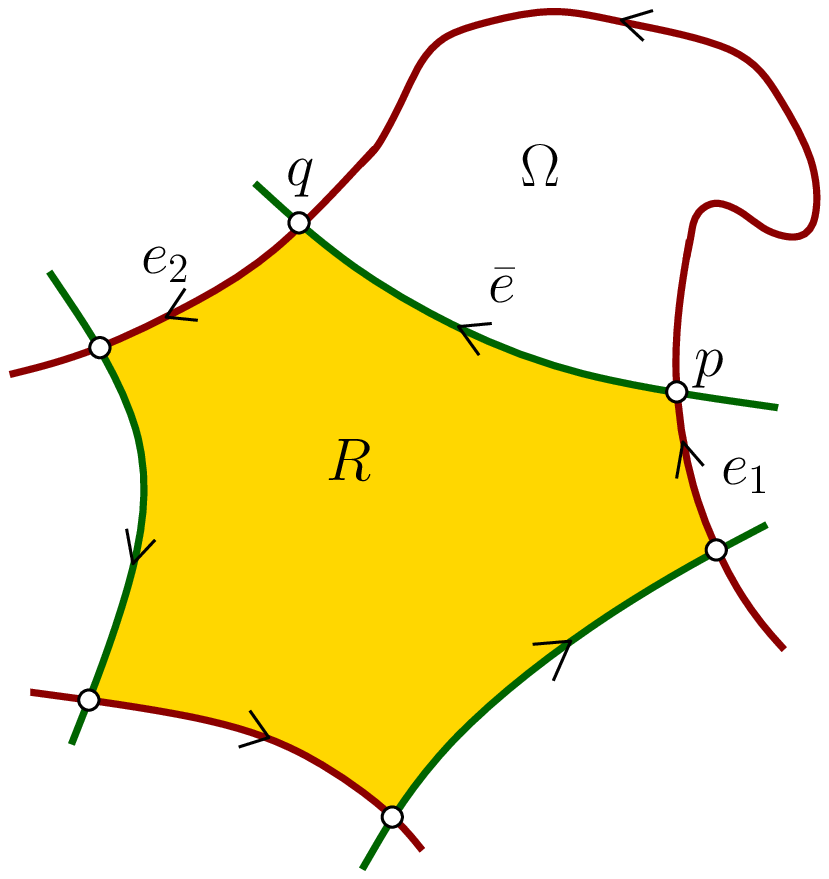} & &
\includegraphics[height=3cm]{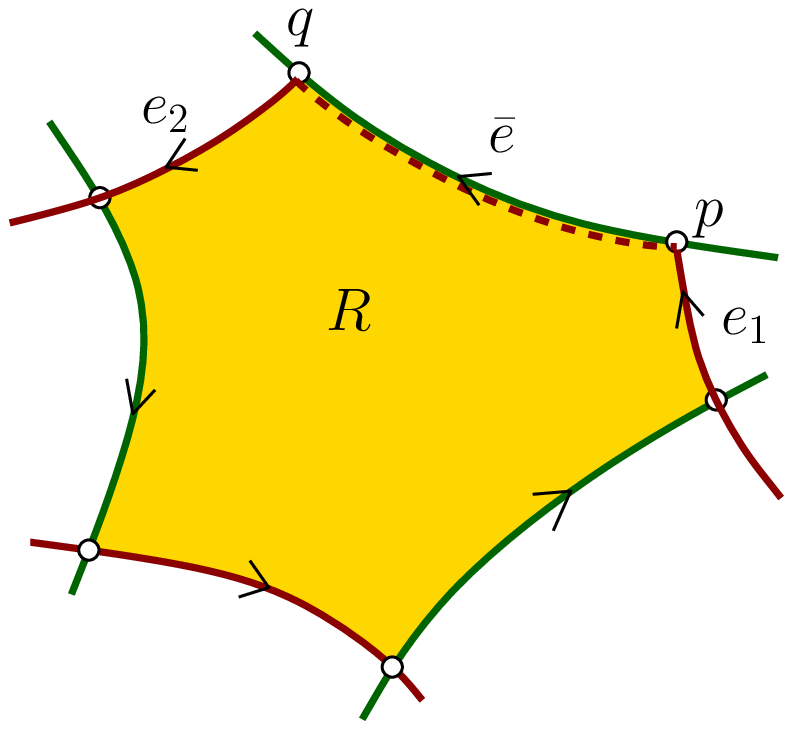} & &
\includegraphics[height=3cm]{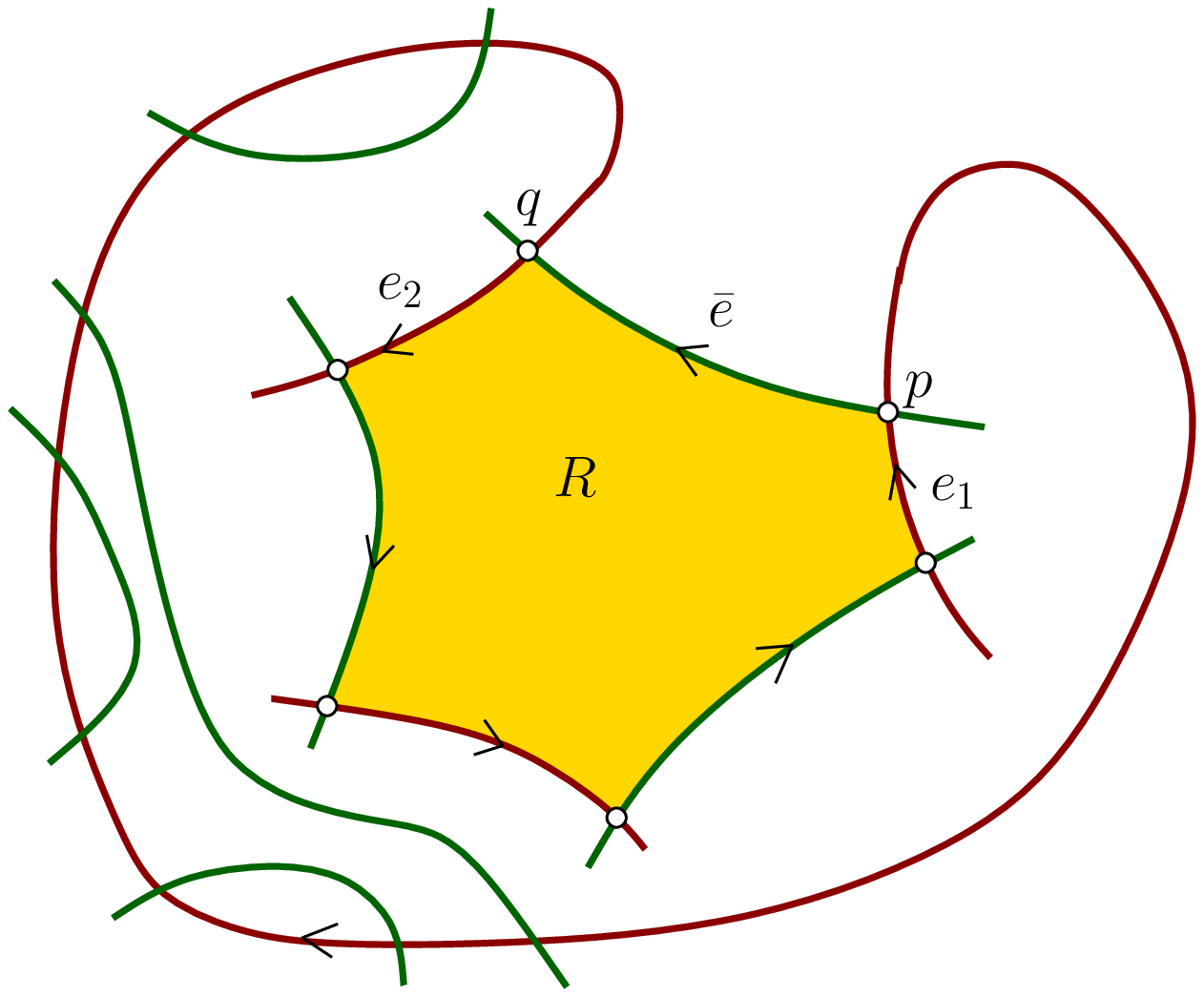} \\ 
(a) & & (b) & & (c) & & (d) 
\end{tabular}
\end{center}
\vspace*{-0.15in}
\caption{{\small (a) The cell $\acell$ with highest positive winding number. Its boundary consists of alternating $P$-arcs (red) and $Q$-arcs (green). The two cases of relations between $P[p,q]$ and $\acell$ are shown in (b) and (d), respectively. For case (b), we can deform $P$ to sweep through $\Omega$ as shown in (c), and reduce the number of intersections by $2$. 
}}
\label{fig:positivewn}
\vspace*{-0.1in}
\end{figure*}
\begin{lemma}
Given $P$ and $Q$, if ~$\acurve = P \concatenate \text{rev}(Q)$ has \allpositive{} winding numbers, then $\simC(P,Q) = | \totalwn(\acurve) |$. 
\label{lem:positivewn}
\end{lemma}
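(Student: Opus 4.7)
The plan is to prove the reverse inequality $\simC(P,Q) \le |\totalwn(\gamma)|$; combined with the preceding observation this gives equality. I would argue by strong induction on the number of interior transverse intersections $I = |P \cap Q|$ (excluding the two shared endpoints), constructing an explicit sense-preserving homotopy of the stated area.

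For the base case $I = 0$, the loop $\gamma = P \concatenate \text{rev}(Q)$ is simple, so by the Jordan curve theorem it bounds a topological disk $D$. Consistency of the winding number forces $\wn(\,\cdot\,;\gamma) \equiv \pm 1$ on $\interior(D)$ and $\equiv 0$ outside, giving $|\totalwn(\gamma)| = \area(D)$. An explicit regular homotopy that parameterizes $D$ once (obtained from any diffeomorphism $D \cong [0,1]^2$ mapping the two halves of $\bnd D$ to $P$ and $Q$) realizes area exactly $\area(D)$.

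For the inductive step, orient $\gamma$ so $\wn(\,\cdot\,;\gamma) \ge 0$ everywhere, and pick a face $R$ of $\arr(\gamma)$ attaining the maximum value $w = \max \wn$; if $w = 0$ then already $\totalwn(\gamma) = 0$ and a degenerate sweep suffices, so assume $w \ge 1$. I claim $\bnd R$ alternates between $P$-arcs and $Q$-arcs at its vertices: if two incident boundary arcs at some vertex both belonged to (say) $P$, then $R$ would lie locally on the positive side of $P$ while the $Q$-arc crossing through the vertex would put a neighboring cell at winding number $w+1$, contradicting maximality. Pick one $P$-arc $P[p,q]$ on $\bnd R$ (with $p,q$ consecutive in $P \cap Q$ along $P$) and let $\Omega \subseteq R$ be the subregion cobounded by $P[p,q]$ and the adjacent $Q$-arc on $\bnd R$. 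Following the two configurations in Figure~\ref{fig:positivewn}(b,d), I construct a local homotopy $H_1$ from $P$ to a new curve $P'$, supported in a neighborhood of $\Omega$, that pushes $P[p,q]$ across $\Omega$ and just past the adjacent $Q$-arc. This $H_1$ is regular, sweeps area exactly $\area(\Omega)$, and the resulting $P'$ is simple with $|P' \cap Q| \le I - 2$. A local computation on generic rays shows that the winding-number function of $\gamma' = P' \concatenate \text{rev}(Q)$ equals $\wn(\,\cdot\,;\gamma)$ outside $\Omega$ and equals $\wn(\,\cdot\,;\gamma) - 1$ on $\Omega$; hence $\totalwn(\gamma') = \totalwn(\gamma) - \area(\Omega)$, and $\wn(\,\cdot\,;\gamma') \ge 0$ still holds because the drop occurs only where the winding number was at least $1$. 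Thus $\gamma'$ has consistent winding numbers, the inductive hypothesis yields a homotopy $H_2$ from $P'$ to $Q$ of area at most $|\totalwn(\gamma')|$, and concatenating $H_1$ with $H_2$ produces a homotopy from $P$ to $Q$ of area at most $\area(\Omega) + |\totalwn(\gamma')| = |\totalwn(\gamma)|$.

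The main obstacle is the geometric verification around the maximum face $R$: (i) proving the $P/Q$-alternation along $\bnd R$ while handling the fact that $P$ and $Q$ are simple (so boundary arcs are not self-loops); (ii) realizing the ``push across $\Omega$'' of cases (b) and (d) as a single regular, kink-free homotopy whose output $P'$ is a simple curve with exactly the claimed intersection drop (case (d) in particular needs a preliminary local deformation to isolate $\Omega$ before the sweep); and (iii) certifying that the winding-number function drops by exactly $1$ on $\Omega$ and nowhere else. Once these local facts are in hand, the area bookkeeping in the induction is immediate and yields the matching upper bound, completing the proof.
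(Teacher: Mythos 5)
Your overall strategy --- strong induction on the number of intersections, proving the upper bound $\simC(P,Q)\le|\totalwn(\acurve)|$ by an explicit sweep near a cell of maximal winding number and invoking Observation \ref{obs:lowerbound} for the matching lower bound --- is the same as the paper's, and your base case and the case where the maximal cell $\acell$ is a bigon (one $P$-arc and one $Q$-arc) agree with it. The gap is in the inductive step when $\acell$ has more than one $P$-arc, which is where the real work lies. The region you propose to sweep is not well defined: a boundary $P$-arc of $\acell$ and the adjacent $Q$-arc share only a single vertex, so they do not cobound a subregion $\Omega\subseteq\acell$; moreover, pushing a boundary $P$-arc across (part of) $\acell$ can collide with the \emph{other} $P$-arcs on $\bnd\acell$, so simplicity of $P'$ is not automatic. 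In the paper's argument $\Omega$ is a different object: with $e_1$ the boundary $P$-arc of smallest index ending at $p$ and $e_2$ the next boundary $P$-arc starting at $q$, one takes the subcurve $P[p,q]$ (which in general is \emph{not} on $\bnd\acell$ and may cross $Q$ many times) together with the boundary $Q$-arc $\bar e$; these bound a simple polygon $\Omega$ disjoint from the interior of $\acell$.

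The points you defer as obstacles (ii)--(iii) are precisely the crux, and as stated they can fail. In the configuration of Figure \ref{fig:positivewn}(b) sweeping $P[p,q]$ across $\Omega$ is legitimate: only $Q$ can enter $\Omega$, so every cell of $\arr(\acurve)$ inside $\Omega$ has an edge on $P[p,q]$, and since $\Omega$ lies to the right of $P[p,q]$ all these cells have strictly positive winding number; hence the drop by $1$ on $\Omega$ preserves consistency and $\totalwn$ decreases by exactly $\area(\Omega)$. But in the configuration of Figure \ref{fig:positivewn}(d) this is exactly what you cannot conclude: positivity of the winding number throughout $\Omega$ is not available there, and a wrong-sided sweep would also break the identity ``area swept equals drop in $\totalwn$.'' The paper treats (d) by a different move: it shows the complement of $\Omega$ must meet $Q$ (otherwise a cell of winding number $-1$ would appear next to $P[p,q]$), deduces the existence of an innermost bigon $R'$ bounded by a sub-arc $P[a,b]\subseteq P[p,q]$ and a single $Q$-arc, proves the winding number is strictly positive on $R'$, and sweeps only $R'$, which removes two intersections while keeping \allpositive{} winding numbers. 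Your ``preliminary local deformation to isolate $\Omega$'' is not a substitute for this argument, so as proposed the inductive step is incomplete.
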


\myproofbegin 
We  provide a sketch of the proof here to illustrate the main idea; see \cite{fullver} for the full proof. 
We prove the claim by induction on the number of intersections between $P$ and $Q$. 
The base case is when there is no intersection between $P$ and $Q$. 
In this case, $\acurve$ is a Jordan curve which decomposes $\reals^2$ into two regions, one inside $\acurve$ and one outside. The optimal homotopy area $\simC(P,Q)$ in this case is the area of the bounded cell. All points in the bounded cell have winding number $1$ (or $-1$) and the claim follows.  

Now assume that the claim holds for cases with at most $k-1$ intersections. We next prove it for the case with $k$ intersections. Let an \emph{$X$-arc} denote a subcurve of curve $X$. 
Consider the arrangement $\arr(\acurve)$ formed by $\acurve = P \concatenate \text{rev}(Q)$. Since $P$ and $Q$ are simple, every cell in this arrangement has boundary edges alternating between $P$-arcs and $Q$-arcs. Assume without loss of generality that $\acurve$ has all non-negative winding numbers. 
Consider a cell $\acell \in \arr(\acurve)$ with largest (and thus positive) winding number. Since its winding number is greater than all its neighbors, it is necessary that all boundary arcs are oriented consistently as shown in Figure \ref{fig:positivewn} (a), where the cell $\acell$ (shaded region) lies to the left of its boundary arcs. 

\parpic[r]{\includegraphics[width=3cm]{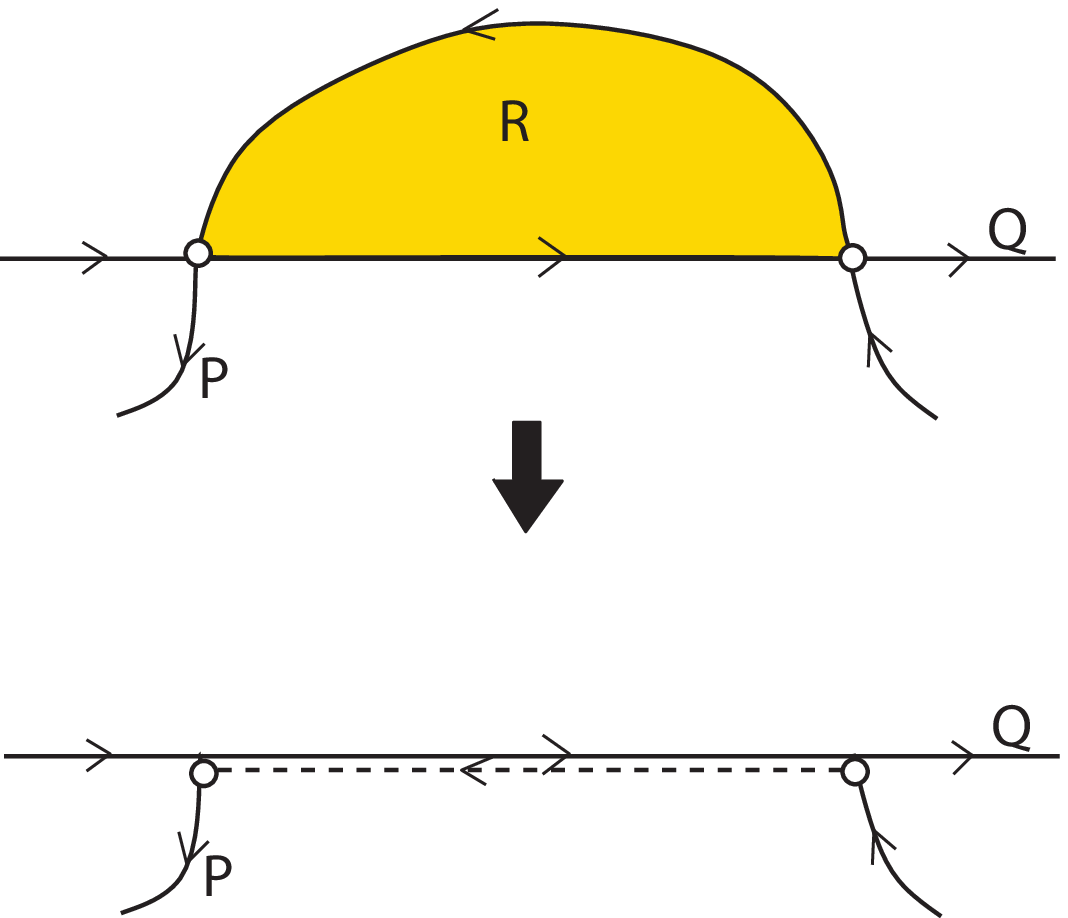}}
If $\acell$ has only two boundary arcs, $e$ from $P$ and $e'$ from $Q$, respectively, then we can morph $P$ to another simple curve $P'$ by deforming $e$ through $\acell$ to $rev(e')$ as illustrated on the right. 
The area swept by this deformation is exactly the area of cell $\acell$. 
Furthermore, after the deformation, every point $x \in \acell$ decreases their winding number by $1$, and no other point changes its winding number. 
Since any point in this cell initially has strictly positive winding number, the resulting curve $\acurve' = P' \concatenate Q$ still has all non-negative winding numbers. The number of intersections between $P'$ and $Q$ is $k - 2$. By induction hypothesis, $\simC(P', Q) = \totalwn(\acurve')$. Since $\totalwn(\acurve) - \totalwn(\acurve') = \area(\acell)$, we have that $\totalwn(\acurve) = \simC(P', Q) + \area(\acell)$. It then follows from Observation \ref{obs:lowerbound} and the fact $\simC(P, Q) \le \simC(P', Q) + \area(\acell)$ that $\simC(P, Q) = \totalwn(\acurve)$. 

Otherwise, the cell $\acell$ has more than one $P$-arc. Take the $P$-arc $e_1$ with the smallest index along $P$, and let $p$ be its ending point. 
Let $e_2$ be the next $P$-arc along the boundary of $\acell$, and $q$ its starting point, and $Q[q,p]$ the $Q$-arc between $e_1$ and $e_2$, denoted by $\bar{e}$ in Figure \ref{fig:positivewn}. 
$P[p,q]$ and $-Q[p,q]$ bound a simple polygon, which we denote by $\Omega$. Since $\Omega$ does not intersect $R$, 
either $\Omega$ is on the opposite side of the $Q$-arc $\bar{e}$ from the interior of $\acell$ (Figure \ref{fig:positivewn} (b)), 
or they are on the same side (Figure \ref{fig:positivewn} (d)). 
It turns out that in both cases, we can deform $P$ to another simple curve $P'$ such that (i) the number of intersections is reduced by $2$, and (ii) $P' \concatenate Q$ still has consistent winding numbers. For example, in the case of Figure \ref{fig:positivewn} (b), $P$ is then deformed to sweep the region $\Omega$ as shown in Figure \ref{fig:positivewn} (c). By applying the induction hypothesis to $P' \concatenate Q$, we are able to obtain that $\simC(P, Q) = \totalwn(\acurve)$. 
 The details are in Appendix \ref{appendix:positivewn}. 
\myproofend

\subsection{The Algorithm} 
\label{subsec:algorithm}

Lemma \ref{lem:sensepreserving} and \ref{lem:negativewn} imply that if the closed curve $P \concatenate \text{rev}(Q)$ produces both positive and negative winding numbers, then any optimal \morph{{} from $P$ to $Q$ must have at least one \breakpt{}. On the other hand, if it has \allpositive{} winding numbers, then by Lemma \ref{lem:positivewn} we can compute the optimal \energy{} to deform them by simply computing the total winding number. This leads to a simple dynamic-programming (DP) approach to compute $\simC(P,Q)$. 

Specifically, let $\Ipt_0, \Ipt_1, \ldots, \Ipt_I$ denote the intersection points between $P$ and $Q$, ordered by their indices along $P$, where $\Ipt_0$ and $\Ipt_I$ are the beginning and ending points of $P$ and $Q$, respectively. 
Let $T(i)$ be the \energy{} of the optimal \morph{} between $P[\zerobf, \Ipt_i]$ and $Q[\zerobf, \Ipt_i]$, and $C[i,j]$ the closed curve formed by $P[\Ipt_i, \Ipt_j] \concatenate Q[\Ipt_j, \Ipt_i]$. We say that a pair of indices $(i,j)$ is \emph{valid} if (1) $\Ipt_i$ and $\Ipt_j$ have the same order along $P$ and along $Q$; and (2) the closed curve $C[i,j]$ has \allpositive winding numbers. We have the following recursion: 
\begin{eqnarray*} 
T(i) = 
\begin{cases} 
  \totalwn(C[0, i]),  & \mbox{If }C[0, i] \mbox{ has \allpositive{} winding numbers;} \\
  \min_{j < i, \mbox{~$(j, i)$ is valid}} ~\{ ~\totalwn(C[j, i]) + T(j) ~\}, & \mbox{ Otherwise.} 
\end{cases}
\end{eqnarray*}

\subsection{Time Complexity Analysis}
\label{subsec:time}

The main components of the DP framework described above are (i) to compute $\totalwn(C[i,j])$ for all pairs of $i,j$s, and (ii) to check whether each pair $(i,j)$ is valid or not. These can be done in $O(I^2n)$ total time in a straightforward manner. We now show how to compute them in $O(I^2\log I)$ time after $O(I\log I + n\log n)$ pre-processing time. 
Specifically, we describe how to compute such information in $O(I\log I)$ time for all $C[r,i]$s for a fixed $r \in [1, I]$ and all indices $i > r$. 

To simplify the description of the algorithm, we extend $Q$ on both sides until infinity, and obtain $\extendQ$. Now collect all intersection points between $P$ and $\extendQ$, $\{ \extIpt_1, \ldots, \extIpt_I \}$, which is a super-set of previous intersection points, and sort them by their order along the curve $P$. 
The algorithm can be made to work with $Q$ directly, but using $\extendQ$ makes the intuition behind our algorithm, as well as its description, much more clear. 

Note that $\extendQ$ divides the plane into two half-planes. For illustration purpose, we will draw $\extendQ$ as a horizontal line, and use the upper and lower half-planes to refer to these two sides of $\extendQ$. Another way to see that regarding $\extendQ$ as a horizontal line does not cause any loss of generality is that one can always find a homeomorphism from $\reals^2 \rightarrow \reals^2$ such that the image of $\extendQ$ is a horizontal line under this homeomorphism. 

Now for a fixed integer $r$, we traverse $P$ starting from $\extIpt_r$. We aim to maintain appropriate data structures so that each time we pass through an intersection point $\extIpt_i$ with $\extendQ$, we can, in $O(\log I)$ time, (1) check whether $(r, i)$ is valid, and (2) obtain total winding number for $C[r,i]$. 

\begin{figure}[htbp]
\centerline{\includegraphics[width=5.5cm]{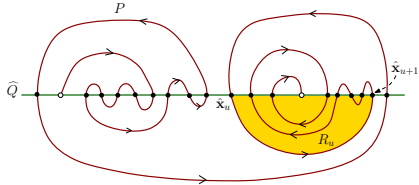}}
\caption{Illustration of the regions $R_u$s.}
\label{fig:Rus1}
\end{figure}
\myparagraph{Total winding numbers.}
We first explain how to maintain the total winding number for the closed curve $C[r,i] = P[\extIpt_r, \extIpt_i] \concatenate Q[\extIpt_i, \extIpt_r]$ as $i$ increases. 
Assume $i$ changes from $u$ to $u+1$. Since $\extIpt_u$ and $\extIpt_{u+1}$ are two consecutive intersection points along $P$, the arcs $P[\extIpt_u, \extIpt_{u+1}]$ and $\extendQ[\extIpt_u, \extIpt_{u+1}]$ form a simple closed polygon which we denote by $R_u$ (shaded region Figure \ref{fig:Rus1}). Comparing the arrangement $\arr(C[r,u+1])$ with $\arr(C[r, u])$, regardless of where $r$ is, only points within $R_u$ will change their winding number, either all by $+1$ or all by $-1$, depending on whether $R_u$ is to the right side or the left side of the $P$-arc $P[\extIpt_u, \extIpt_{u+1}]$, respectively. The winding numbers for points outside $R_u$ are not affected. 
Hence the change in the total winding number is simply $\alpha_u \area(R_u)$, where $\alpha_u$ is either $+1$ or $-1$. See Figure \ref{fig:Rus1}, where all points in $R_u$ will decrease their winding number by $1$ as we move from $C[r,u]$ to $C[r,u+1]$. 

We can pre-compute the area of $R_u$'s for all $u$ in $O(n\log n+I\log I)$ time, by observing that the set of $R_u$s satisfy the parentheses property: Namely, either $R_u$ and $R_v$ are disjoint in their interior, or one contains the other. 

\begin{figure}[h]
\centerline{\includegraphics[width=4.5cm]{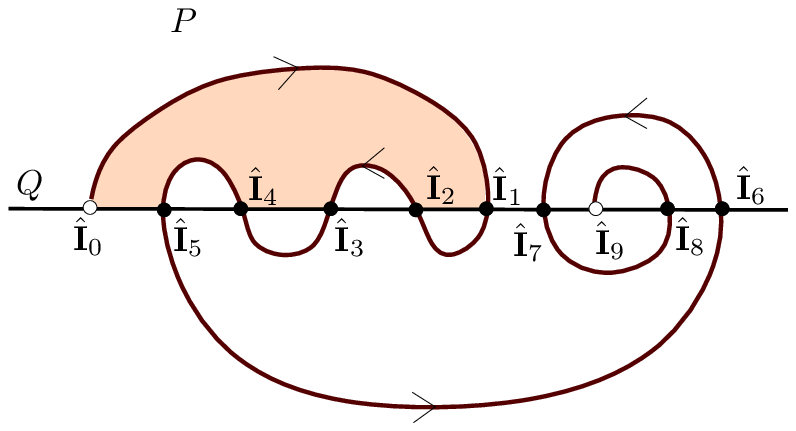} \hspace*{0.25in} \includegraphics[width=5cm]{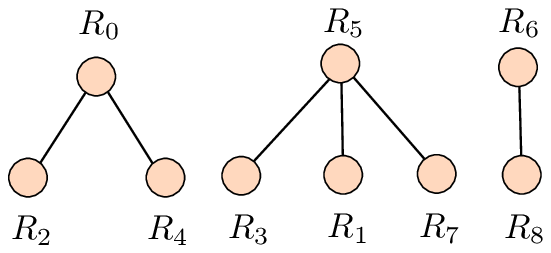}}
\caption{The containment relations of all $R_u$ regions can be represented as a forest structure on the right. }
\label{fig:containmenttree}
\end{figure}
Specifically, first, we compute the arrangement of $\arr(P+\extendQ)$ and the area of all cells in it in $O(n\log n +I)$ time.  
Each $R_u$ is the region bounded between a $P$-arc $P[\extIpt_i, \extIpt_{i+1}]$ and a corresponding $\extendQ$-segment $\extendQ[\extIpt_i, \extIpt_{i+1}]$. 
Since no two $P$-arcs intersect, the containment relationship between such $P$-arcs satisfies parentheses property. In particular, we can use a collection of trees to represent the containment relation among all regions $R_u$s. See Figure \ref{fig:containmenttree} for an illustration. The difference between the region represented at a parent node and the union of regions represented by all its children is a cell in $\arr(P+Q)$. For example, the shaded cell in the right figure is the difference between $R_0$ and its children $R_2$ and $R_4$. 
We can thus compute the area of all $R_u$s by a bottom-up traversal of these trees. Computing these trees take $O(I \log I)$ time by first sorting all intersection points with respect to their order along $\extendQ$. Traversing these trees to compute all $R_u$s takes $O(I)$ time. Putting everything together, we need $O(n\log n + I\log I)$ time. 

With the area of all $R_u$s known, updating the total winding number from $C[r,u]$ to $C[r, u+1]$ takes only constant time. 

\myparagraph{Checking the validity of $(r,i)$s. }
To check whether $(r, i)$ is valid or not, we need to check whether all cells in the arrangement $\arr(C[r,i])$ have consistent winding numbers. 
First observe that for any $r$ and $i$, $\arr(P + \extendQ)$ is a refinement of the arrangement $\arr(C[r,i])$. That is, a cell in $\arr(P + \extendQ)$ is always contained within some cell in $\arr(C[r,i])$. Hence all points within the same cell of $\arr(P+\extendQ)$ always have the same winding number with respect to any $C[r,i])$, and we simply need one point from each cell in $\arr(P+\extendQ)$ to maintain the winding number for all cells in $\arr(C[r,i])$, for any $r$ and $i$. 
We now describe how to maintain the winding number for cells of $\arr(P+\extendQ)$ (thus for $\arr(C[r,u])$s) as we pass each $u > r$, so that we can check whether $C[r,u]$ has \allpositive{} winding numbers or not efficiently. 

\parpic[r]{\includegraphics[width=4cm]{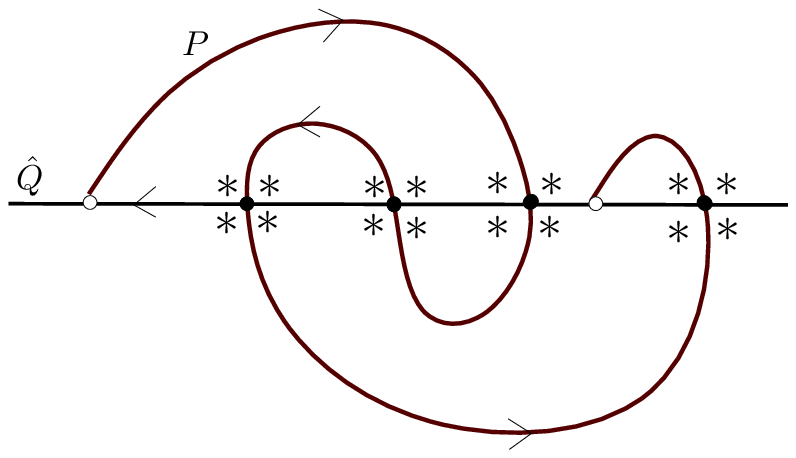}}
\noindent To this end, take four points around each intersection point $\extIpt_i$ of $P$ and $\extendQ$ (shown as stars in the right figure). The collection of such \emph{\representative{} points} hit all cells in $\arr(P+\extendQ)$. (It does not matter whether there may be more than one point taken from a cell of $\arr(P + \extendQ)$.) Hence $\arr(C[r,i])$ has consistent winding number if and only if all these \representative{} points have consistent winding numbers. 
Next, we build a data structure to maintain the winding numbers for these points as $i$ increases. 
Specifically, let $U$ be the set of \representative{}s that are to the right of $\extendQ$, which are the stars above $\extendQ$ in the right figure. (Those to the left of $\extendQ$ will be handled in a symmetric manner). Each point has a key associated with it which is its index along $\extendQ$. We build a standard balanced $1$-D range tree on $U$ based on such keys, where each leaf $f$ stores a point from $U$. Every internal node $v$ is associated with an interval $[l_v, r_v]$, where $l_v$ and $r_v$ are the smallest and largest keys stored in the subtree rooted at $v$. In other words, all \representative{}s with an index along $\extendQ$ within $[l_v, r_v]$ are stored in the subtree rooted at $v$. 
At every node $v$, interior or not, we also store a value $\addwn_v$. To compute the winding number for the \representative{} point $p_f$ stored at a leaf node $f$, we identify the path $\{ v_0, v_1, \ldots, v_a = f \}$ from the root $v_0$ to $f$. The winding number for $p_f$ is simply $\sum_{i=0}^a \addwn_{v_i}$. 
Finally, each internal node $v$ also stores the maximum and minimum winding numbers associated with all leaves in its subtree. At the beginning, all winding numbers are zero. The size of this tree is $O(I)$ with height $O(\log I)$, and can be built in $O(I\log I)$ time once the arrangement $\arr(P + \extendQ)$ is known. 

Let $\qin_i$ denote the index of point $\extIpt_i$ along $\extendQ$ (or can be considered as the $x$-coordinate of $\extIpt_i$). 
As we move from $C[r,u]$ to $C[r,u+1]$, cells of $\arr(P+\extendQ)$ contained in $R_u$ should either all increase or all decrease their winding number by $1$. Note that \representative{}s of these cells are simply those contained in the interval $[\qin_u, \qin_{u+1}]$ (or $[\qin_{u+1},\qin_u]$ if $\qin_{u+1} < \qin_u$). Hence updating the winding number is similar to an interval query of $[\qin_u, \qin_{u+1}]$, and the $O(\log I)$ number of nodes in the canonical decomposition of $[\qin_u, \qin_{u+1}]$ update their $\addwn_v$ values by either $+1$ or $-1$ depending on the sideness of $R_u$ with respect to the arc $P[\extIpt_u, \extIpt_{u+1}]$. 
The minimum and maximum winding numbers can also be updated $O(1)$ time per visited node. The entire process visits $O(\log I)$ nodes, and thus takes $O(\log I)$ time as $i$ increases from $u$ to $u+1$. 
To see whether $C[r, u+1]$ has \allpositive{} winding numbers or not, we only need to check the minimum and maximum winding numbers stored at the root of the tree, denoted by $w_{min}$ and $w_{max}$, respectively. If $w_{min} \times w_{max}$ equals to zero, then all winding numbers w.r.t. $C[r, u+1]$ are either all non-negative or all non-positive. Otherwise, $(r,u+1)$ is not valid. 

Repeat the above process for every $r \in [1, I]$. Overall, after $O( (n+I)\log n)$  pre-processing, we can check whether $(r, i)$ is valid or not and compute $\totalwn(C[r,i])$ for all $r \in [1, I]$ and all $i > r$ in $O(I^2\log I)$ time. 
Putting everything together, we have the following result. 

\begin{theorem}
Given two simple polygonal chains $P$ and $Q$ (with the same endpoints) in the plane of $n$ total complexity, and with $I$ intersection points between them, we can compute the optimal \morph{} and its area in $O(I^2 \log I + n\log n)$ time and $O(I^2 + n)$ space. 
\label{thm:plane}
\end{theorem}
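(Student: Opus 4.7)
The plan is to assemble the dynamic programming recurrence from Section~\ref{subsec:algorithm} together with the data-structural pieces from Section~\ref{subsec:time} and bound the work carefully. Correctness of the DP follows directly from the structural results already established: Observation~\ref{obs:breakptorder} and Lemma~\ref{lem:sensepreserving} show that any optimal homotopy decomposes at its \breakpt{}s into \sensepreserving{} sub-homotopies, Lemma~\ref{lem:negativewn} forces each such piece to lie between sub-arcs whose concatenation has \allpositive{} winding numbers, and Lemma~\ref{lem:positivewn} gives the closed-form cost $|\totalwn(C[i,j])|$ for each piece. Since the \breakpt{}s must be intersection points of $P$ and $Q$ and appear in the same order on both curves, the optimal decomposition is captured exactly by the recurrence defining $T(i)$, so $T(I) = \simC(P,Q)$.

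For the running time, first I would spend the $O(n\log n + I\log I)$ pre-processing budget once: compute the arrangement $\arr(P + \extendQ)$ and the area of each of its cells, sort the intersection points along both $P$ and $\extendQ$, build the forest that encodes the parentheses-nesting of the regions $R_u$, and extract $\area(R_u)$ for every $u$ by a bottom-up traversal. All the bookkeeping for the DP table itself fits in $O(I^2)$ space as required.

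Next, for each fixed $r \in [1,I]$, I would (re)initialize the $1$-D balanced range tree of Section~\ref{subsec:time} on the \representative{} points above $\extendQ$, plus a symmetric copy for those below, with every $\addwn_v$, $w_{\min}$, $w_{\max}$ set to zero; this costs $O(I\log I)$ per value of $r$. Then I would sweep $i$ from $r{+}1$ up to $I$, and at each step spend $O(\log I)$ time to: (i) update the running value of $\totalwn(C[r,i])$ by $\pm \area(R_{i-1})$ in constant time, and (ii) push a $\pm 1$ update along the $O(\log I)$ canonical nodes covering the interval $[\qin_{i-1},\qin_i]$ while propagating new $(w_{\min},w_{\max})$ aggregates. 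Validity of $(r,i)$ is then read off in $O(1)$ from the root's $w_{\min}\cdot w_{\max}$, and the contribution $\totalwn(C[r,i])+T(r)$ is offered to $T(i)$. Summing, each $r$ costs $O(I\log I)$, so all pairs are processed in $O(I^2 \log I)$; the DP minimizations themselves add only $O(I^2)$.

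The only subtle point, and the one I would flag as the main obstacle, is ensuring the data-structural updates are truly amortized $O(\log I)$ per intersection without overcounting: one must verify that the $R_u$ regions indeed correspond exactly to the interval canonical decompositions used for updates, that the sideness sign $\alpha_u$ agrees between the area update and the $\addwn$ update, and that the upper/lower half-plane \representative{}s together hit every cell of $\arr(C[r,i])$ so that $w_{\min}\cdot w_{\max}=0$ is equivalent to \allpositive{} winding numbers on the full arrangement. Once those consistency checks are discharged, adding pre-processing and main loop gives the claimed $O(I^2\log I + n\log n)$ time and $O(I^2+n)$ space, completing the proof.
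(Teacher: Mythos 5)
Your proposal is correct and follows essentially the same route as the paper: the same dynamic program justified by Observation~\ref{obs:breakptorder} and Lemmas~\ref{lem:sensepreserving}, \ref{lem:negativewn}, \ref{lem:positivewn}, combined with the extension to $\extendQ$, the parentheses-forest computation of the $\area(R_u)$'s, and the balanced range tree with lazy $\addwn$ values and min/max aggregates for the per-$r$ sweep. The consistency checks you flag (refinement of $\arr(C[r,i])$ by $\arr(P+\extendQ)$, the sign $\alpha_u$, and the root-level validity test) are exactly the points the paper discharges, so nothing essential is missing.
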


The case where we have two simple cycles $P$ and $Q$ in the plane is discussed in Appendix \ref{appendix:sec:planarcycles}, and we obtain the following extension: 

\begin{corollary}
Given two polygonal cycles $P$ and $Q$ in the plane of $n$ total complexity and with $I$ intersection points, we can compute the optimal \morph{} and its area in $O(I(I^2 \log I + n \log n))$ time if $I > 0$; and compute the optimal homotopy area in $O(n\log n)$ time if $I = 0$.  
\label{cor:planecycles}
\end{corollary}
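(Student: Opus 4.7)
The plan is to handle the two regimes separately, reducing the interesting case $I\ge 1$ to the path case of Theorem \ref{thm:plane} by guessing an anchor point, and handling $I=0$ directly.

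For $I=0$ the two simple closed curves are disjoint, so either one lies in the bounded region enclosed by the other (nested) or they bound disjoint regions. In the nested case an explicit optimal homotopy sweeps the annular region between $P$ and $Q$, giving area equal to $|\area(P)-\area(Q)|$. In the disjoint case we contract $P$ to a point, translate the point along a rectifiable path to $Q$ (which sweeps zero area), and expand to $Q$, giving area $\area(P)+\area(Q)$. Matching lower bounds follow from Observation \ref{obs:lowerbound} applied to each enclosed region separately. Deciding nested vs.\ disjoint and evaluating the two areas is standard and fits in $O(n\log n)$ time by a point-in-polygon test combined with the shoelace formula.

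For $I\ge 1$ the key structural claim is that some optimal homotopy between $P$ and $Q$ has at least one anchor point, and every anchor is necessarily an intersection point of $P$ and $Q$. Granting this, the algorithm is: for each intersection point $\Ipt_i$, re-parameterize $P$ and $Q$ so they start and end at $\Ipt_i$, obtaining simple polygonal paths $P_i$ and $Q_i$ with common endpoints; invoke Theorem \ref{thm:plane} to compute $\simC(P_i,Q_i)$ in $O(I^2\log I + n\log n)$ time; and output the minimum over $i\in\{1,\dots,I\}$. Correctness follows from Observation \ref{obs:breakptorder} (extended to the cycle case): once one anchor $\Ipt_i$ is fixed, the rest of the anchors appear in the same cyclic order along $P$ and along $Q$, so the cycle optimization coincides with the path optimization after cutting at $\Ipt_i$. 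The total running time is $O(I(I^2\log I + n\log n))$ as claimed.

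The main obstacle is the anchor-existence claim. I would argue it by contradiction: suppose an optimal homotopy $\amorph$ has no anchor points. The proof of Lemma \ref{lem:sensepreserving} does not use the endpoints of the paths and so applies verbatim to cycles, forcing $\amorph$ to be sense-preserving. The proof of Lemma \ref{lem:negativewn} then carries over (with $P\concatenate\text{rev}(Q)$ reinterpreted as the 1-cycle $P-Q$), implying that $P-Q$ has consistent winding numbers everywhere. Since $P$ and $Q$ are simple and cross at $I\ge 1$ points, a local analysis at any transverse crossing produces two adjacent cells in $\arr(P\cup Q)$ whose winding numbers differ by $\pm 1$ with opposite signs across consecutive crossings of opposite sign, contradicting consistency. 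The delicate part is handling degenerate or tangential intersections and confirming that at least one sign change of crossings occurs, which reduces to the standard fact that two simple closed curves in the plane that cross must do so an even number of times with alternating-signed contributions to the algebraic intersection of the corresponding chains.
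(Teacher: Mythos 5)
Your proposal is correct and follows the paper's overall plan exactly: for $I\ge 1$, guess which intersection point is the \breakpt{}, cut both cycles there, run the path algorithm of Theorem \ref{thm:plane}, and take the minimum over the $I$ choices (giving the multiplicative $O(I)$ factor); for $I=0$, distinguish nested from non-nested and return the sandwiched area or the sum of enclosed areas in $O(n\log n)$ time. The one place you genuinely diverge is the proof of the anchor-existence lemma (the paper's Lemma \ref{lem:anchorcycles}): the paper argues directly that, by Lemma \ref{lem:sensepreserving}, an anchor-free optimal \morph{} is \sensepreserving{}, hence deforms one cycle monotonically to one side, forcing one cycle to lie entirely inside the other and contradicting $I\ge 1$; you instead pass through the winding-number machinery (the cycle analogue of Lemma \ref{lem:negativewn}) and derive a contradiction from the four quadrants at a transversal crossing, where $\wn(\cdot;P)-\wn(\cdot;Q)$ takes both signs. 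Both work; the paper's argument is shorter and sidesteps the transversality/degeneracy caveat you flag, while yours makes the mechanism quantitative and reuses lemmas you already have. One small imprecision in your $I=0$ treatment: in the disjoint, non-nested case the ``contract--translate--expand'' deformation is not a regular homotopy in the paper's sense (it collapses a cycle to a point), so the value $\area(P)+\area(Q)$ is only an infimum approached by regular homotopies and is not attained — which is exactly why the corollary claims only the optimal homotopy \emph{area} when $I=0$; your computed value and the $O(n\log n)$ bound are nevertheless correct.
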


\section{Minimum Area Homotopies on 2-Manifolds}
\label{sec:surface}

In this section, we consider optimal \morph{} between curves $P$ and $Q$ on an orientable and triangulated $2$-manifold $M$ without boundary. Our input is a triangulation $\trimesh$ of $M$ with complexity $\tricomplexity$, and two simple homotopic polygonal curves $P$ and $Q$ sharing endpoints. Edges in $P$ and $Q$ are necessarily edges from the triangulation $\trimesh$. The total complexity of $P$ and $Q$ is $n$, and there are $I$ number of intersections between them. Note that in this setting, $I = O(n)$. 
Below we discuss separately the cases when $M$ has non-zero genus
and when $M$ is a topological sphere. 

\subsection{Surfaces with Positive Genus}
\label{sec:positivegenus}

Given an orientable $2$-manifold $M$, let $\uc(M)$ be a universal covering space of $M$ with $\phi: \uc(M) \rightarrow M$ the associated covering map. Note that $\phi$ is continuous, surjective, and a local homeomorphism.  (For full details on covering spaces, we refer the reader to topology textbooks that address this area~\cite{Rot88}; we will also build on existing algorithmic techniques developed  for the computing and working in the universal cover~\cite{DS95,Sch92}.)

For any path $\gamma$ in $M$, if we fix the lift (pre-image) of its starting point, then it lifts to a unique path $\lift \gamma$ in $\uc(M)$, such that $\phi(\lift \gamma) = \gamma$. 
Since $P$ and $Q$ are homotopic with common endpoints, the closed curve formed by $C = P\concatenate Q$ is contractible on $M$, and the lift of $C$, denoted by $\lift C$, is a closed curve in $\uc(M)$. By the Homotopy Lifting Property of the universal cover \cite{Rot88}, we have:

\begin{obs}
Once we fix the lift of the starting point of $P$ and $Q$ in $\uc(M)$, there is a one-to-one correspondence between homotopies between $P$ and $Q$ in $M$ and those between $\lift P$ and $\lift Q$ in $\uc(M)$. 
\label{obs:lifting}
\end{obs}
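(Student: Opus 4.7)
\myproofbegin
The plan is to establish both directions of the bijection using standard lifting theory, with the fixed starting lift ensuring uniqueness throughout.

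For the forward direction (descent), given any homotopy $\tilde{H}:[0,1]\times[0,1]\to \uc(M)$ between $\lift{P}$ and $\lift{Q}$, I would simply set $H = \phi \circ \tilde{H}$. Continuity of $H$ follows from continuity of $\phi$ and $\tilde{H}$, and the boundary conditions $H(0,\cdot)=P$, $H(1,\cdot)=Q$, together with the fixed endpoints of the homotopy, follow immediately from $\phi(\lift{P})=P$ and $\phi(\lift{Q})=Q$.

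For the reverse direction (lifting), given a homotopy $H:[0,1]\times[0,1]\to M$ from $P$ to $Q$ with common endpoints, I would invoke the Homotopy Lifting Property of the universal cover. Since $[0,1]\times[0,1]$ is simply connected and we have fixed a lift of the common starting point of $P$ and $Q$, this yields a unique continuous map $\tilde{H}:[0,1]\times[0,1]\to\uc(M)$ with $\phi\circ\tilde{H}=H$ and whose value at $(0,0)$ is the prescribed lift. The key remaining step is to verify that $\tilde{H}(0,\cdot)=\lift{P}$ and $\tilde{H}(1,\cdot)=\lift{Q}$, and that the lifted endpoint curves $\tilde{H}(\cdot,0)$ and $\tilde{H}(\cdot,1)$ are constant. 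These follow from the uniqueness of path lifting once the starting point is fixed: each of these restrictions is itself a lift of a specified path (respectively $P$, $Q$, the constant path at $P(0)$, and the constant path at $P(1)$) beginning at a fixed point in $\uc(M)$, hence must equal the corresponding unique lift $\lift{P}$, $\lift{Q}$, etc. In particular the fact that $\tilde{H}(\cdot,1)$ is the unique lift of a constant path starting at the lift of $P(1)=Q(1)$ forces $\tilde{H}(1,1)$ to equal the terminal endpoint of $\lift{P}$; the same argument shows this is also the terminal endpoint of $\lift{Q}$, which is consistent because $P$ and $Q$ are homotopic rel endpoints (and it is exactly this compatibility that lets $\lift{P}$ and $\lift{Q}$ share a terminal lifted endpoint).

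Finally, I would observe that these two constructions are mutual inverses: projecting a lifted homotopy and then lifting it returns the original lift by uniqueness of lifts, and lifting a projected homotopy returns the original homotopy because $\phi\circ\tilde{H}=H$ by construction. The main subtlety will be the endpoint bookkeeping in the lifting direction, specifically ensuring that $\lift{P}$ and $\lift{Q}$ share \emph{both} endpoints in $\uc(M)$ (not just the starting one). This is exactly where the hypothesis that $P$ and $Q$ are homotopic rel endpoints in $M$ is used, via the observation that the closed curve $P\concatenate\text{rev}(Q)$ is contractible and hence lifts to a closed curve in $\uc(M)$.
\myproofend
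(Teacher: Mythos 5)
Your proposal is correct and follows essentially the same route as the paper, which justifies this observation simply by invoking the Homotopy Lifting Property of the universal cover (with the lifted basepoint fixed); your write-up just spells out the standard details of that argument, including the descent direction via composition with $\phi$ and the endpoint bookkeeping via uniqueness of path lifting. No gaps to report.
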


We now impose an area measure in $\uc(M)$ by lifting the area measure in $M$; this can be done via the map $\phi$, which is a local homeomorphism. 
Now the area of a \morph{} in $M$ is the same as the area of its lift in $\uc(M)$. As such, we can convert the problem of finding an optimal \morph{} in $M$ to finding one in $\uc(M)$. 
Furthermore, for any orientable compact 2-manifold with genus $g > 0$, its universal cover is topologically equivalent to $\reals^2$. Intuitively, this means that we can then apply algorithms and results from previous section to the universal covering space. 



More specifically, our algorithm proceeds as follows: \\
\\
\noindent {\bf Step 1: Compute relevant portion of $\uc(M)$.~}
We will construct a portion of a universal covering space $\uc(M)$ made from polygonal schema of $M$~\cite{VY90,DS95}. Specifically, we use the algorithm from~\cite{DS95} to construct a reduced polygonal schema $T$ in $O(\tricomplexity)$ time. The universal covering space consists of an infinite number of copies of this polygonal schema glued together appropriately. We call each copy of the polygonal schema in the constructed universal covering space \emph{a tile}. 

Recall that the universal covering space $\uc(M)$ is homeomorphic to $\reals^2$. We fix a lift of the starting endpoint of $P$ and $Q$ in $\uc(M)$ and obtain a specific lift $\lift{P}$ and $\lift{Q}$ for $P$ and $Q$ respectively. Since $P$ and $Q$ are homotopic, $\lift{P}$ and $\lift{Q}$ form a closed curve, denoted by $\lift{C} = \lift{P} \concatenate \text{rev}(\lift{Q})$. 
Note that the number of intersection points between $\lift{P}$ and $\lift{Q}$ is at most $I$, as every intersection point in the lift necessarily maps to an intersection point of $P$ and $Q$ under $\phi$, but not vice versa. 

Consider the arrangement formed by $\arr(\lift{C})$ in the planar domain $\uc(M)$.  We will construct the portion of the universal covering space $U \subseteq \uc(M)$ which is the union of tiles that intersect or are contained inside of $\arr(\lift{C})$. 

From~\cite{DS95}, we know that the lifted curve $\lift{C}$ passes through $O(n)$ tiles in $\uc(\trimesh)$. However, while the total number of tiles in the interior of $\arr(\lift{C})$ is $O(n)$ for the case where $g > 1$, it can be $\Theta(n^2)$ for the case when $g = 1$. Hence we will separate the case for $g = 1$ and $g > 1$, since we wish to avoid the $O(n^2)$ overhead in the genus 1 case.  

For the case $g > 1$, we use the algorithm by Dey and Schipper \cite{DS95} to compute the relevant portion $U$ of the universal covering space in $O(n\log g + \tricomplexity)$ time. The output contains all $O(n)$ copies of the polygonal schema in $U$, where each tile is represented by a reduced $4g$-gon without being explicitly filled with triangles from $\trimesh$. 
However, recall that $P$ and $Q$ are curves which follow edges of the triangulation; in this construction of the polygonal schema tiles, each edge of $K$ can be broken into $O(g)$ pieces. So in the worst case, we must break each edge in $P$ or $Q$  into $O(g)$ pieces, giving a total  complexity for $\tilde{P}$ and $\tilde{Q}$ is $O(ng)$ in this representation of $U$.   
Once these are known, we can compute the combinatorial structure of the arrangement of $\lift{C}$ in $U$, as well as the description of the set of tiles each  cell in $\arr(\lift{C})$ intersects or contains, in $O(ng + I\log I)$ time. 

\begin{figure}[htbp]
\begin{center}
\begin{tabular}{ccc}
\includegraphics[width=3.8cm]{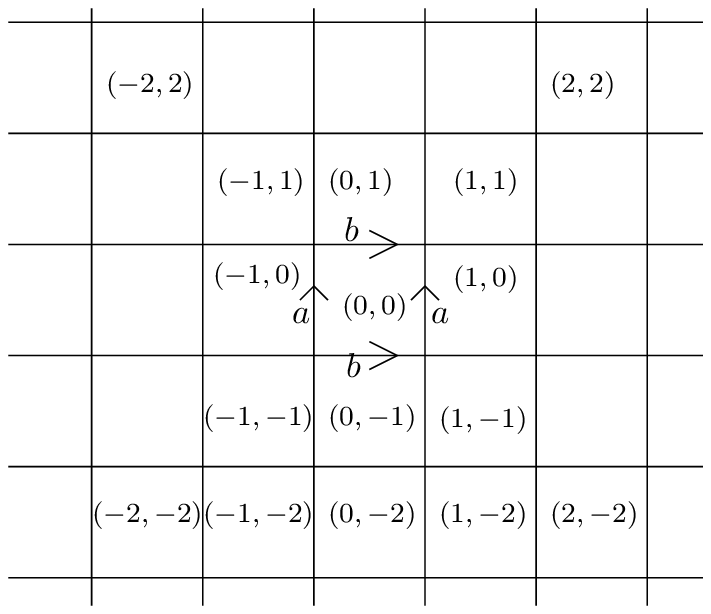} & \hspace*{0.0in} &
\includegraphics[width=4cm]{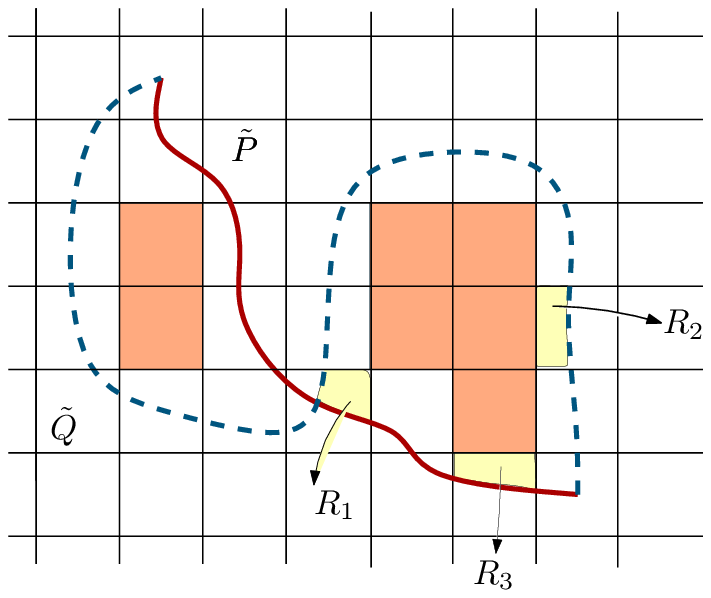} \\
(a) & & (b)
\end{tabular}
\end{center}
\vspace*{-0.15in}
\caption{{\small (a) A combinatorial view of the universal covering space $\uc(M)$. $a$ and $b$ are the generators and we can give each cell a coordinate. (b) The lift of $P$ (solid curve) and the lift of $Q$ (dashed curve). The heavily shaded region are copies of polygonal schema contained inside cells of $\arr(\lift{C})$, and their total number can be easily computed by a scanning algorithm. $R_1$ is an essential cell; $R_2$ and $R_3$ are two non-essential cells. }
\label{fig:torusUC}}
\end{figure}

For the case $g = 1$,  the input manifold is a torus, and the canonical polygon schema for it is a rectangle with oriented boundary arcs $aba^{-1}b^{-1}$. Imagine now that we give the \emph{base polygonal schema} $T_0$ (which is the tile that contains the lift of the starting point of $P$ and $Q$) a \emph{coordinate} $(0,0)$; we must now assign a coordinate for every other copy of the polygon (as shown in Figure~\ref{fig:torusUC}(a)). Specifically, a copy of polygonal schema $T$ has coordinate $(i,j)$ if the closed loops whose lifts start in $T_0$ and end in $T$ have the same homotopy type as $a^i b^j$. 
We can  obtain the sequence of the rectangles (and their coordinates) that the curve $\lift C$ will pass through in $O(n + \tricomplexity)$ time \cite{DS95}. Once these coordinates are known, the combinatorial structure of the arrangement of $\lift{C}$ in $U$ can also be computed in $O(n + I\log I)$ time. Note that in this case, we have avoided explicitly enumerating the set of tiles fully enclosed within $\arr(\lift{C})$ (the shaded tiles in Figure \ref{fig:torusUC} (b)), whose number can be $\Theta(n^2)$ instead of $O(n)$ when $g=1$.

\myparagraph{Step 2: Area of cells in $\arr(\lift{C})$.}
In order to perform our algorithm introduced in Section \ref{sec:plane} to the lifted curves $\lift{P}$ and $\lift{Q}$, in addition to the combinatorial structure of $\arr(\lift{C})$, we also need the area of each cell in $\arr(\lift C)$. We first describe how to compute it for the case $g = 1$. 

Take any cell $X$ in $\arr(\lift C)$ and assume the boundary of $X$ intersects $m$ copies of polygonal schema. 
Even though that $X$ may contain $\Theta(m^2)$ copies of (rectangular) tiles in its interior, 
we do not need to enumerate these interior tiles explicitly to compute their total area. 

Indeed, by a scanning algorithm from left to right, we can compute in $O(m)$ time how many tiles are completely contained inside $X$ (heavily--shaded regions in Figure \ref{fig:torusUC} (b)) (note that the coordinates of each tile traversed by the boundary of $X$ are known). 
Since the area of every polygonal schema is simply the total area of the input triangulation, we can compute the total area of tiles contained inside $X$ in $O(m)$ time. 

Now let $\mathbf{R}$ be the collection of tiles that intersect the boundary of $X$. It remains to compute the total area of $\mathbf{R} \cap X$. 
Call each region in $T \cap X$ a \emph{sub-cell}, for any tile $T \in \mathbf{R}$. 
Let $G$ denote the boundary curves of the polygonal schema $T$. 
There are two types of sub-cells: the \emph{essential} ones which contain at least one intersection point between $\lift{P}$ and $\lift{Q}$ as their vertices, and the \emph{non-essential} ones which have no intersection; see Figure~\ref{fig:torusUC} (b) for examples. Note that a non-essential cell is bounded  by arcs from $G$ alternating with $P/Q$-arcs from $\lift{P}$ or $\lift{Q}$, since there is no intersection of $\lift{P}$ and $\lift{Q}$ along the boundary of a non-essential cell. (Here, a $P/Q$-arc refers to either a $P$-arc or a $Q$-arc). 

\begin{figure}[tbhp]
\begin{center}
\begin{tabular}{cc}
\includegraphics[height=5cm]{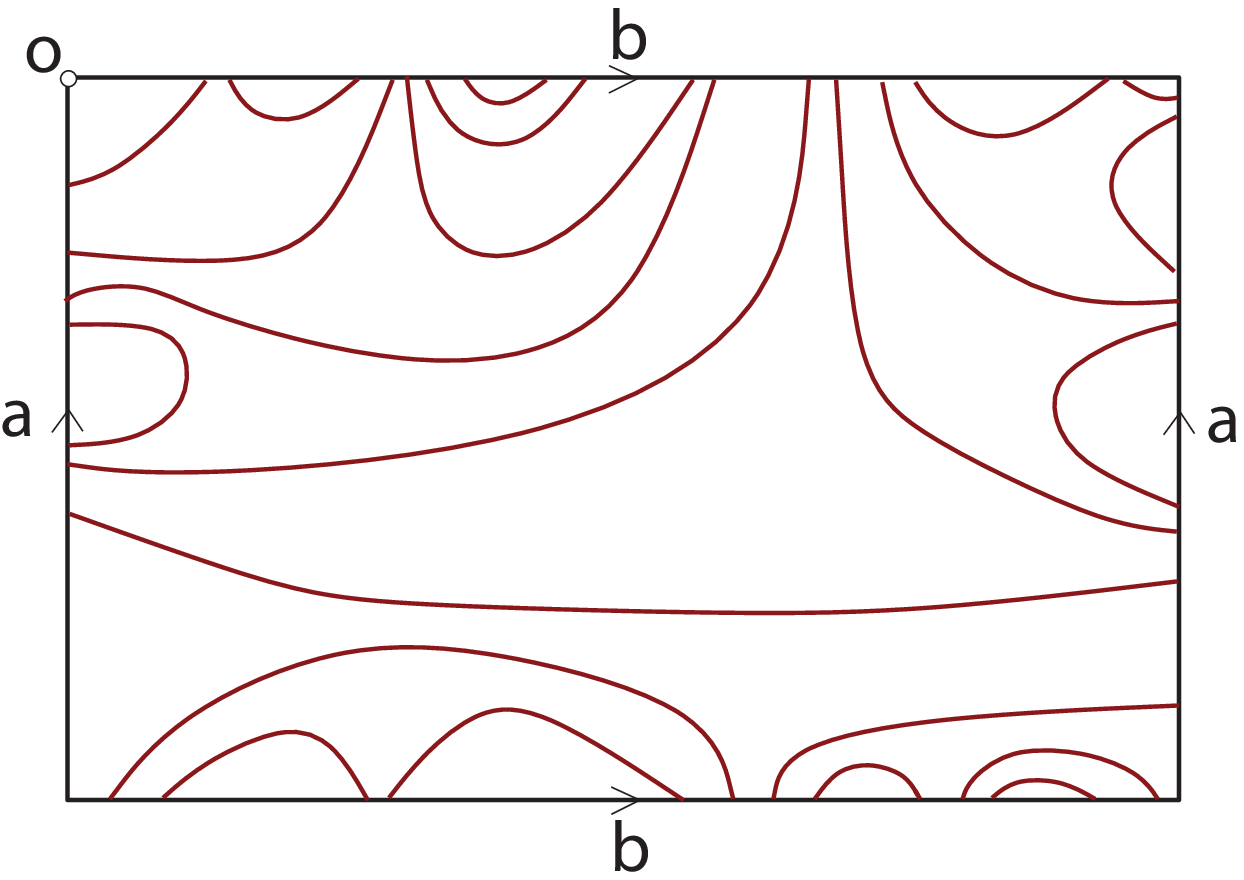}  &
\includegraphics[height=5cm]{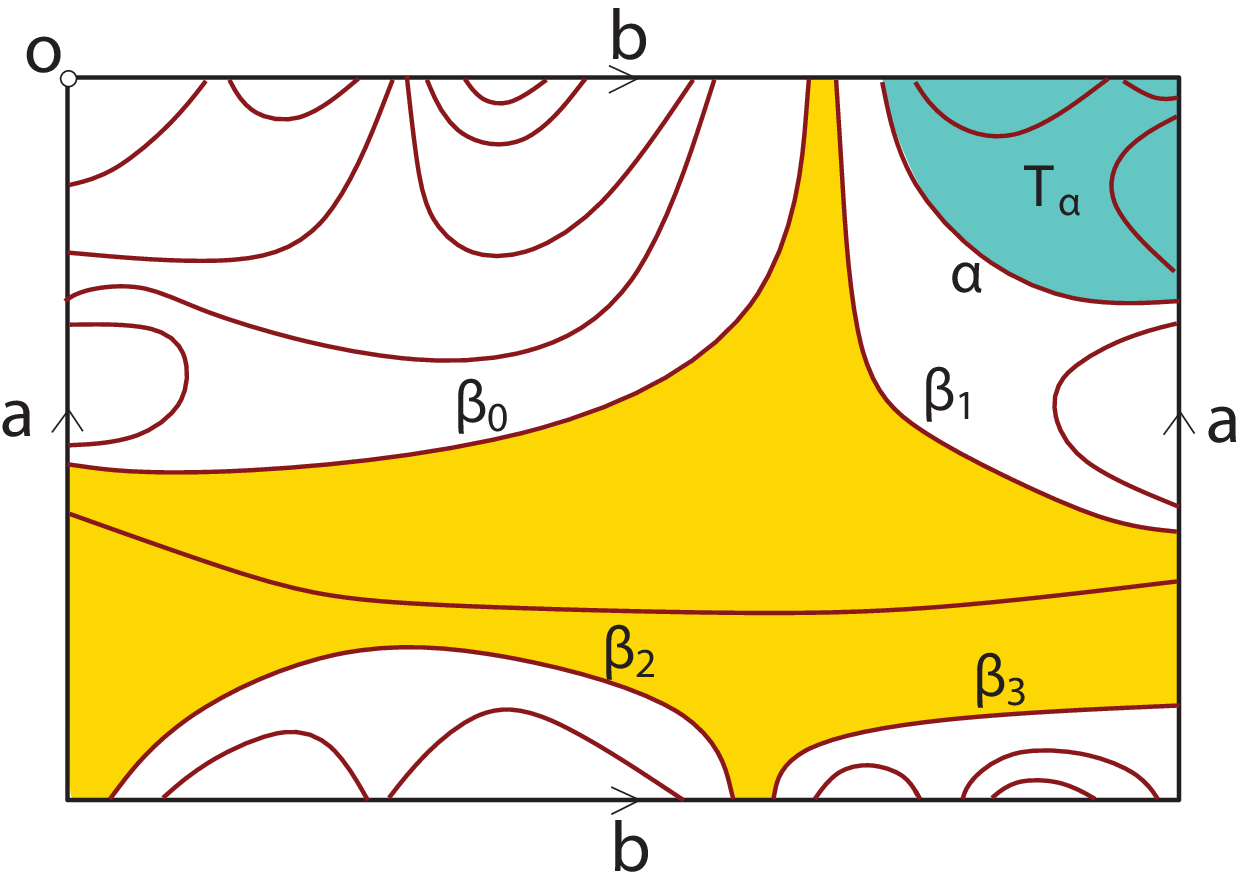} \\
(a) &  (b)
\end{tabular}
\end{center}
\vspace*{-0.15in}
\caption{(a) We overlay all non-essential sub-cells involving $P$-arcs into one copy of the polygonal schema. (b) An example of the canonical region $T_\alpha$ is shown for arc $\alpha$ (shaded region in the top-right corner). The shaded region in the middle is a sub-cell $X$ which can be computed as $T_{\beta_0} - T_{\beta_1} - T_{\beta_2} - T_{\beta_3}$, where $\beta_i$s are the boundary $P$- and $Q$-arcs for $X$. Among these $P$/$Q$-arcs, $\beta_0$ is the top-most arc in the containment relation. 
}
\label{fig:nonessential}
\end{figure}
First let us consider the collection of non-essential sub-cells formed by alternating $G$-arcs (boundary arcs of a tile) and arcs from $\tilde{P}$ and $\tilde{Q}$, and compute the area of each such non-essential sub-cells. 
If we plot all the $P$-arcs within a single tile $T$, no two $P$-arcs can intersect in this tile, since $P$ is a simple curve. 
Imagine that we pick an arbitrary but fixed point on the boundary $G$ of the polygonal schema $T$ as the origin $\mathbf{o}$. Each $P$-arc $\alpha$ subdivides $T$ into two regions; we let $T_\alpha$ denotes the canonical one excluding $\mathbf{o}$. 
Note that since $P$ is a simple curve, the set of \emph{canonical regions} $T_\alpha$s for all $P$-arcs must satisfy the parenthesis property, and these regions and their areas, called \emph{canonical areas}, can be computed in $O(ng\log n + \tricomplexity)$ time using a data structure similar to one used in Section \ref{subsec:time} to compute the area of $R_u$s. 
See Figure \ref{fig:nonessential} for an illustration. 
Similar, we can put all $Q$-arcs within the same tile and compute the canonical regions / areas associated with all $Q$-arcs in $O(ng\log n + \tricomplexity)$ time. 
Once these areas are known, the area of each non-essential sub-cell can be computed in $O(s)$ time where $s$ is the number of $P$-arcs and $Q$-arcs on the boundary of this sub-cell: Specifically, it is the difference between the canonical area of the top-most $P$/$Q$-arc and the union of the canonical areas of all other $P$/$Q$-arcs on the boundary of this cell. 
See Figure \ref{fig:nonessential} (b). 
Hence the areas of all non-essential sub-cells can be computed in $O(n)$ time once all $T_\alpha$s are known. The total time complexity required here is thus $O(ng \log n + \tricomplexity)$. 

What remains is to compute the area of all essential sub-cells. Note that there are $O(I)$ essential sub-cells since each contains an intersection between $P$ and $Q$. 
Let a $PQ$-arc to refer to an arc that starts and ends with points on $G$ (the boundary of the polygonal schema $T$) and consists of alternating $P$- and $Q$-arcs. 
An essential sub-cell is either completely contained within a polygonal schema, or its boundary consists of $PQ$-arcs, $G$-arcs, $P$-arcs and $Q$-arcs where no two such arcs can be consecutive: they are separated by $G$-arcs. 
Now collect all $P$-arcs and $Q$-arcs that are involved in the boundary arcs of those essential sub-cells completely contained within a tile. 
Plot them within the same tile $T$ and compute their arrangement $A$ as well as the area for each cell in $A$. This can be done in $O(ng\log n + \tricomplexity)$ time.
Since $A$ can have only $O(I)$ vertices in the interior of the tile $T$, $A$ contains $O(I)$ cells.  
If an essential sub-cell $X$ is completely contained within a polygonal schema, then it is a union of a set of cells from $A$. We can simply spend $O(I)$ time to go through cells in $A$, identify those contained in $X$ and return their total area. 
Hence it takes $O(I^2)$ time to compute the area of all such $O(I)$ essential sub-cells. 
If an essential sub-cell $X$ has $G$-arcs on its boundary, then we need a slightly more complicated way to handle it. 

Specifically, for all the remaining essential sub-cells, there can be $O(I)$ number of $PQ$-arcs along their boundaries, denoted by $L$. 
We collect all $P$-arcs and $Q$-arcs involved in $L$ and plot them in the same tile $T$ and compute their arrangement $\arr(L)$. 
Each $PQ$-arc $\alpha \in L$ divides the tile $T$ into two regions, and we define $T_\alpha$ to be the canonical one that excluding a specific origin $\mathbf{o}$ on $G$ similar to before. 
$T_\alpha$ consists of a union of cells from the arrangement $\arr(L)$, and we can compute the area of $T_\alpha$ in $O(I)$ time since $\arr(L)$ has $O(I)$ cells. 
Overall, in $O(I^2)$ time, we can compute the area of all $T_\alpha$s for all $PQ$-arcs $\alpha \in L$ 
Now take an essential sub-cell $X$ that has $s$ number of $P$-, $Q$-, or $PQ$-arcs along its boundary, denoted by $\alpha_1, \alpha_2, \ldots, \alpha_s$. Let $\alpha_1$ be the arc (which can be $P$-, $Q$- or $PQ$-arc) whose endpoints along $G$ spans the largest interval. Then, $X$ can be represented as $X = T_{\alpha_1} - \bigcup_{i\in [2,s]} T_{\alpha_i}$, where $T_{\alpha_i}$ is the canonical region defined by an arc $\alpha_i$. Since the area of all canonical regions are known (for $P$-arcs or $Q$-arcs, we have computed their canonical areas before), $X$'s are  a can be computed in $O(s)$ time. 
Computing the area of all remaining essential sub-cells thus takes $O(I^2 + n)$ time. 

Putting everything together, the total time needed to compute the area of all cells in $\arr(\lift{C})$ is $O(ng\log n +  \tricomplexity + I^2)$ when $g = 1$. 
The case when $g > 1$ is similar but much simpler. Indeed, we now can afford to compute all the tiles contained within any cell of $\arr(\lift{C})$ explicitly, as their total number is bounded by $O(n)$ \cite{DS95,Sch92}. The areas of essential and non-essential sub-cells are computed using the same algorithm as above. The total time complexity is $O(ng\log n + \tricomplexity + I^2)$. 

\myparagraph{Step 3: Putting everything together.}
With the combinatorial structure of $\arr(\lift{C})$ and the area of each cell computed, we now apply the algorithm from Section \ref{subsec:algorithm} to compute the optimal \morph{} in $O(I^2 \log I + ng \log n)$ time in $\uc(M)$, which, by \ref{obs:lifting}, gives the optimal \morph{} between $P$ and $Q$ in $M$ in the same time bound. The total time complexity for the entire algorithm is $O(ng\log n + I^2\log I + \tricomplexity)$.

\subsection{The Case of the Sphere }
\label{sec:sphere}

We now consider the remaining case where the input has $g=1$, so $M$ is a (topological) sphere $\sphere$. All paths on $\sphere$ are  homotopic. The universal cover of a sphere is itself, and hence is compact. However, the previous algorithm in Section \ref{subsec:algorithm} works for a domain homeomorphic to $\reals^2$ and cannot be directly applied. We now sketch how we handle the sphere case. 
Missing details can be found in Appendix \ref{appendix:sphere}. 

\begin{figure}[htbp]
\centerline{\includegraphics[width=3cm]{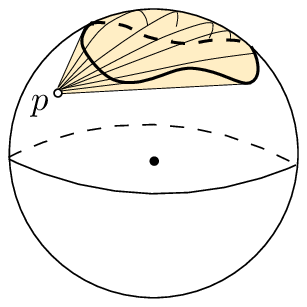} \hspace*{0.35in} \includegraphics[width=3cm]{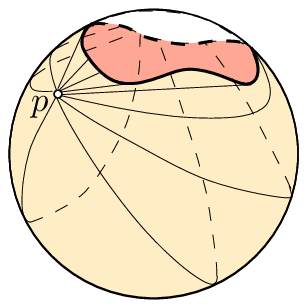}}
\caption{Two ways of sweeping a curve on sphere from base point $p$.}
\label{fig:spherewn}
\end{figure}
We observe that the results in Section \ref{sec:optimal} still hold. 
However, as the sphere is compact, the winding number is not well-defined. For example, see Figure \ref{fig:spherewn}, where there are two ways that  the curve $\gamma$ winds around the point $p$. In the first case, the winding number at $p$ is $0$, while in the second case, the winding number is $-1$. In order to use a dynamic programming framework as before to compute the optimal \morph{} between $P$ and $Q$, we need to develop analogs of Lemma \ref{lem:negativewn} and \ref{lem:positivewn} for curves on the sphere. 

To this end, note that if we remove one point, say $\z\in \sphere$ from the sphere $\sphere$, then the resulting space $\sphereminus{\z} = \sphere - \z$ is homeomorphic to $\reals^2$, and the concept of the winding number is well defined for $\sphereminus{\z}$. Specifically, $\z$ can be considered as the point of infinity in $\reals^2$. The \emph{winding number of $x \in \sphereminus{\z}$ w.r.t. $C$ and $z$}, denoted by $\wn(x; \z, C)$ ($C$ omitted when its choice is clear), is simply the summation of signed crossing numbers for any path connecting $x$ to $\z$. %
As in the planar case, we say that $C$ is \emph{\allpositive{} w.r.t. $\z$} if $\wn(x; \z, C)$ is either non-negative, or non-positive for all $x \in \sphereminus{\z}$.
Similar to before, we define the total winding number w.r.t. a base point $\z$ as $\totalwn(C; \z) = \int_{\sphereminus{\z}} \wn(x; \z, C) dx$. 
Let $\simC(P, Q; \Omega)$ denote the best \energy{} to morph $P$ to $Q$ within domain $\Omega$.

\begin{obs}
If there is an optimal \morph{} between $P$ and $Q$ that does not sweep through some point $\z \in \sphere$, then we have $\simC(P, Q; \sphere) = \simC(P, Q; \sphereminus{\z})$. 
\label{obs:pointofinfinity}
\end{obs}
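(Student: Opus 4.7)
The plan is to prove both inequalities $\simC(P, Q; \sphere) \le \simC(P, Q; \sphereminus{\z})$ and $\simC(P, Q; \sphereminus{\z}) \le \simC(P, Q; \sphere)$, with the hypothesis being used only in the second direction.

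For the first inequality, I would simply observe that any homotopy $H$ between $P$ and $Q$ whose image lies in $\sphereminus{\z}$ is, after composition with the inclusion $\sphereminus{\z} \hookrightarrow \sphere$, also a valid homotopy between $P$ and $Q$ in $\sphere$, with the same area (since area is a local quantity and the inclusion is an isometric embedding on the image). Taking infima over the strictly smaller class of homotopies on the right-hand side gives $\simC(P, Q; \sphere) \le \simC(P, Q; \sphereminus{\z})$. This direction does not use the hypothesis.

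For the second inequality, the hypothesis does the work. Let $\optmorph$ denote an optimal homotopy on $\sphere$ that does not sweep through $\z$; by definition of ``sweep,'' the image of $\optmorph: [0,1]\times[0,1] \to \sphere$ is contained in $\sphere \setminus \{\z\} = \sphereminus{\z}$. Therefore $\optmorph$ can be regarded as a homotopy in $\sphereminus{\z}$ as well, with area equal to its area in $\sphere$. Since $\optmorph$ is optimal on $\sphere$, its area equals $\simC(P, Q; \sphere)$, and since it is a particular homotopy in $\sphereminus{\z}$, its area is an upper bound for $\simC(P, Q; \sphereminus{\z})$. Combining the two inequalities yields the claimed equality.

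The only subtlety is making sure that the class of admissible homotopies (piecewise differentiable, regular, etc.) is preserved when we view $\optmorph$ as a map into the punctured sphere, and conversely when we extend a punctured-sphere homotopy to the sphere. This is immediate because these admissibility conditions are purely local on the image: regularity of each intermediate curve $\amorph_t$ and piecewise differentiability of $\amorph$ depend only on the differential structure near each point of the image, which is identical in $\sphere$ and $\sphereminus{\z}$. Thus I do not anticipate any obstacle beyond this bookkeeping remark.
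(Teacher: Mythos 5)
Your proof is correct and is exactly the intended argument: the inequality $\simC(P,Q;\sphere) \le \simC(P,Q;\sphereminus{\z})$ holds trivially since every homotopy in $\sphereminus{\z}$ is one in $\sphere$ of the same area, and the hypothesis lets you view the optimal homotopy on $\sphere$ as a competitor in $\sphereminus{\z}$, giving the reverse inequality. The paper treats this as immediate and gives no separate proof, so your write-up (including the bookkeeping remark about admissibility being a local condition) matches the intended reasoning.
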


\begin{obs}
Suppose $\optmorph$ is an optimal \morph{} between $P$ and $Q$ with no \breakpt{}s. 
For any cell $\acell$ in $\arr(P+Q)$, if $\optmorph$ sweeps through one point in its interior, then it sweeps through all points in $\acell$. 
\label{obs:onecell}
\end{obs}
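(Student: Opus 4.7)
Since $\optmorph$ is an optimal \morph{} with no \breakpt{}s, Lemma~\ref{lem:sensepreserving} (which, as noted in this subsection, continues to hold on $\sphere$) tells us $\optmorph$ is \sensepreserving{}; without loss of generality assume it is right \sensepreserving{}. Set $S := \optmorph([0,1]\times[0,1]) \cap \interior(\acell)$, the swept portion of the interior of $\acell$. Since a cell of an arrangement of two simple curves is by definition a connected open region, $\interior(\acell)$ is connected. The strategy is to prove that $S$ is both relatively closed and relatively open in $\interior(\acell)$; any nonempty clopen subset of a connected space is the whole space, which gives $S = \interior(\acell)$, and the boundary of $\acell$ is trivially swept since it lies on $P = \optmorph_0$ or $Q = \optmorph_1$.

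\emph{Closedness} is immediate: $\optmorph([0,1]^2)$ is the continuous image of a compact square, hence closed in $\sphere$, so its intersection with the open set $\interior(\acell)$ is relatively closed in $\interior(\acell)$.

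\emph{Openness} is the main step. Given $p \in S$ with $p = \optmorph(t_0, s_0)$, the fact that $p$ avoids $P \cup Q$ forces $t_0 \in (0,1)$. Fix a small open disk $D \subseteq \interior(\acell)$ around $p$. Because $\optmorph_{t_0}$ is regular at $p$, $\partial_s \optmorph(t_0, s_0)$ is a nonzero vector tangent to $\optmorph_{t_0}$ at $p$. The right \sensepreserving{} property implies that, as $t$ varies in a small neighborhood of $t_0$, each trajectory $t \mapsto \optmorph_t(s)$ for $s$ close to $s_0$ crosses $\optmorph_{t_0}$ transversely, passing from the left half of $D$ into the right half. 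After a $t$-reparametrization of $\optmorph$ that collapses any momentary pauses at $(t_0, s_0)$ (and alters neither the image nor the area), this transverse motion combined with the nonzero tangential derivative makes the differential of $\optmorph$ at $(t_0, s_0)$ of full rank. The inverse function theorem then shows that $\optmorph$ sends a small box around $(t_0, s_0)$ onto an open neighborhood of $p$ contained in $D$. Thus $p$ is a relative interior point of $S$, and since $S$ is a nonempty clopen subset of the connected set $\interior(\acell)$, we conclude $S = \interior(\acell)$.

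\textbf{Main obstacle.} The delicate step is openness. Regularity by itself would permit an intermediate curve to ``kiss'' $p$ and retreat, producing only a one-dimensional swept sliver near $p$ rather than an open neighborhood. What forbids this is precisely the \sensepreserving{} property, which constrains any non-fixed $t$-motion at $(t_0, s_0)$ to lie in the open right half-plane of $\optmorph_{t_0}$ at $p$ -- a direction transverse to $\partial_s \optmorph(t_0, s_0)$. The pause case $\partial_t \optmorph(t_0, s_0) = 0$ is handled by the harmless reparametrization just described, and the no-\breakpt{} hypothesis rules out a permanent pause at $(t_0, s_0)$, which is why that hypothesis is essential.
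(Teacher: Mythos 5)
Your proof is correct and rests on the same key fact as the paper's: since the no-\breakpt{} hypothesis makes $\optmorph$ \sensepreserving{} (Lemma~\ref{lem:sensepreserving}), the frontier of the swept region cannot pass through the interior of a cell of $\arr(P+Q)$ — the paper phrases this as a contradiction (a path from a swept point to an unswept point inside the cell would cross that frontier, forcing a touch-and-retreat ``local fold''), while you package the very same idea as a clopen/connectedness argument, with your openness step being exactly the negation of an interior frontier point. The only difference is presentational rigor (your inverse-function-theorem discussion versus the paper's informal ``obviously there is a local fold''), not a different idea.
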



The simple proof for the above observation is in Appendix \ref{appendix:obs:onecell}. The key result is the following lemma, the proof of which can be found in Appendix \ref{appendix:lem:infpointexists}. 
\begin{lemma}
If there is an optimal \morph{} $\optmorph$ of $P$ and $Q$ with no \breakpt{}, then the image of this optimal homotopy cannot cover all points in $\sphere$. 
\label{lem:infpointexists}
\end{lemma}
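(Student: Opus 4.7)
The plan is to argue by contradiction: suppose the image of $\optmorph$ is all of $\sphere$, and show that this produces a homotopy strictly cheaper than $\optmorph$. Since $\optmorph$ has no \breakpt, Lemma \ref{lem:sensepreserving} says that it is \sensepreserving, and we may assume (WLOG) it is left \sensepreserving. For each $x$ lying in the interior of a cell of $\arr(C)$, where $C = P \concatenate \text{rev}(Q)$, let $d_x$ denote the number of intermediate curves $\optmorph_t$ that pass through $x$; the contradiction hypothesis is precisely $d_x \geq 1$ for every such $x$, and by definition $\area(\optmorph) = \int_\sphere d_x\, d\nu(x)$.

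The key step is a sphere analog of Lemma \ref{lem:negativewn} that relates $d_x$ to the winding number of $C$. Fix a second generic point $\z$ in the interior of a cell, and consider the integer-valued function $F(t) = \wn(x;\z,C_t)$ where $C_t = \optmorph_t \concatenate \text{rev}(Q)$. Then $F(0) = \wn(x;\z,C)$ and $F(1) = \wn(x;\z,\,Q\concatenate \text{rev}(Q)) = 0$, because $Q\concatenate\text{rev}(Q)$ is null-homotopic on $\sphere$. Moreover $F$ jumps by $\pm 1$ exactly when $C_t$ sweeps through $x$ or $\z$, and the left-\sensepreserving{} property forces all $d_x$ jumps at $x$ to share one sign and all $d_\z$ jumps at $\z$ to share the opposite sign (an orientation check, using that the two points sit on opposite sides of the jump in the basepoint-winding convention). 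Tallying the jumps therefore yields the identity
\[
\wn(x;\z,C) \;=\; d_x - d_\z.
\]
In particular $d_x$ is constant on each cell of $\arr(C)$ (since $\wn(x;\z,C)$ is), and equals $d_\z$ whenever $x$ and $\z$ lie in the same cell.

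Now set $k = \min_x d_x \geq 1$ and choose $\z_0$ in a cell where this minimum is attained, so $d_{\z_0} = k$. Then $\wn(x;\z_0,C) = d_x - k \geq 0$ for every $x$, i.e., $C$ has \allpositive{} winding numbers with respect to $\z_0$. Since $\sphereminus{\z_0}$ is homeomorphic to $\reals^2$ with the inherited (Riemannian) area measure, Lemma \ref{lem:positivewn} applies verbatim in $\sphereminus{\z_0}$ and gives
\[
\simC(P,Q;\sphereminus{\z_0}) \;=\; |\totalwn(C;\z_0)| \;=\; \int_{\sphere}(d_x - k)\, d\nu(x) \;=\; \area(\optmorph) - k\,\area(\sphere).
\]
Because any homotopy in $\sphereminus{\z_0}$ is also a homotopy in $\sphere$, we have $\area(\optmorph) = \simC(P,Q;\sphere) \leq \simC(P,Q;\sphereminus{\z_0}) = \area(\optmorph) - k\,\area(\sphere)$, which forces $k\,\area(\sphere) \leq 0$ and contradicts $k \geq 1$ and $\area(\sphere) > 0$.

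The main obstacle is the second paragraph. In the planar proof of Lemma \ref{lem:negativewn} one can take the base point "at infinity," where it is never swept, so only $x$-crossings contribute to the jumps of $F$; on the sphere no such escape is available, and both $x$-crossings and $\z$-crossings must be tracked. Pinning down the signs so that these two kinds of jumps enter with opposite signs under left-\sensepreserving{}ness is precisely what yields the identity $\wn(x;\z,C) = d_x - d_\z$. Once that identity is established, the rest is the short bookkeeping argument above, which reduces the sphere problem to the planar Lemma \ref{lem:positivewn} by choosing $\z_0$ in a minimum-degree cell.
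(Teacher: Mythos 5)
Your proposal is correct, but it takes a genuinely different route from the paper's. The paper proves Lemma~\ref{lem:infpointexists} by induction on the number of intersections, mirroring the proof of Lemma~\ref{lem:positivewn}: it fixes a base point, normalizes it (via Observation~\ref{obs:samecellwn}) to lie outside the cell of maximal winding number, locates a bigon adjacent to that cell, sweeps $P$ across it to reduce the intersection count by two, and then applies the induction hypothesis together with Observation~\ref{obs:onecell} in a two-case analysis; this hinges on the asserted claim that some optimal \morph{} may be assumed to perform the bigon sweep first. You instead argue directly by contradiction through a degree count: your identity $\wn(x;\z,C)=d_x-d_\z$ for a \sensepreserving{} \morph{} is exactly the right sphere analog of Lemma~\ref{lem:negativewn} --- since on $\sphere$ no base point can escape being swept, the jumps caused by sweeps over the base point must be tracked, and the orientation check you flag as the main obstacle (sweeps over $x$ and over $\z$ enter with opposite signs under a fixed sense) does go through by the same local left/right computation used in the planar lemma. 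Given the identity, choosing $\z_0$ in a minimum-degree cell makes the winding numbers \allpositive{}, and Lemma~\ref{lem:positivewn} applied in $\sphereminus{\z_0}$ (legitimate by the paper's own footnote allowing an arbitrary area form on a domain homeomorphic to the plane) gives $\simC(P,Q;\sphereminus{\z_0})=\area(\optmorph)-k\,\area(\sphere)<\area(\optmorph)$, contradicting optimality. Your route avoids both the induction and the curve surgery, and it proves slightly more: it shows $\min_x d_x=0$ with $d_x$ constant on cells, i.e.\ an entire cell of $\arr(C)$ is missed by $\optmorph$, which essentially re-derives Observation~\ref{obs:onecell} and matches the consistent-representative picture behind Lemma~\ref{lem:validwn}, whereas the paper's induction only exhibits some unswept point. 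Two small repairs: $F(1)=0$ holds because $Q\concatenate\text{rev}(Q)$ retraces $Q$ with opposite orientations so all crossings cancel, not because it is null-homotopic (every loop on $\sphere$ is null-homotopic); and you should note explicitly that $d_x$ is finite and constant on the finitely many cells (so the minimum is attained), and that each connected component of the preimage of a point contributes exactly one signed jump --- a component touching the curve and retreating would be a local fold, excluded by the \sensepreserving{} and regularity assumptions just as in the proof of Observation~\ref{obs:onecell}.
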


Given two homotopic paths $P$ and $Q$ from $\sphere$ sharing common endpoints, 
Lemma \ref{lem:infpointexists} and Observation \ref{obs:pointofinfinity} imply 
that if $P$ can be morphed to $Q$ optimally without \breakpt{}s, 
then there exists some point $\z \in \sphere$ such that $\simC(P,Q; \sphere) = \simC(P,Q; \sphereminus{\z})$. 
For this choice of $\z$, it is necessary that the closed curve $P\concatenate Q$ has consistent winding numbers in $\sphereminus{\z}$. 
Once this $\z$ is identified, $\simC(P,Q; \sphereminus{\z})$ is simply the total winding number of $P\concatenate Q$ w.r.t. $\z$, as suggested by Lemma \ref{lem:positivewn}, because $\sphereminus{\z}$ is homeomorphic to the plane. 
Furthermore, by Observation \ref{obs:onecell}, we only need to pick one point from each cell of $\arr(P+Q)$ to check for the potential $\z$. 
Specifically, let $\{ \z_1, \ldots, \z_l \}$ be a set of such \emph{\representative{}s}, where $l = O(I)$. 
The optimal homotopy area $\simC(P, Q)$ is simply the smallest of all $\totalwn(P \concatenate \text{rev}(Q); \z_i)$ %
for those $\z_i$s with respect to whom the curve $P \concatenate \text{rev}(Q)$ has \allpositive{} winding numbers. 
Hence if we assume that if there is an \emph{optimal} \morph{} between $P$ and $Q$ with \emph{no} \breakpt{}s, then we can compute $\simC(P,Q)$. 

\myparagraph{Overview of the algorithm for sphere case. }
To compute the optimal \morph{} between $P$ and $Q$, we follow the same dynamic programming framework as before. If there is no \breakpt{} in an optimal \morph{}, then we use the discussion above to compute the optimal homotopy area. Otherwise, we identify the intersection point that serves as next \breakpt{}, and recurse. 
The main difference lies in the component of computing $\simC(i,j) := \simC(P[\Ipt_i, \Ipt_j], Q[\Ipt_i, \Ipt_j])$, assuming that there is an optimal homotopy from $P' = P[\Ipt_i, \Ipt_j]$ to $Q' = Q[\Ipt_i, \Ipt_j]$ with no \breakpt{}s. 
Previously, this is done by checking whether $P'\concatenate Q'$ has \allpositive{} winding numbers. Now, we need to check the same condition but against $l = O(I)$ number of potential \representative{}s $\{ \z_1, \ldots, \z_l \}$ as the potential point of infinity. This gives a linear-factor blow-up in the time complexity compared to the algorithm for the planar case. However, we show that this linear blow-up can be tamed down and we can again compute \emph{all} $\simC(r, j)$s for all $r$s and all $j > r$ in $O(I^2\log I)$ time, after $O(n \log n + \tricomplexity)$ pre-processing time. See Appendix \ref{appendix:algsphere} for details. 
Overall, the total time complexity remains the same as before. 

Putting both cases ($g > 0$ and $g = 0$) together, we conclude with the following main result. 

\begin{theorem}
Given a triangulation $\trimesh$ of an orientable compact $2$-manifold $M$ with genus $g$, let $\tricomplexity$ be the complexity of $\trimesh$. Given two homotopic paths $P$ and $Q$ of total complexity $n$ with $I$ intersection points, we can compute an optimal homotopy and its area $\simC(P,Q; M)$ in $O(I^2\log I + ng\log n + \tricomplexity)$ time. 

\label{thm:surface}
\end{theorem}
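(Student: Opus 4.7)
\myproofbegin
The plan is to split on the genus of $M$ and reduce each case to an invocation of (or modification of) the planar dynamic program from Section~\ref{subsec:algorithm}. For $g\ge 1$ I would invoke the machinery of Section~\ref{sec:positivegenus}, and for $g=0$ the sphere-specific machinery of Section~\ref{sec:sphere}. In each case the task breaks into three pieces: (i) construct a planar domain in which the lifted or punctured version of $P,Q$ lives, (ii) compute the combinatorial structure and cell areas of the arrangement $\arr(\lift P+\lift Q)$ or $\arr(P+Q)$ needed by the DP, and (iii) run the DP of Section~\ref{subsec:algorithm} to obtain $\simC(P,Q;M)$.

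For the case $g\ge 1$, I would use Observation~\ref{obs:lifting} to transfer the problem to the universal cover $\uc(M)$, which is homeomorphic to $\reals^2$. Fix a lift of the common starting endpoint and obtain $\lift P,\lift Q$; then $\simC(P,Q;M)=\simC(\lift P,\lift Q;\uc(M))$ under the pulled-back area form. Apply Step~1 of Section~\ref{sec:positivegenus} (polygonal schema construction via \cite{DS95}) in $O(n\log g+\tricomplexity)$ time to get the relevant tiles, avoiding the $\Theta(n^2)$ blow-up in the genus-one case by not enumerating enclosed tiles. Apply Step~2 to compute the areas of all cells of $\arr(\lift C)$ in $O(ng\log n+\tricomplexity+I^2)$ time, exploiting the parenthesis property of $P$-arcs and $Q$-arcs within each tile together with canonical-region decompositions. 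Finally invoke Theorem~\ref{thm:plane} (Step~3) in $O(I^2\log I+ng\log n)$ time on the planar arrangement; the invariant that the lift is simple and contained in $U$ justifies reusing the planar algorithm verbatim. Summing the three steps yields $O(I^2\log I+ng\log n+\tricomplexity)$.

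For the case $g=0$, the sphere is compact so the winding-number based argument does not apply directly. Here I would follow the outline of Section~\ref{sec:sphere}: the DP remains driven by anchor points guaranteed by Observation~\ref{obs:breakptorder} and Lemma~\ref{lem:sensepreserving}, and the only new ingredient is the evaluation of $\simC(P[\Ipt_i,\Ipt_j],Q[\Ipt_i,\Ipt_j])$ for subproblems with no anchor point. By Lemma~\ref{lem:infpointexists} and Observation~\ref{obs:pointofinfinity}, such a subproblem admits a point $\z\in\sphere$ not swept by the optimal homotopy, and by Observation~\ref{obs:onecell} we may restrict $\z$ to a set of $O(I)$ representatives, one per cell of $\arr(P+Q)$. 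Then $\simC$ equals the minimum of $\totalwn(P[\Ipt_i,\Ipt_j]\concatenate\text{rev}(Q[\Ipt_i,\Ipt_j]);\z_k)$ over representatives $\z_k$ with respect to which the closed curve has consistent winding numbers. A naive implementation loses a factor of $I$; the refined data structure sketched in Appendix~\ref{appendix:algsphere} maintains all $O(I)$ candidate winding-number functions simultaneously as the endpoint sweeps along $P$, so that all subproblem values $\simC(r,j)$ can be produced in $O(I^2\log I)$ total time after $O(n\log n+\tricomplexity)$ preprocessing (note $I=O(n)$ since edges of $P,Q$ come from $\trimesh$).

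The main obstacle, and the place the argument has to be carefully checked, is the $g=1$ branch of Step~2: one must compute areas of cells of $\arr(\lift C)$ that may enclose $\Theta(n^2)$ tiles without ever materializing those tiles. The resolution is the left-to-right sweep that counts fully enclosed tiles from the tile coordinates alone, combined with the canonical-region decomposition that computes essential and non-essential sub-cell areas in $O(ng\log n+\tricomplexity+I^2)$ time. Once this bookkeeping is in place, all components plug into the planar DP and the advertised $O(I^2\log I+ng\log n+\tricomplexity)$ bound follows by summing over the three steps in both the $g\ge 1$ and $g=0$ cases.
\myproofend
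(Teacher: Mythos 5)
Your proposal follows the paper's own argument essentially verbatim: the same genus split, the same three-step reduction through the universal cover via Observation~\ref{obs:lifting} and the polygonal-schema construction of~\cite{DS95} (with the identical handling of the $g=1$ tile-counting obstacle in Step~2), and the same punctured-sphere treatment for $g=0$ via Lemma~\ref{lem:infpointexists}, Observations~\ref{obs:pointofinfinity} and~\ref{obs:onecell}, and the wn-min/wn-max range-tree refinement of Appendix~\ref{appendix:algsphere}. The stated time bounds for each component match the paper's, so the proposal is correct and there is nothing substantive to add.
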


\section{Conclusion}
In this paper, we propose a new curve similarity measure which captures how hard it is to deform from one curve to the other based on the amount of total area swept. It is robust to noise (as it is area-based), and can be computed  efficiently; to our knowledge, there is no other efficiently computable similarity measure for curves on surfaces. 
Our algorithm can be extended for cycles in the plane (see Appendix \ref{appendix:sec:planarcycles}).  
It appears that our algorithm can also be extended to cycles on the surfaces.  Indeed, if the optimal free homotopy has an anchor point, then we can break cycles into curves that share a common start and end point, which then reduces to the problem of comparing curves on surfaces.  However, on a surface the analog of Lemma \ref{lem:anchorcycles} no longer holds, so that two curves may intersect in $M$ but not have an anchor point in the optimal homotopy; in this case, it is not immediately clear how to bound the size of the universal cover necessary for our algorithm.  We leave the problem of working out these details, as well as improving the time complexity for comparing cycles, as an immediate future direction.


Currently, we assume that two input paths are simple paths which share starting and ending points, which makes it easier to define homotopy equivalence. This leads to two natural questions, namely how to handle curves which do not share endpoints and how to deal with non-simple curves.  Another interesting problem is to compute optimal \emph{isotopy area} where we require that any intermediate curve during the deformation is also simple. 

Measuring similarity of curves on surfaces is an interesting problem, and many open areas remain. 
Geodesic \Frechet{}-based measures ignore the topological constraints of underlying surface, while the homotopic \Frechet{} distance, homotopy height, and our method require identification of a homotopy which optimizes some cost. 
As far as other measures of similarity which may be tractable, one interesting new idea would be to develop an area-based curve similarity measure that allows topological changes, such as allowing a region to be swept as long as it has trivial homology.
Other directions include developing efficient curve simplification algorithms based on this measure, and studying similarity between curves from more general simplicial complexes than considered in this paper (such as a manifold with boundary or holes, or non-manifolds).

\paragraph{Acknowledgment.}
The authors would like to thank Joseph O'Rourke, Rephael Wenger,  Michael Davis, and Tadeusz Januszkiewicz for useful discussions at the early stage of this work, and David Letscher, Brody Johnson, and Bryan Clair for helpful discussions at the later stage of this paper. We would also like to thank the anonymous reviewers for their comments. This work is partially supported by the National Science Foundation under grants CCF-0747082 and CCF-1054779. 

\bibliographystyle{abbrv}
\bibliography{geometry,curvesim,shortcuts}

\providecommand{\Badoiu}{B\u{a}doiu} \providecommand{\Matousek}{Matou{\v s}ek}
  \providecommand{\Barany}{B{\'a}r{\'a}ny}
  \providecommand{\Bronimman}{Br{\"o}nnimann}
  \providecommand{\Gartner}{G{\"a}rtner} \providecommand{\Badoiu}{B\u{a}doiu}
  \providecommand{\tildegen}{{\protect\raisebox{-0.1cm}
  {\symbol{'176}\hspace{-0.03cm}}}} \providecommand{\SarielWWWPapersAddr}
  {http://www.uiuc.edu/~sariel/papers} \providecommand{\SarielWWWPapers}
  {http://www.uiuc.edu/\tildegen{}sariel/\hspace{0pt}papers}
  \providecommand{\urlSarielPaper}[1]{ \href{\SarielWWWPapersAddr/#1}
  {\SarielWWWPapers{}/#1}}
\begin{thebibliography}{10}

\bibitem{AAKS13}
P.~K. Agarwal, R.~B. Avraham, H.~Kaplan, and M.~Sharir.
\newblock Computing the discrete {Fr\'{e}chet} distance in subquadratic time.
\newblock In {\em Proc. 24th Ann. ACM-SIAM Sympos. Discrete Alg. (SODA)}, 2013.

\bibitem{alt2009}
H.~Alt.
\newblock The computational geometry of comparing shapes.
\newblock In {\em Efficient Algorithms}, volume 5760 of {\em Lecture Notes in
  Computer Science}, pages 235--248. Springer Berlin / Heidelberg, 2009.

\bibitem{afrw-mcsrs-98}
H.~Alt, U.~Fuchs, G.~Rote, and G.~Weber.
\newblock Matching convex shapes with respect to the symmetric difference.
\newblock {\em Algorithmica}, 21:89--103, 1998.

\bibitem{ag-cfdbt-95}
H.~Alt and M.~Godau.
\newblock Computing the {Fr\'echet} distance between two polygonal curves.
\newblock {\em International Journal of Computational Geometry and its
  Applications}, 5:75--91, 1995.

\bibitem{ag-dgsmi-00}
H.~Alt and L.~J. Guibas.
\newblock Discrete geometric shapes: {Matching}, interpolation, and
  approximation.
\newblock In J.-R. Sack and J.~Urrutia, editors, {\em Handbook of Computational
  Geometry}, pages 121--153. Elsevier Science Publishers B. V. North-Holland,
  Amsterdam, 2000.

\bibitem{akw-cdmpc-04}
H.~Alt, C.~Knauer, and C.~Wenk.
\newblock Comparison of distance measures for planar curves.
\newblock {\em Algorithmica}, 38(1):45--58, 2004.

\bibitem{ahkww-fdcr-06}
B.~Aronov, S.~{Har-Peled}, C.~Knauer, Y.~Wang, and C.~Wenk.
\newblock {Fr\'{e}chet} distance for curves, {R}evisited.
\newblock In {\em Pro. of European Symposium on Algorithms}, pages 52--63,
  2006.

\bibitem{b-ombp-02}
S.~Bespamyatnikh.
\newblock An optimal morphing between polylines.
\newblock {\em International Journal of Computational Geometry and
  Applications}, 12(3):217--228, 2002.

\bibitem{BC06}
P.~Bose, S.~Cabello, O.~Cheong, J.~Gudmundsson, M.~var Kreveld, and
  B.~Speckmann.
\newblock Area-preserving approximations of polygonal paths.
\newblock {\em Journal of Discrete Algorithms}, 4:554--566, 2006.

\bibitem{Brightwell2009}
G.~R. Brightwell and P.~Winkler.
\newblock Submodular percolation.
\newblock {\em SIAM J. Discret. Math.}, 23(3):1149--1178, July 2009.

\bibitem{bbkrw-hdwd-07}
K.~Buchin, M.~Buchin, C.~Knauer, G.~Rote, and C.~Wenk.
\newblock How difficult is it to walk the dog?
\newblock In {\em Proc. 23rd Europ. Workshop Comput. Geom.}, pages 170--173,
  2007.

\bibitem{bbmm-fswd-12}
K.~Buchin, M.~Buchin, W.~Meulemans, and W.~Mulzer.
\newblock Four {S}oviets walk the dog---with an application to {A}lt's
  conjecture.
\newblock {\em arXiv/1209.4403}, 2012.

\bibitem{bbklsww-mt-10}
K.~Buchin, M.~Buchin, M.~van Kreveld, M.~L\"offler, R.~I. Silveira, C.~Wenk,
  and L.~Wiratma.
\newblock Median trajectories.
\newblock In {\em Proc. 18th Annual European Symposium on Algorithms (ESA)},
  volume 6346, pages 463--474. Springer, 2010.

\bibitem{BBW09}
K.~Buchin, M.~Buchin, and Y.~Wang.
\newblock Partial curve matching under the fr\'{e}chet{} distance.
\newblock In {\em Proc. ACM-SIAM Sympos. Discrete Alg. (SODA)}, 2009.

\bibitem{Buchin07}
M.~Buchin.
\newblock {\em On the Computability of the {Fr\'{e}chet} Distance Between
  Triangulated Surfaces}.
\newblock {PhD} thesis, Dept. of Comput. Sci., Freie Universit\"{a}t Berlin,
  2007.

\bibitem{CVE08}
E.~W. Chambers, {\'E}.~{Colin de Verdi{\`e}re}, J.~Erickson, S.~Lazard,
  F.~Lazarus, and S.~Thite.
\newblock Homotopic fr̩chet distance between curves or, walking your dog in
  the woods in polynomial time.
\newblock {\em Computational Geometry}, 43(3):295 -- 311, 2010.
\newblock Special Issue on 24th Annual Symposium on Computational Geometry
  (SoCG'08).

\bibitem{Chambers_isotopicfrechet}
E.~W. Chambers, T.~Ju, D.~Letscher, and L.~Liu.
\newblock Isotopic fr\'{e}chet distance.
\newblock In {\em CCCG11}, 2011.

\bibitem{chambersletscher2009}
E.~W. Chambers and D.~Letscher.
\newblock On the height of a homotopy.
\newblock In {\em CCCG'09}, pages 103--106, 2009.

\bibitem{CW08}
A.~F. Cook and C.~Wenk.
\newblock Geodesic {Fr\'{e}chet} distance inside a simple polygon.
\newblock In {\em Proc. 25th Internat. Sympos. Theoret. Asp. Comp. Sci.}, pages
  193--204, 2008.

\bibitem{Cro92}
R.~G. Cromley.
\newblock {\em Digital Cartography}.
\newblock Prentice Hall, Englewood Cliffs, NJ, 1992.

\bibitem{DS95}
T.~K. Dey and H.~Schipper.
\newblock A new technique to compute polygonal schema for $2$-manifolds with
  application to null-homotopy detection.
\newblock {\em Discrete and Computational Geometry}, 14(1):93--110, 1995.

\bibitem{douglas}
J.~Douglas.
\newblock Solution of the problem of plateau.
\newblock {\em Trans. of the American Mathematical Society}, 33:263--321, 1931.

\bibitem{dreimelharpeled2012}
A.~Driemel and S.~Har-Peled.
\newblock Jaywalking your dog: computing the fr\'{e}chet distance with
  shortcuts.
\newblock In {\em Proc. 23rd Annual ACM-SIAM Symposium on Discrete Algorithms},
  SODA '12, pages 318--337. SIAM, 2012.

\bibitem{EFV07}
A.~Efrat, Q.~Fan, and S.~Venkatasubramanian.
\newblock Curve matching, time warping, and light fields, new algorithms for
  computing similarity between curves.
\newblock {\em J. Math. Imaging Vis.}, 27(3):203--216, 2007.

\bibitem{ehgmm-nsmpa-02}
A.~Efrat, S.~{Har-Peled}, L.~J. Guibas, J.~S. Mitchell, and T.~Murali.
\newblock New similarity measures between polylines with applications to
  morphing and polygon sweeping.
\newblock {\em Discrete and Computational Geometry}, 28:535--569, 2002.

\bibitem{nomagicleash2012}
S.~Har-Peled, A.~Nayyeri, M.~Salavatipour, and A.~Sidiropoulos.
\newblock How to walk your dog in the mountains with no magic leash.
\newblock In {\em Proc. 28th Symposuim on Computational Geometry}, SoCG '12,
  pages 121--130, New York, NY, USA, 2012. ACM.

\bibitem{harpeledraichel2011}
S.~Har-Peled and B.~Raichel.
\newblock The fr\'{e}chet distance revisited and extended.
\newblock In {\em Proc. 27th Annual ACM symposium on Computational geometry},
  SoCG '11, pages 448--457, New York, NY, USA, 2011. ACM.

\bibitem{lawson1980lectures}
H.~Lawson.
\newblock {\em Lectures on minimal submanifolds}, volume~1 of {\em Mathematics
  lecture series}.
\newblock Publish or Perish, 1980.

\bibitem{MY05}
A.~Maheshwari and J.~Yi.
\newblock On computing {Fr\'{e}chet} distance of two paths on a convex
  polyhedron.
\newblock In {\em European Workshop on Computational Geometry}, pages 41--44,
  2005.

\bibitem{MS92}
R.~B. McMaster and K.~S. Shea.
\newblock {\em Generalization in Digital Cartography}.
\newblock Association of American Cartographers, Washington DC, 1992.

\bibitem{MY91}
K.~Mehlhorn and C.-K. Yap.
\newblock Constructive {Whitney-Graustein Theorem}: {Or} how to untangle closed
  planar curves.
\newblock {\em SIAM J. Comput.}, 20(4):603--621, 1991.

\bibitem{rado}
T.~Rado.
\newblock On plateau's problem.
\newblock {\em Annals of Mathematics}, 31:457--469, 1930.

\bibitem{Rot88}
J.~J. Rotman.
\newblock {\em An {Introduction} to Algebraic Topology}.
\newblock Graduate Texts in Mathematics; 119. Springer-Verlag New York Inc.,
  1988.

\bibitem{Sch92}
H.~Schipper.
\newblock Determining contractibility of curves.
\newblock In {\em Proc. Sympos. on Computational Geometry}, pages 358--367,
  1992.

\bibitem{VY90}
G.~Vegter and C.~K. Yap.
\newblock Computational complexity of combinatorial surfaces.
\newblock In {\em Proc. Sympos. on Computational Geometry}, pages 102--111,
  1990.

\bibitem{Vel01}
R.~C. Veltkamp.
\newblock Shape matching: similarity measures and algorithms.
\newblock In {\em Proc. Shape Modeling International}, pages 188--199, 2001.

\bibitem{Whi37}
H.~Whitney.
\newblock On regular closed curves in the plane.
\newblock {\em Compositio Mathematica}, 4:276--284, 1937.

\end{thebibliography}


\appendix

\section{Proof for Observation \ref{obs:breakptorder}}
\label{appendix:breakptorder}

\parpic[r]{\includegraphics[width=3cm]{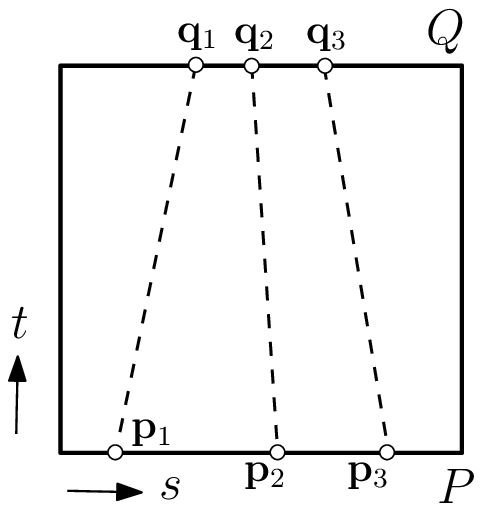}}
Note that $\optmorph$ is a map from $\mysquare \rightarrow M$, where $\mysquare = [0,1] \times [0,1]$ is the unit square and a point $(s,t) \in \mysquare$ will be mapped to $\optmorph_t(s)$. See the right figure for an illustration. (Since $P$ and $Q$ share starting and ending endpoint, the left and right sides of $\mysquare$ should be contracted to a point. We use the square view for simpler illustration.) The top and bottom boundary edges of this square are mapped to $Q$ and $P$, respectively. 
Given an \breakpt{} $\bpt_i$, let $\pin_i$ and $\qin_i$ be the parameters of $\bpt_i$ in $\optmorph_0$ and $\optmorph_1$, respectively; that is, $\optmorph_0(\pin_i) = \optmorph_1(\qin_i) = \bpt_i$. By definition of \breakpt{}s, the pre-image of $\bpt_i$ under the map $\optmorph$ necessarily includes a curve in $\mysquare$ connecting $\pin_i$ on the bottom edge to $\qin_i$ on the top boundary edge of $\mysquare$. 
Since $\bpt_i \neq \bpt_j$, the pre-images of $\bpt_i$ cannot intersect with that of $\bpt_j$. Hence no two such curves can intersect each other, which means that $\pin_i$s must be ordered in the same way as $\qin_i$s. 

\section{Proof for Lemma \ref{lem:sensepreserving}}
\label{appendix:sensepreserving}

\begin{figure*}[tbhp]
\begin{center}
\begin{tabular}{ccccccccc}
\includegraphics[height=2cm]{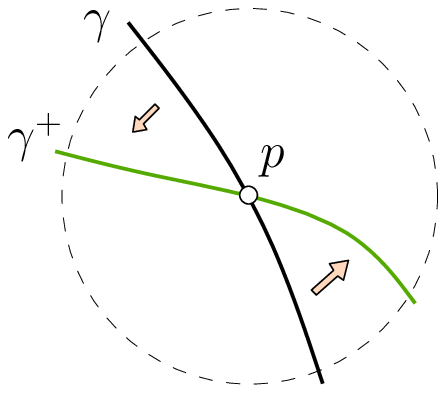} & &
\includegraphics[height=2cm]{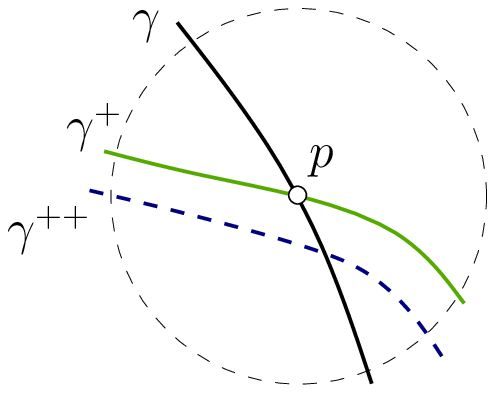} & &
\includegraphics[height=2cm]{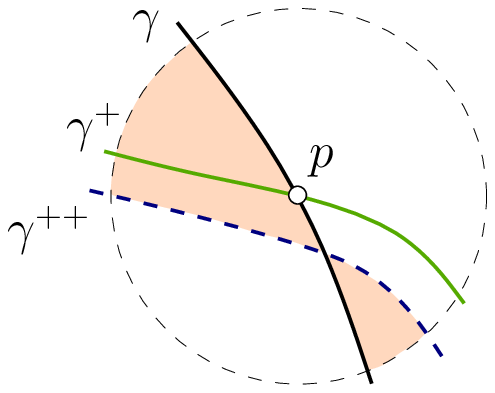} & & 
\includegraphics[height=2cm]{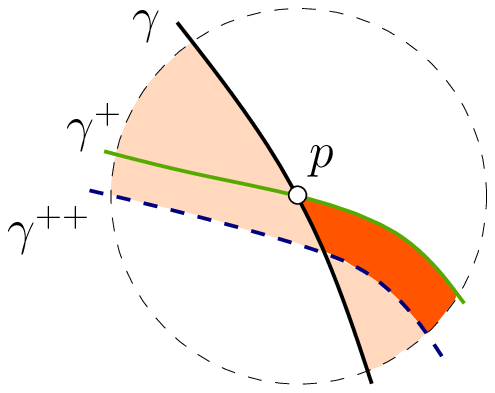} & &
\includegraphics[height=2cm]{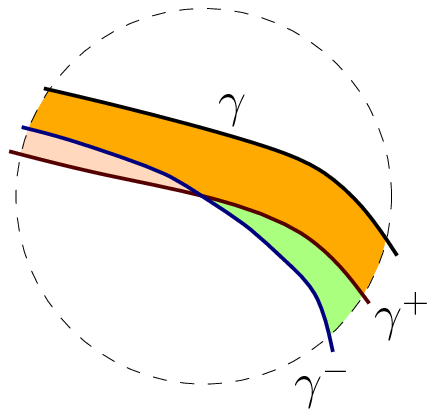}\\
(a) & & (b) & & (c) & & (d) & & (e)
\end{tabular}
\end{center}
\vspace*{-0.15in}
\caption{{\small (a) and (b) $p$ is fixed from $\gamma$ to $\gamma^+$, but not so in $\gamma^{++}$. Sweeping $\gamma$ to $\gamma^{++}$ directly through the shaded region in (c) has a smaller area than first to $\gamma^+$ then to $\gamma^{++}$ (see shaded region in (d)). The darker shaded region in (d) is swept twice. (e) If the deformation changes orientation at $\gamma$, then there is a local fold in the regions swept. }}
\label{fig:deform}
\vspace*{-0.1in}
\end{figure*}

Consider a time $t$ in the homotopy, and let $\gamma = H_t$.  We first show that $H$ deforms $\gamma$ consistently, so that every point on $\gamma$ is either fixed or deforms to the same side of $\gamma$.  

First note that if some portion of $\gamma$ is left sense-preserving at time $t$ and then reverses its direction and becomes right sense preserving at time $t^+$ a small amount later, some portion of the domain has been swept twice.  Hence this homotopy cannot have minimal area, since we can create a smaller one which stops at time $t$ and moves directly to some intermediate curve at a time greater than $t^+$ without sweeping any portion twice. See Figure \ref{fig:deform} (e). 

Now suppose that some portion of $\gamma$ is deforming to one direction and another is morphing in the opposite direction.  Since $H$ is a homotopy and is therefore continuous, this means that there is at least one interval of fixed points between these two regions (which may possibly consist of a single point).  Let $p$ be an extremal point on this interval; see~Figure \ref{fig:deform} (a) for a picture when $p$ is the only fixed point.  In addition, since $p$ is a fixed point but not an anchor point, where know there is some $t^+ = t + dt$ where $p$ is still on $h_{t^+} = \gamma^+$ and another $t^{++} = t^+ + dt$ where $p$ is not on $h_{t^{++}} = \gamma^{++}$.

Now we have several cases to consider.  First, consider if $H$ has directly reversed the direction of either portion of the curve (before $p$ or after $p$), we are in a similar situation to the one previously discussed, since the curve goes from locally forward to locally backward (or vice versa).  In this case, we again know that some area of the domain has been swept twice, which means $\gamma^{++}$ has been swept over once and then was returned to, so we can reduce the area swept by $H$ by reparameterizing $H$ to move directly to $\gamma^{++}$ without passing it and then reversing.  (See~Figure \ref{fig:deform} (b),(c), and (d) for an illustration.)

Now if neither portion directly reverses, then $\gamma^{++}$ must also be deforming to two different directions.  Also, we know that $\gamma^{++}$ must intersect $\gamma$ at some point $q \ne p$, since we are essentially rotating around a central set of fixed points on these curves.  In this case, we can again alter $H$ to attain a smaller area swept by simply sweeping directly from $\gamma$ to $\gamma^{++}$; this will reduce the area since the triangular region in the center bounded by $\gamma, \gamma^+$, and $\gamma^{++}$ will be swept only one time instead of twice.

The claim thus follows, since any homotopy with no anchor points that is not sense preserving cannot be minimal.

%
%
%

\section{Proof for Lemma \ref{lem:positivewn}}
\label{appendix:positivewn}

\begin{figure*}[tbp]
\begin{center}
\begin{tabular}{ccccccccc}
\includegraphics[height=2.5cm]{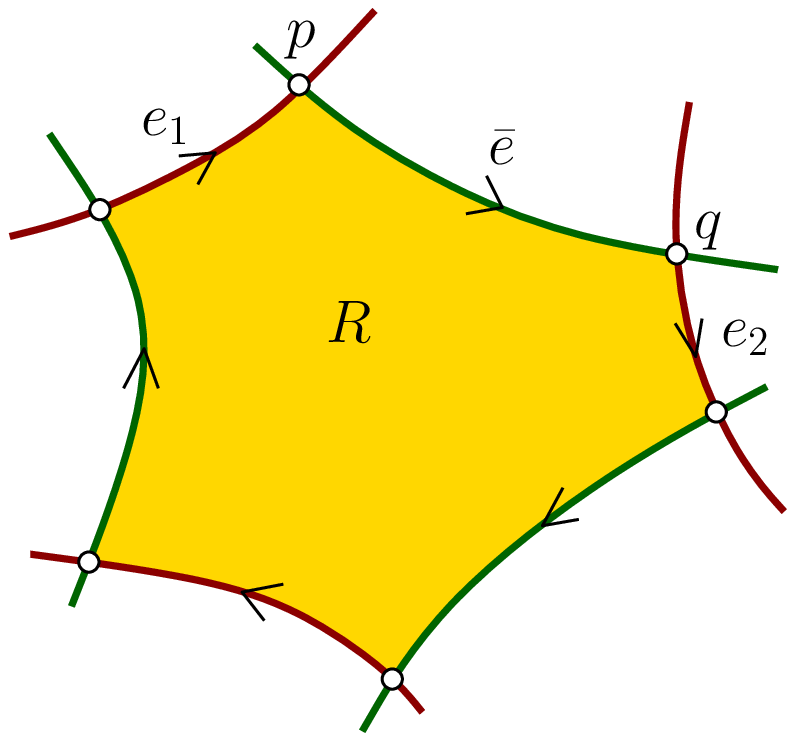} & & 
\includegraphics[height=2.5cm]{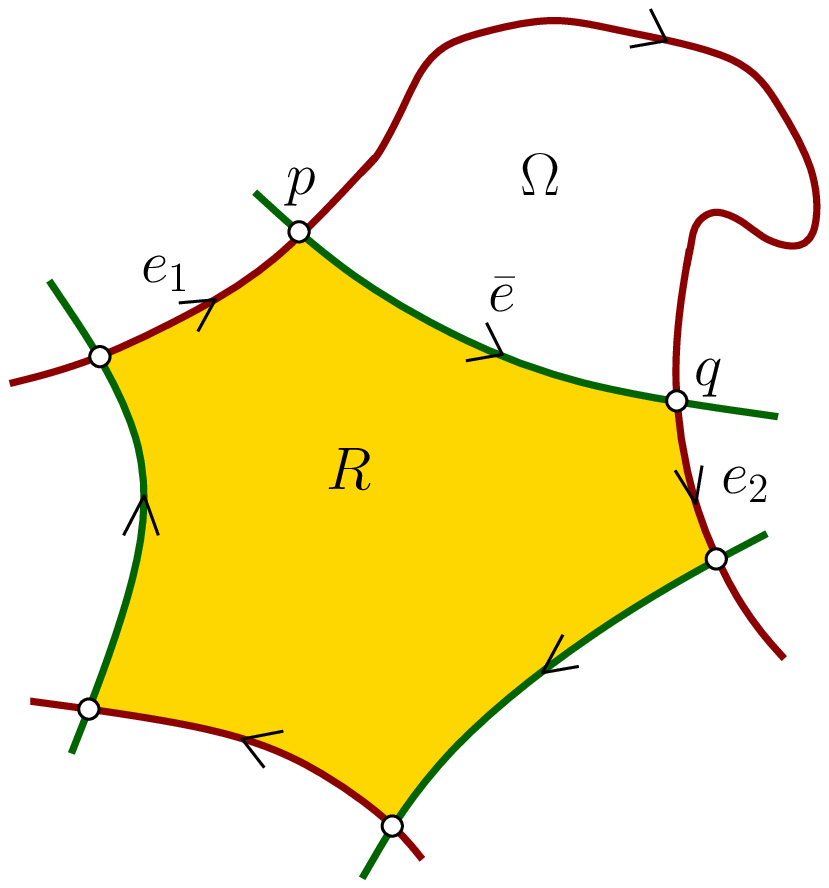} & &
\includegraphics[height=2.5cm]{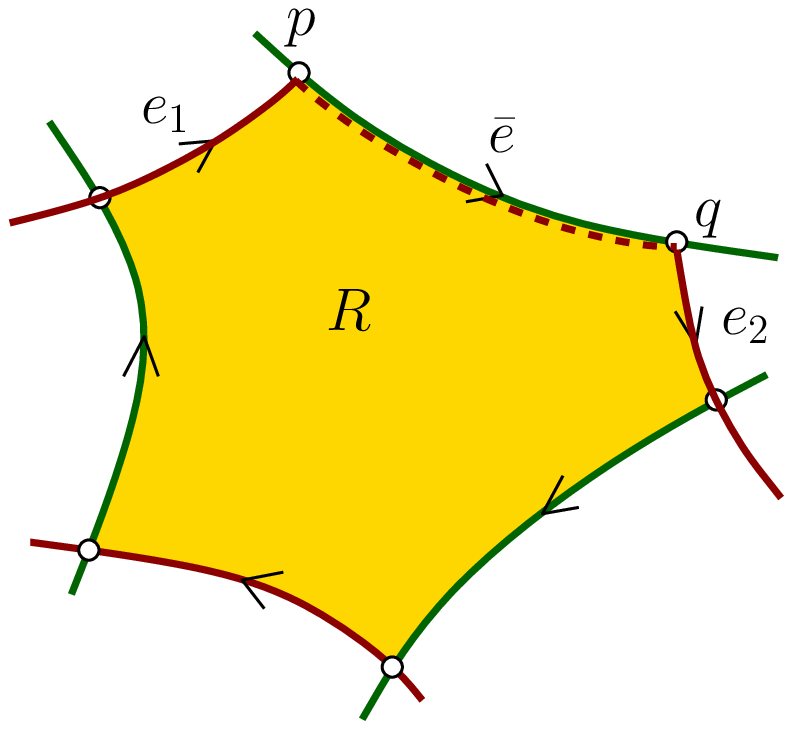} & &
\includegraphics[height=2.5cm]{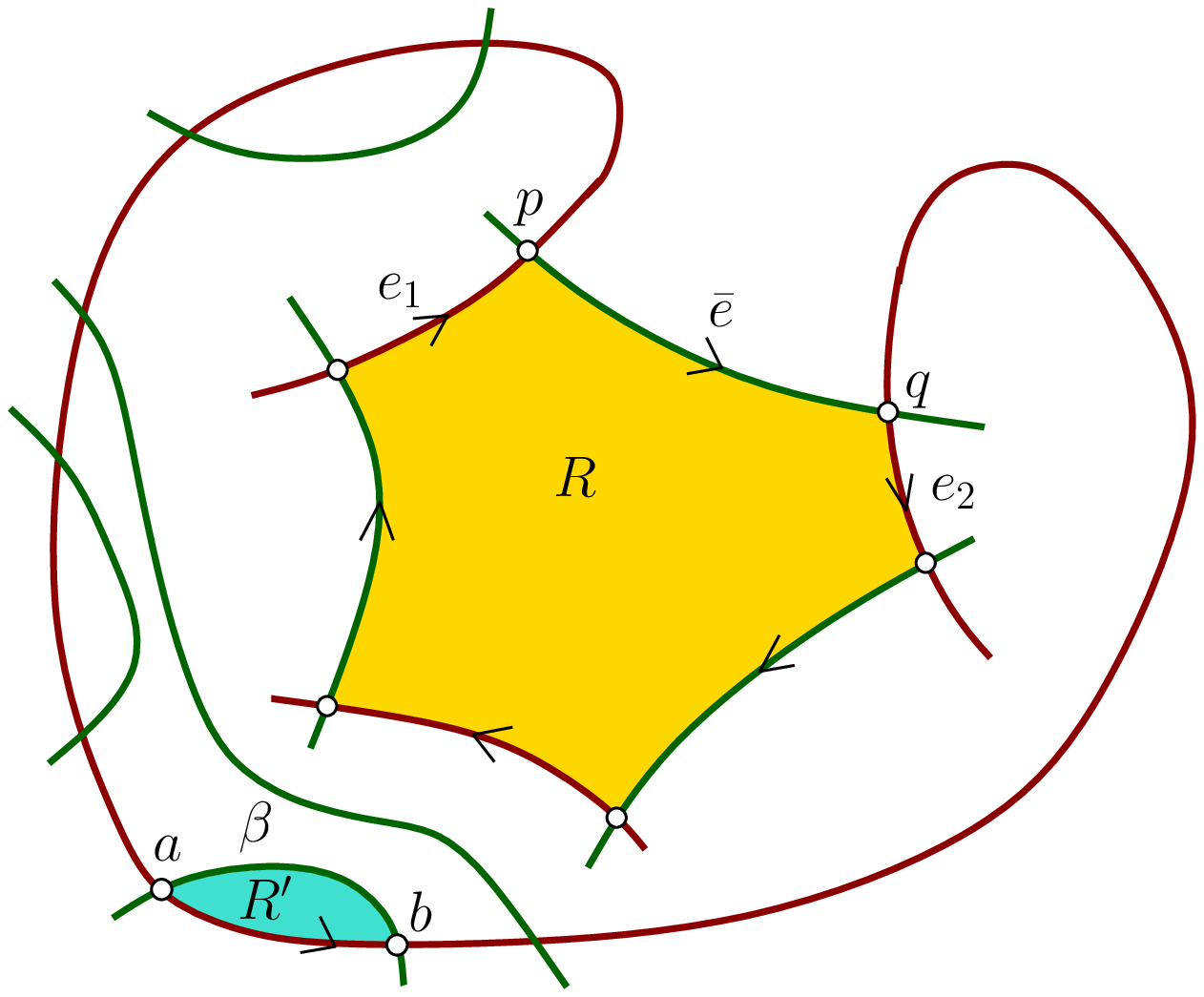} & &
\includegraphics[height=2.5cm]{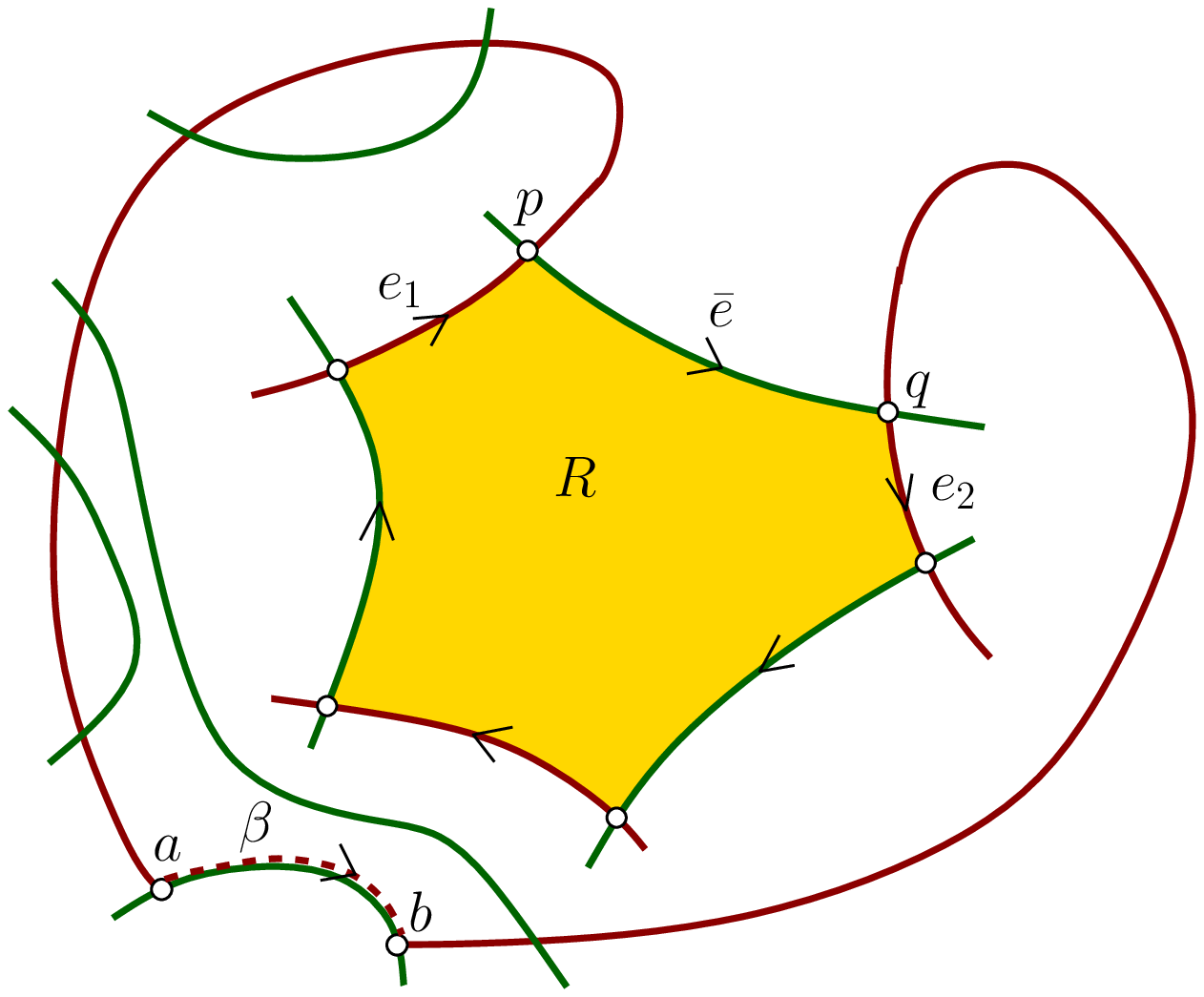} \\
(a) & & (b) & & (c) & & (d) & & (e)
\end{tabular}
\end{center}
\vspace*{-0.15in}
\caption{{\small (a) The cell $\acell$ with highest positive winding number. It boundary consists of alternating $P$-arcs (red) and $Q$-arcs (green). The two cases of relations between $P[p,q]$ and $\acell$ are shown in (b) and (d), respectively. For case (b), we can deform $P$ to sweep through $\Omega$ as shown in (c), and reduce the number of intersections by $2$. Similarly, for case (d), we can identify any bigon $R'$ and deform $P$ to reduce the number of intersections by $2$ as well. 
}}
\label{appendix:fig:positivewn}
\vspace*{-0.1in}
\end{figure*}
We prove the claim by induction on the number of intersections between $P$ and $Q$. 
The base case is when there is no intersection between $P$ and $Q$. 
In this case, $\acurve$ is a Jordan curve which decomposes $\reals^2$ into two regions, one inside $\acurve$ and one unbound. By orienting $\acurve$ appropriately, every point in the bounded cell has winding number $1$ and the claim follows.  

Now assume that the claim holds for cases with at most $k-1$ intersections. We now prove it for the case with $k$ intersections. Let an \emph{$X$-arc} denote a subcurve of curve $X$. 
Consider the arrangement $\arr(\acurve)$ formed by $\acurve = P \concatenate \text{rev}(Q)$. Since $P$ and $Q$ are simple, every cell in this arrangement has boundary edges alternating between $P$-arcs and $Q$-arcs. Assume without loss of generality that $\acurve$ has all non-negative winding numbers. 
Consider a cell $\acell \in \arr(\acurve)$ with largest (and thus positive) winding number. Since its winding number is greater than all its neighbors, it is necessary that all boundary arcs are oriented consistently as shown in Figure \ref{appendix:fig:positivewn} (a), where the cell $\acell$ (shaded region) lies to the right of its boundary arcs. 

\parpic[r]{\includegraphics[width=3cm]{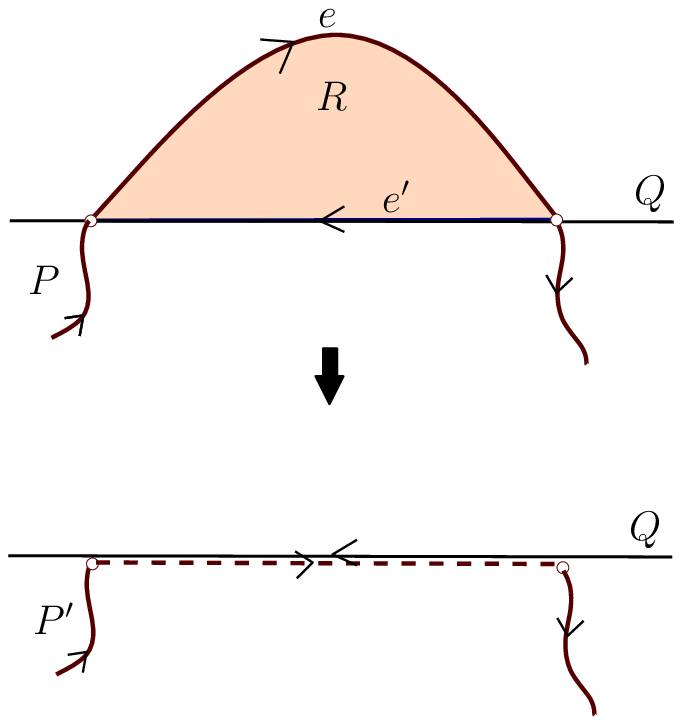}}
If $\acell$ has only two boundary arcs, $e$ from $P$ and $e'$ from $Q$, respectively, then we can morph $P$ to another simple curve $P'$ by deforming $e$ through $\acell$ to $-e'$ (where `$-$' means reversing the orientation). See the right figure for an illustration. 
The area swept by this deformation is exactly the area of cell $\acell$. 
Furthermore, after the deformation, every point $x \in \acell$ decreases their winding number by $1$, and no other point changes its winding number. 
Since points in this cell initially has strictly positive winding number, the resulting curve $\acurve' = P' \concatenate Q$ still has all non-negative winding number. The number of intersections between $P'$ and $Q$ is $k - 2$. By induction hypothesis, $\simC(P', Q) = \totalwn(\acurve')$. Since $\totalwn(\acurve) - \totalwn(\acurve') = \area(\acell)$, we have that $\totalwn(\acurve) = \simC(P', Q) + \area(\acell)$. It then follows from Observation \ref{obs:lowerbound} and the fact $\simC(P, Q) \le \simC(P', Q) + \area(\acell)$ that $\simC(P, Q) = \totalwn(\acurve)$. 

Otherwise, the cell $\acell$ has more than one $P$-arc. Take the $P$-arc $e_1$ with the smallest index along $P$, and let $p$ be the ending endpoint of it. 
Let $e_2$ be the next $P$-arc along the boundary of $\acell$, and $q$ its starting endpoint, and $Q[p,q]$ the $Q$-arc between $e_1$ and $e_2$, denoted by $\bar{e}$ in Figure \ref{appendix:fig:positivewn}. 
Obviously, the subcurve $P[p,q]$ cannot intersect $\acell$, and $P[p,q]$ and $-Q[p,q]$ bound a simple polygon, which we denote by $\Omega$. 
Either $\Omega$ is on the opposite side of the $Q$-arc $\bar{e}$ from the interior of $\acell$ (Figure \ref{appendix:fig:positivewn} (b)), 
or they are on the same side (Figure \ref{appendix:fig:positivewn} (d)).

\vspace*{0.06in}\emph{Case (1): $\acell$ and $\Omega$ are on the opposite side of $\bar{e}$.~} 
In this case, the region $\Omega$ is to the right of the oriented arc $P[p,q]$. 
Note that $P$ does not intersect the interior of $\Omega$; as otherwise, $P$ will either intersect itself or intersect $\bar{e}$, neither of which is possible. Hence only $Q$ can intersect $\Omega$. Since $Q$ is also a simple curve, there is no vertices of $\arr(\acurve)$ contained in the interior of $\Omega$. As a result, every cell of $\arr(\acurve)$ contained in $\Omega$ must have at least one boundary edge coming from $P[p,q]$. 
This implies that each cell contained in $\Omega$ has strictly positive winding number; that is, $\wn(x; \acurve) > 0$ for any $x \in \Omega$. This is because if a cell $\xi \subseteq \Omega$ has winding number $0$, then its neighbor across its boundary on the other side of $P[p,q]$ will have winding number $-1$, as $\Omega$ is to the right of $P[p,q]$. 
This violates the condition that $\acurve$ has all non-negative winding numbers and thus cannot happen. 

We now deform $P$ to $P'$ by sweeping $P[p,q]$ through $\Omega$ to $Q[p,q]$. See Figure \ref{appendix:fig:positivewn} (c). The cost of this sweeping is $\area(\Omega)$ and $\totalwn(\acurve) - \totalwn(P' \concatenate Q) = \area(\Omega)$. $P'$ is still simple, and the number of intersection points between $P'$ and $Q$ is now $k-2$. 
Since $\wn(x; \acurve) > 0$ for any $x \in \Omega$, we have $\wn(x; P' \concatenate Q) \ge 0$ for $x\in \Omega$. No other point will change their winding number after this deformation. Thus the curve $P' \concatenate Q$ has all non-negative winding numbers as well. Hence by induction hypothesis, we have that $\simC(P', Q) = \totalwn(P' \concatenate Q)$. 
Since $\simC(P, Q) - \simC(P', Q) \le \area(\Omega)$ and $\totalwn(\acurve) - \totalwn(P' \concatenate Q) = \area(\Omega)$, it then follows from Observation \ref{obs:lowerbound} that $\simC(P, Q) = \totalwn(\acurve)$. 

\vspace*{0.06in}\emph{Case (2): $\acell$ and $\Omega$ are both from the same side of $\bar{e}$.~} We now consider the remaining case as shown in Figure \ref{appendix:fig:positivewn} (d). 
Take the unbounded region $\overline{\Omega} := \reals^2 \setminus \Omega$ which is the complement of $\Omega$. This unbounded region lies to the right of the oriented curve $P[p,q]$. 
Since both $P$ and $Q$ are simple, only $Q$ can intersect $\Omega$ and there is no vertices of $\arr(\acurve)$ contained in the interior of $\overline{\Omega}$. 
First, observe that it is not possible that $\overline{\Omega} \cap Q = \emptyset$. This is because otherwise, $\overline{\Omega}$ is the unbounded face of $\arr(\acurve)$ and thus the winding number for all points in $\overline{\Omega}$ is $0$. This however is not possible as this will imply that any point $y$ to the immediate left of $P[p,q]$ has winding number $-1$, violating our assumption that all cells in $\arr(\acurve)$ have consistent (non-negative) winding numbers. 

Hence $\overline{\Omega} \cap Q \neq \emptyset$, and there are set of arcs from $Q$ intersecting $P[p,q]$.  Then there must exist a bigon cell $R'$ bounded by only two arcs, one $P$-arc $P[a,b] \subseteq P[p,q]$ and a $Q$-arc $\beta$. See Figure \ref{appendix:fig:positivewn} (d). 
Similar to the argument from the previous paragraph, we can show that points in $R'$ must have strictly positive winding number. 
Now deform $P$ to $P'$ by sweeping $P[a,b]$ through $R'$ to $\beta$ as shown in Figure \ref{appendix:fig:positivewn} (e). $P'$ is still simple, the number of intersection points between $P'$ and $Q$ is now $k-2$. Only points in $R'$ reduce their winding number by $1$, and the resulting arrangement still has consistent winding numbers. As such,  by induction hypothesis, we have that $\simC(P', Q) = \totalwn(P' \concatenate Q)$. 
Since $\simC(P, Q) - \simC(P', Q) \le \area(R')$ and $\totalwn(\acurve) - \totalwn(P'\concatenate Q) = \area(R')$, it then follows from Observation \ref{obs:lowerbound} that $\simC(P, Q) = \totalwn(\acurve)$.

\remove{
\section{The Assumption that $Q$ is A Horizontal Segment}
\label{appendix:assumption}

\begin{figure}[hbtp]
\begin{center}
\begin{tabular}{ccc}
\includegraphics[width=5cm]{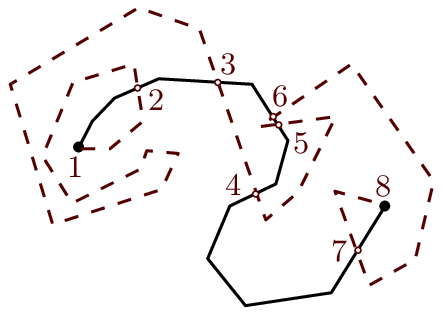} & \hspace*{0.2in} &
\includegraphics[width=5cm]{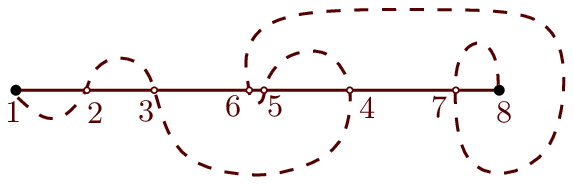} \\
(a) & & (b)
\end{tabular}
\end{center}
\label{fig:assumption}
\caption{In (a), the solid and dashes curves are $Q$ and $P$, respectively. Empty dots are intersection points between $P$ and $Q$, and their indices are orders along $P$. In (b), the combinatorial structure of the arrangement is maintained. The relative orientation of the two curves at each crossing point is also maintained. Note that all vertices of $P$ and $Q$ are ignored. }
\end{figure}
We have assumed in our algorithm that one of the input curves, say $Q$, is a horizontal segment. 
Given two general simple curves $P$ and $Q$ sharing endpoints, we can produce $\hat{P}$ and $\hat{Q}$ such that (i) $\hat{Q}$ is a horizontal segment, and (ii) the arrangement induced by $\hat{P} \concatenate \hat{Q}$ has the same combinatorial structure as the arrangement induced by $P \concatenate \text{rev}(Q)$. 
Here, by combinatorial structure of $\arr(P \concatenate \text{rev}(Q))$, we consider only intersection points of $P$ and $Q$ as vertices, and each edge is an arc in $P$ or in $Q$ between two intersection points. 
See Figure \ref{fig:assumptoin} for an example. 
This can be easily computed in $O((n + I)\log n)$ time where $n$ is the total complexity of two input curves, and $I$ is the number of intersections (or $O(n+I)$ time if the $\arr(P \concatenate \text{rev}(Q))$ is already given). 

On the other hand, observe that our algorithm only exploits the combinatorial structure of the arrangement of input curves. The only modification is that the area of each cell in $\arr(\hat{P} \concatenate \hat{Q})$ will be that of the corresponding cell in $\arr(P \concatenate \text{rev}(Q))$. This does not change the time complexity of the algorithm. 
} 

\section{Cycles in the Plane}
\label{appendix:sec:planarcycles}

We now consider the case where we have two simple cycles $P$ and $Q$ in the plane.  We have the following characterization: 

\begin{lemma}
\label{lem:anchorcycles}
If the two simple cycles $P$ and $Q$ intersect, then there is an anchor point in the optimal homotopy between them. 
\end{lemma}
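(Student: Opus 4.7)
The plan is to argue by contradiction: assume that an optimal free homotopy $H$ between $P$ and $Q$ has no anchor point, and derive a contradiction from the hypothesis $P \cap Q \ne \emptyset$. First I would observe that Lemma~\ref{lem:sensepreserving} extends verbatim from paths to free homotopies between cycles. Its proof (Appendix~\ref{appendix:sensepreserving}) is entirely local, showing that any reversal of the deformation direction on some intermediate curve forces a piece of the plane to be swept twice, and that this redundancy can be shortcut to yield a homotopy of strictly smaller area. The argument never uses that the $H_t$ are open paths rather than closed loops, so it carries over. Hence $H$ is sense-preserving; WLOG assume it is right sense-preserving.

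Next I would lift the winding-number argument of Lemma~\ref{lem:negativewn} into the cyclic setting. For each $x \in \reals^2 \setminus (P \cup Q)$, track $F(t,x) = \wn(x; H_t)$. Whenever an intermediate curve sweeps past $x$, the point $x$ crosses from one side of the curve to the other, and the right-sense-preserving property forces every such crossing to occur in the same direction; equivalently, once $x$ has been swept, the curve cannot sweep back across $x$ without locally reversing the deformation. Consequently every sweep changes $F(\cdot, x)$ by the same sign, so $F(\cdot, x)$ is monotone in $t$ for each fixed $x$, and $\wn(x;P) - \wn(x;Q)$ has a single sign (possibly zero) as $x$ ranges over $\reals^2 \setminus (P \cup Q)$.

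Finally I would exploit the hypothesis that $P$ and $Q$ intersect. Since each $H_t$ is a simple cycle varying continuously with $t$, the orientation of $H_t$ cannot flip along the homotopy (the space of simple parameterizations of a Jordan curve has two components), so $P$ and $Q$ inherit the same orientation from $H$; orient both counterclockwise so that $\wn(\cdot;P), \wn(\cdot;Q) \in \{0,1\}$. By the piecewise-analytic simplicity of $P$ and $Q$, either the two curves coincide along some nontrivial arc, in which case every point of that arc is trivially an anchor and the lemma follows immediately, or they meet transversally at some point $x^*$. Near such an $x^*$ the four local quadrants realize all four combinations of insideness with respect to $P$ and $Q$; in particular there is a point $x_1$ arbitrarily close to $x^*$ with $\wn(x_1;P)=1,\; \wn(x_1;Q)=0$ and a point $x_2$ with $\wn(x_2;P)=0,\; \wn(x_2;Q)=1$. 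Hence $\wn(\cdot;P) - \wn(\cdot;Q)$ takes both $+1$ and $-1$, contradicting the sign consistency from the previous paragraph.

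The main obstacle is the careful justification that the orientation of $H_t$ cannot flip under a continuous deformation through simple cycles, and that the sense-preserving argument of Lemma~\ref{lem:sensepreserving} is truly local so that it survives the transition from paths-with-fixed-endpoints to loops-without-any-such-constraint. Everything else is a direct cyclic adaptation of the planar machinery already developed in Sections~\ref{sec:optimal} and~\ref{sec:curvesinplane}.
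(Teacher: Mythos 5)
Your proposal follows the same skeleton as the paper's proof --- assume no anchor point, invoke Lemma \ref{lem:sensepreserving} to conclude the homotopy is \sensepreserving{}, and contradict the hypothesis that $P$ and $Q$ intersect --- but you finish differently. The paper closes in one line with a containment argument: a \sensepreserving{} homotopy pushes the curve consistently to one side, so $P$ would have to lie entirely inside (or outside) $Q$, which is impossible if they cross. You instead run the cyclic analogue of Lemma \ref{lem:negativewn}: monotonicity of $\wn(x; H_t)$ in $t$ forces $\wn(\cdot;P)-\wn(\cdot;Q)$ to have a single sign, while a transversal crossing produces nearby points where this difference equals $+1$ and $-1$. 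This is a more quantitative route that in effect makes the paper's containment claim precise, at the cost of having to control orientations explicitly.

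One step needs repair: you justify that the orientation of $H_t$ cannot flip by asserting that each $H_t$ is a simple cycle, but the definition of homotopy only requires $P$ and $Q$ to be simple; intermediate curves are merely regular and may self-intersect, so the ``two components of simple parameterizations'' argument does not apply. The orientation consistency you need (so that both $P$ and $Q$ may be taken counterclockwise with winding numbers in $\{0,1\}$) should instead come from the regularity assumption: a kink-free homotopy preserves the turning number of the closed curve, and a simple closed curve has turning number $\pm 1$ according to its orientation, so $P$ and $Q$ inherit the same orientation sense. This is not cosmetic --- if $Q$ carried the opposite orientation, $\wn(\cdot;P)-\wn(\cdot;Q)$ could be consistently non-negative even at a crossing, and your contradiction would evaporate. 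With that substitution the argument goes through.
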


\begin{proof}
Suppose that $P$ and $Q$ intersect but there is no anchor point in the optimal homotopy $H^*$.  By Lemma~\ref{lem:sensepreserving}, we know that $H^*$ must be sense preserving.  However, this means that $H^*$ continually moves one curve to the other in one local direction, which means that one curve must be entirely contained within the other, contradicting the assumption that they intersect. 
\end{proof}

At this point, the algorithm for cycles which intersect each other reduces to the one for curves: 
if we know which intersection point between $P$ and $Q$ is the \breakpt{}, 
we can simply ``break" the cycles into two curves at this point; 
this will become the start and end point for each of the two curves.  
Hence our algorithm for cycles will take a multiplicative factor of $O(I)$ extra time than the algorithm for curves, since we need to try each possible intersection point as the required anchor point. 

It then follows  from Theorem \ref{thm:plane} that: 

\begin{corollary}
Given two polygonal cycles $P$ and $Q$ in the plane of $n$ total complexity and with $I>0$ intersection points, we can compute the optimal \morph{} and its area in $O(I(I^2 \log I + n \log n))$ time.  
\end{corollary}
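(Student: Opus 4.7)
The plan is to reduce the cycle problem to the curve problem handled by Theorem \ref{thm:plane}, using Lemma \ref{lem:anchorcycles} as the structural ingredient. Since $I>0$, the cycles $P$ and $Q$ intersect, so Lemma \ref{lem:anchorcycles} guarantees that any optimal homotopy between $P$ and $Q$ has at least one \breakpt{}, and any such \breakpt{} must be an intersection point of $P$ and $Q$ (because it remains on both $H_0=P$ and $H_1=Q$). Hence it suffices to try each of the $I$ intersection points as the designated anchor and return the minimum.

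Concretely, for each intersection point $\Ipt_k$ with $k \in \{1,\ldots,I\}$, cut both cycles open at $\Ipt_k$ to obtain two simple polygonal curves $P_k$ and $Q_k$, each of which starts and ends at $\Ipt_k$. An optimal homotopy of the cycles $P$ and $Q$ whose \breakpt{} set contains $\Ipt_k$ is exactly the same thing as a minimum-area homotopy rel endpoints between the curves $P_k$ and $Q_k$. Consequently
\[
\simC(P,Q) \;=\; \min_{k \in \{1,\ldots,I\}} \simC(P_k, Q_k),
\]
and we compute each $\simC(P_k, Q_k)$ by invoking the planar curve algorithm behind Theorem \ref{thm:plane}.

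For the running time, each invocation of the curve algorithm on $(P_k, Q_k)$ costs $O(I^2 \log I + n \log n)$, and we make $I$ such invocations, yielding the claimed bound $O(I(I^2 \log I + n \log n))$. Correspondingly, an optimal homotopy realizing the minimum is reported by the same call that attains the minimum value.

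The one point that requires care, rather than being a true obstacle, is the reduction step itself: one must verify that the cut-open curves $P_k, Q_k$ inherit the ``simple polygonal chains sharing endpoints'' hypothesis needed by Theorem \ref{thm:plane}, which is immediate since $P$ and $Q$ are simple cycles. Everything else is routine: the per-call data structures (the containment forest used for the $R_u$ areas and the balanced range tree on \representative{} points) can simply be rebuilt from scratch for each anchor choice, as their construction already fits within the per-call budget, so no amortization across anchor choices is needed to meet the stated bound.
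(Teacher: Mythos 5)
Your proposal is correct and matches the paper's own argument: invoke Lemma \ref{lem:anchorcycles} to guarantee an anchor point (necessarily an intersection point since it must lie on both $P$ and $Q$), cut both cycles open at each of the $I$ candidate intersection points, solve each resulting instance of the curve problem via Theorem \ref{thm:plane}, and take the minimum, for an $O(I)$ multiplicative overhead. No substantive differences from the paper's proof.
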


The remaining case is that when the two polygonal cycles $P$ and $Q$ are disjoint. 
If one of the cycle contains the other, then the area of the optimal homotopy is simply the area sandwiched between these two simple cycles. This can be computed in $O(n\log n)$ time easily. 

However, if the cycles are disjoint but neither contains the other, then in a sense the ``optimal" free homotopy between them will simply be the area bounded by each curve, since the homotopy can collapse each curve separately to a point and then deform the points to each other. Indeed, that the sum of area bounded by the two Jordan cycles $P$ and $Q$ is the minimum possible homotopy area follows from a similar argument as the proof of Observation \ref{obs:lowerbound}. 
However, in this case, the free homotopy described above is not regular since it collapses a curve to a single point. 
Nevertheless, one can argue that there exists a sequence of regular homotopies whose areas converge to this sum. 
In other words,  the optimal area homotopy is still well-defined (as the \emph{infinum} of the area of regular free homotopies between $P$ and $Q$), although there does not exist a regular homotopy to achieve this optimal area. (This is analogous to similar issues that arise in the general case, and is the reason for introducing more restricted integrals in the mathematical literature~\cite{lawson1980lectures}.)

\section{Missing Details for the Sphere Case}
\label{appendix:sphere}

\begin{obs}
Given a closed curve $\Gamma$ and any two points $\z, \w \in \sphere$, we have that: $\wn(x; \w) = \wn(x; \z) + \wn(\z; \w)$ (all winding numbers are w.r.t the curve $\Gamma$). 
In particular, for any two points $\z_1, \z_2$ from the same cell of $\arr(\Gamma)$, we have that $\wn(x; \z_1) = \wn(x; \z_2)$ for all $x \neq \z_1, \z_2$. 
\label{obs:samecellwn}  
\end{obs}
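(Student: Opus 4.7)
My plan is to prove both parts by a direct path-concatenation argument, using the fact that once a base point $\z$ is fixed the quantity $\wn(\cdot\,; \z)$ coincides with the usual planar winding number of $\Gamma$ on $\sphereminus{\z} \cong \reals^2$, and hence can be computed as the signed crossing count along any generic path from the argument point to $\z$.

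To establish the identity $\wn(x; \w) = \wn(x; \z) + \wn(\z; \w)$, I would choose smooth representative paths $\alpha$ from $x$ to $\z$ and $\beta$ from $\z$ to $\w$ on $\sphere$ that are both transversal to $\Gamma$ and avoid its vertices; such paths exist by a standard genericity argument since $\Gamma$ has finite combinatorial complexity. The concatenation $\alpha \concatenate \beta$ is then itself a generic path from $x$ to $\w$, and its transversal intersections with $\Gamma$ are exactly the disjoint union of those of $\alpha$ with $\Gamma$ and of $\beta$ with $\Gamma$, with signs preserved. Summing the signed crossings along $\alpha \concatenate \beta$, and applying the definition of $\wn(\cdot\,; \cdot)$ to each of the three pairs in question, yields the claimed additivity.

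For the second statement, suppose $\z_1$ and $\z_2$ lie in the same open cell of $\arr(\Gamma)$. Then I can take a path from $\z_1$ to $\z_2$ that stays entirely inside that cell, so it has no crossings with $\Gamma$ and therefore $\wn(\z_1; \z_2) = 0$. Substituting $\z = \z_1$ and $\w = \z_2$ into the additivity identity just established gives $\wn(x; \z_2) = \wn(x; \z_1)$ for every $x \neq \z_1, \z_2$, which is the desired conclusion.

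The only subtlety is well-definedness: the signed crossing count used in the definition of $\wn(\cdot\,; \z)$ must not depend on the specific generic path from the argument point to $\z$. This, however, is inherited from classical planar theory by identifying $\sphereminus{\z}$ with $\reals^2$ (treating $\z$ as the point at infinity), where the winding number is well known to be independent of the generic ray used to compute it. So both sides of the identity are individually well-defined, and the concatenation argument is insensitive to the particular $\alpha, \beta$ chosen. I do not expect this step to be a real obstacle; the proof is essentially a one-line bookkeeping argument once the planar interpretation is in hand.
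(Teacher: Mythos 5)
Your proposal is correct and follows essentially the same route as the paper's proof: the winding number with base point $\z$ is the signed crossing count of any generic path to $\z$, so concatenating a path from $x$ to $\z$ with one from $\z$ to $\w$ gives the additivity identity, and a crossing-free path inside a common cell gives the second claim. The extra care you take with genericity and well-definedness is fine but not something the paper dwells on.
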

\myproofbegin
Let $\gamma(x, y)$ be a path connecting point $x$ to $y$. Note that the concatenation between $\gamma(x, \z)$ and $\gamma(\z, \w)$ is a path from $x$ to $\w$. Since $\wn(x; \w)$ is simply the summed signed crossing number of any path from $x$ to $\w$ with respect to $\Gamma$, the claim follows immediately. 
\myproofend

\subsection{Proof for Observations \ref{obs:onecell}}
\label{appendix:obs:onecell}

\paragraph{Proof of Observation \ref{obs:onecell}}
Suppose $x$ and $y$ are two points from the interior of $\acell$ such that $\optmorph$ sweeps through $x$, but not $y$. Connect $x$ with $y$ by any path $\gamma$ in the interior of $\acell$. This path has to intersect the boundary of the region swept by $\optmorph$, and let $z$ be one such intersection point on $\gamma$. Obviously, there is a local fold in the optimal \morph{} as it sweeps through $z$; namely, some intermediate curve will touch $z$ and immediately trace back. Thus the input \morph{} $\optmorph$ cannot be \sensepreserving{}. Contradiction. Hence $\optmorph$ sweeps $y$ as well. 


\subsection{Proof for Lemma \ref{lem:infpointexists}}
\label{appendix:lem:infpointexists}

We prove the lemma by induction on the number of intersections between $P$ and $Q$. 
When there is no intersection between $P$ and $Q$ (other than the common endpoints), the Jordan curve $P \concatenate \text{rev}(Q)$ divides the sphere into two connected components, and the optimal \morph{} is the smaller area of the two. The lemma holds for this base case. 

Now assume that the lemma holds for $P$ and $Q$ with at most $k$ intersection points. We wish to show the result for the case where $P$ and $Q$ have $k+1$ intersection points. Since $\optmorph$ has no \breakpt{}s, this optimal \morph{} is \sensepreserving{} by Lemma \ref{lem:sensepreserving}. Assign an orientation to the closed curve $C= P\concatenate Q$ so that locally, every point on the curve $P$ will continuously deform to its right during the optimal \morph{}. 
Now pick an \emph{arbitrary} point $\z$ not on $P$ and $Q$, and compute the winding number for each cell of $\arr(P + Q)$ w.r.t. $\z$. 
Take the cell $\acell$ with the largest winding number. 
We assume that $\z \notin \acell$. 
Suppose this is not the case and that $\z \in R$. Then we show that we can change the choice of $\z$ to make this hold. 

Specifically, if $\z \in R$, then the cell $R$ must have winding number $0$. Now take the cell $R'$ of $\arr(P + Q)$ with the smallest winding  number, and let $\w$ be a point from $R'$. Obviously, $\wn(\w; \z) \le \wn (x; \z) \le 0$ for any $x \in \sphere$. 
Now we consider the winding numbers w.r.t. to $\w$ instead of $\z$. 
By Observation \ref{obs:samecellwn} we have that $\wn(x; \w) = \wn(x; \z) + \wn(\z; \w)$. On the other hand, we have that $\wn(\z; \w) = - \wn(\w; \z)$.  Hence $\wn(\z; \w) \ge \wn(x; \z) \ge 0$ for any $x \in \sphere$. In other words, for this new choice of point $\w$, we have that $R$ still has the largest winding number and in this case, $\w \notin R$. 

\begin{figure}[htbp]
\begin{center}
\begin{tabular}{ccc}
\includegraphics[height=3cm]{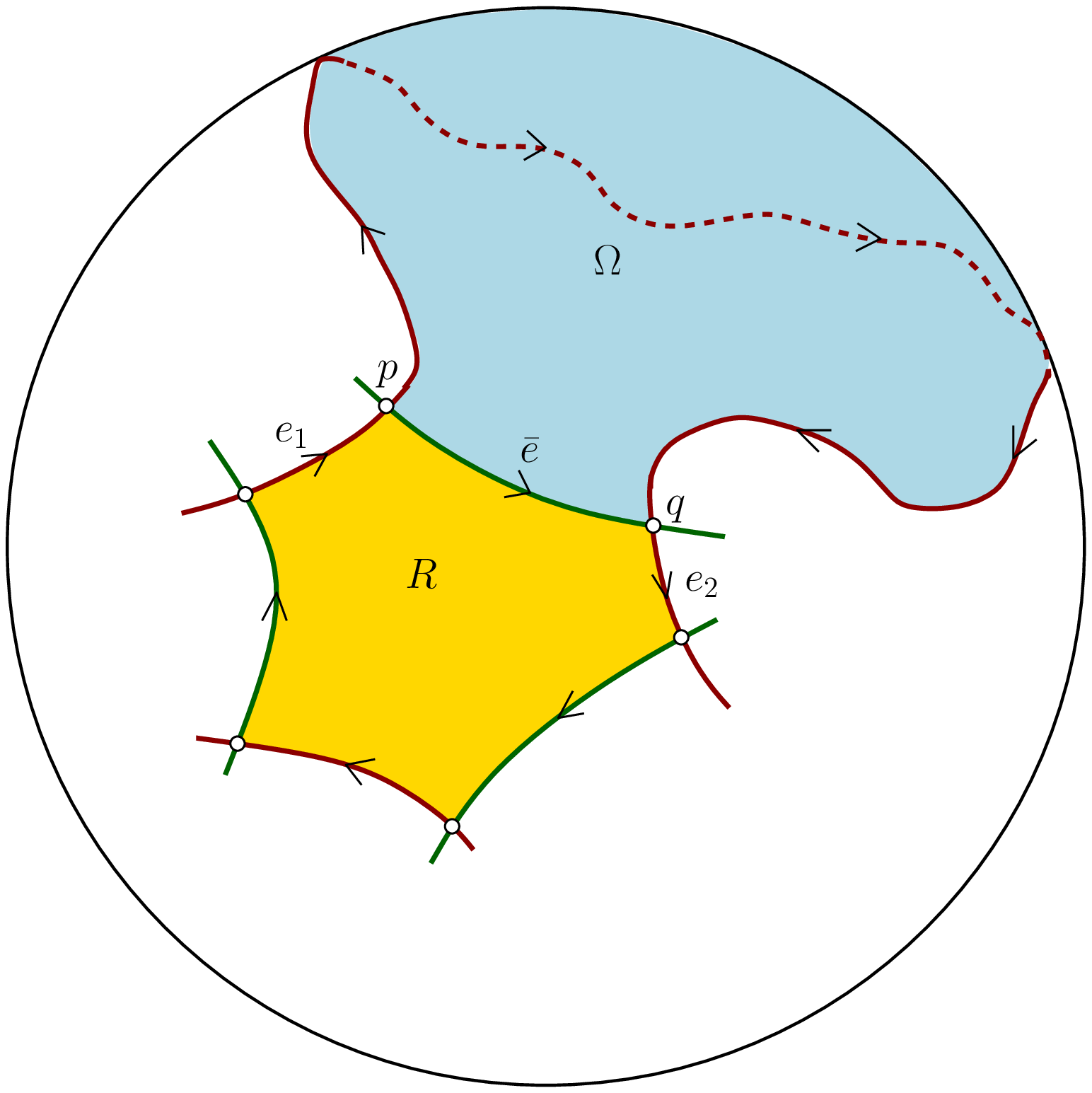} & \hspace*{0.05in} &
\includegraphics[height=3cm]{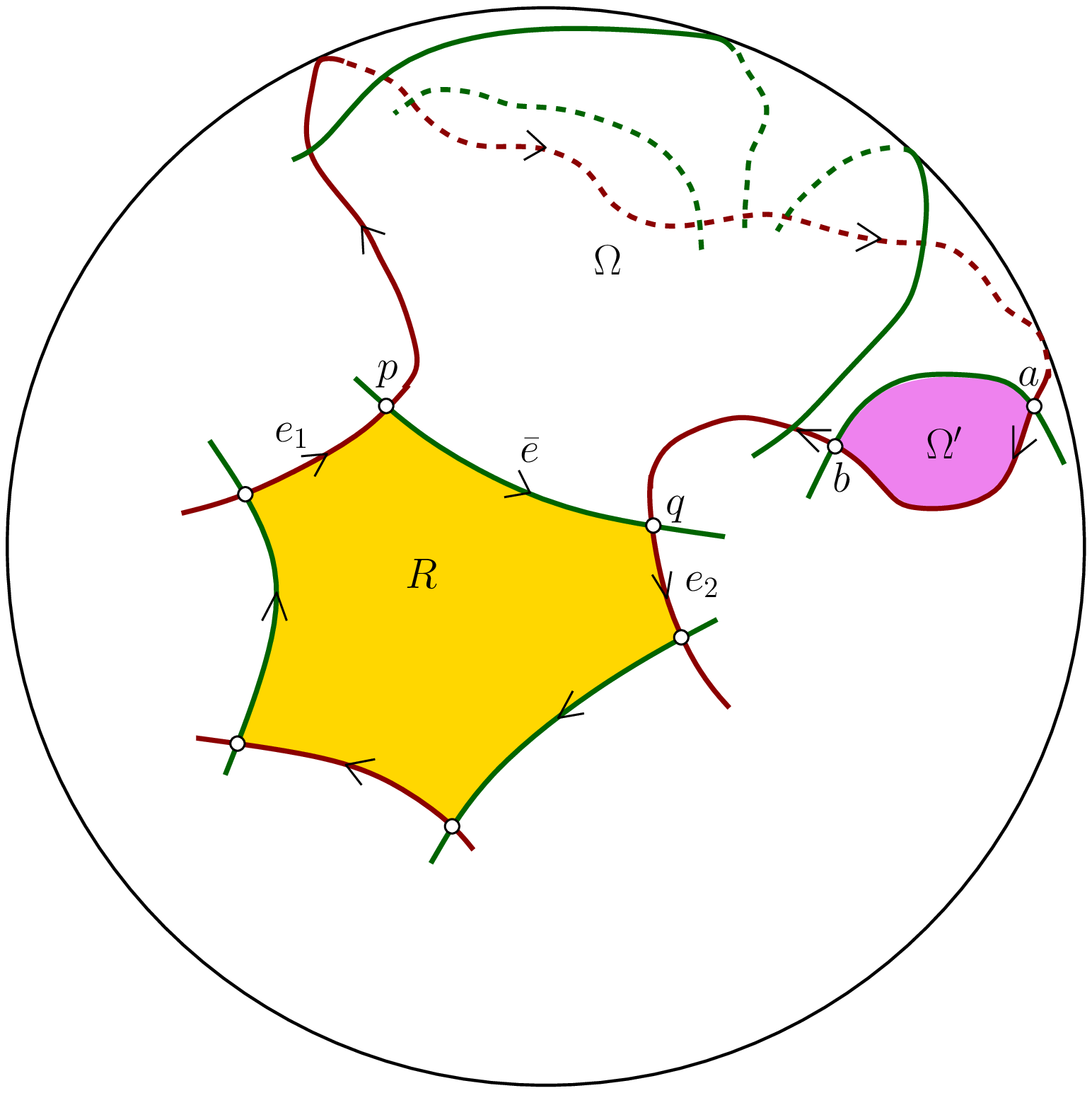} \\
(a) & & (b)
\end{tabular}
\end{center}
\caption{(a) Take $\Omega$ as the region to the left of $Q[p,q]$. (b) There always exists a bigon $\Omega'$ (possibly $\Omega$ is no $Q$-arc intersects $P[p,q]$. 
}
\label{fig:bigon}
\end{figure}
Similar to the proof of Lemma \ref{lem:positivewn}, the boundary of this cell consists of alternating arcs from $P$ and from $Q$, and they necessarily have the orientation as shown in Figure \ref{fig:positivewn} (a) (otherwise, one of the neighboring cell if $\acell$ will have a larger winding number). 
Choose the $P$-arc $e_1$ that appears earliest along $P$, with $p$ being its ending endpoint. Let $P[q]$ be the next intersection between $P$ and $\acell$. We have that $P[p,q]$ and $\bar{e} = Q[p,q]$ do not intersect each other. The Jordan curve $P[p,q]\concatenate (\text{rev}(Q[p,q]))$ bounds two regions on the sphere (instead of a bounded one and an unbounded one in the planar case as shown in Figure \ref{fig:positivewn} (b) and (d)). We consider the region that lies to the right of $P[p,q]$ (thus left of $\bar{e}$), and denote it by $\Omega$. See Figure \ref{fig:bigon} (a). Since $\acell$ is to the right of $\bar{e}$, $\Omega \cap \acell = \emptyset$. 
%
As $P$ is simple, only $Q$ can potentially intersect the cell $\Omega$. 
Hence there always exists a bigon $\Omega'$ in $\Omega$ which is a cell in $\arr(P+Q)$. 
See Figure \ref{fig:bigon} (b). 
Note that it is possible that $\Omega' = \Omega$. 
Let $P[a,b]$ denote the $P$-arc that bounds the bigon $\Omega'$. 
Let $P'$ be a new curve obtained by replacing $P[a,b]$ with (slightly above) $Q[a,b]$. 
Since $P$ deforms always to the right in the optimal homotopy, and in the end, $P[a,b]$ needs to deform to some portion of $Q$ (which is not necessarily $Q[a,b]$ though), one can show that there is an optimal \morph{} between $P$ and $Q$ that consists of first sweeping $P[a,b]$ to $Q[a,b]$ through $\Omega'$, and then optimally morph $P'$ to $Q$.  On the other hand, by the induction hypothesis, there is an optimal \morph{} $\amorph'$ from $P'$ to $Q$ that does not sweep some point, say $\z_1$ in $\sphere$. 
There are now two cases: 
\begin{itemize}\denselist
\item[(i)] If $\z_1 \in \sphere - \Omega'$, then there is an optimal \morph{} from $P$ to $Q$ that does not sweep $\z_1$ as well. The induction step then holds and the claim follows.  
\item[(ii)] Otherwise, $\z_1 \in \Omega'$. Consider the cell $R' \in \arr(P' + Q)$ that contains $\z_1$. Note that $R' \cap (\sphere - \Omega') \neq \emptyset$, as there is no vertices of $\arr(P' + Q)$ contained neither on nor inside $\Omega'$. Hence $R'$ must also contain some point, say $\z_2$, that is outside of $\Omega'$. It then follows from Observation \ref{obs:onecell} that $\z_2$ is not swept either. This leads us back to case (i), and the induction step again holds. 
\end{itemize}
The claim then follows by induction.

%

\subsection{Details of Algorithm for Sphere Case}
\label{appendix:algsphere}

\paragraph{Dynamic programming framework.}
Similar to the planar case, let $\Ipt_0, \ldots, \Ipt_I$ denote the intersection points between $P$ and $Q$, ordered by their indices along $P$, with $\Ipt_0$ and $\Ipt_I$ being the beginning and ending points of $P$ and $Q$. 
Let $T(i)$ denote the optimal homotopy area between $P[\mathbf{0}, \Ipt_i]$ and $Q[\mathbf{0}, \Ipt_i]$, and $C[i,j]$ the closed curve formed by $P[\Ipt_i, \Ipt_j] \concatenate Q[\Ipt_j, \Ipt_i]$. 
However, now we say that a pair of indices $(i,j)$ is \emph{valid} as long as $\Ipt_i$ and $\Ipt_j$ have the same order along $P$ and along $Q$. This is different from the definition of valid pairs of indices as in the planar case. 

Specifically, for any closed curve $\gamma$, it turns out that $\gamma$ can always have consistent winding number for \emph{some} choices of the point of infinity $\z$: that is, there always exists $\z \in \sphere$ such that $\wn(x, \z; \gamma)$ is consistent for all $x \in \sphere_{\z}$. We call such choices of $\z$ \emph{consistent representatives w.r.t. $\gamma$}. 
Let $\totalwn^*(\gamma) := \min_{\z} |\totalwn(\gamma; \sphere_\z)|$ where $\z$ ranges over all possible choices of consistent representatives w.r.t. $\gamma$. 
Then, Lemma \ref{lem:positivewn} and \ref{lem:infpointexists} imply that if there is an optimal homotopy between $P[\Ipt_i, \Ipt_j]$ and $Q[\Ipt_i, \Ipt_j]$ with no \breakpt{}s, then the optimal homotopy area is $\totalwn^*(\gamma)$. 
However, different from the planar case, $\totalwn^*(C[i,j])$ is defined for all valid pairs of $i$, $j$s, and it may not in general be the optimal homotopy area for $P[\Ipt_i, \Ipt_j]$ and $Q[\Ipt_i, \Ipt_j]$. 
We now have the following recurrence: 
\begin{eqnarray*} 
T(i) = 
\begin{cases} 
  0,  & \mbox{if } i == 0 \\
  \min_{j < i \mbox{~and $(j, i)$ is valid}} ~\{ ~\totalwn^*(C[j, i]) + T(j) ~\}, & \mbox{ otherwise} 
\end{cases}
\end{eqnarray*}
As before, the final goal is to compute $T(I) = \simC(P,Q)$. 

\paragraph{Computing $\totalwn^*$s.}
Here we describe how to compute $\totalwn^*(C[i,j])$ efficiently. 
Specifically, we show how to compute all $\totalwn^*(C[r, j])$s for all $j > r$ in $O(I)$ time, for any fixed $r$, after $O(n\log n + I\log I + \tricomplexity) = O(n\log n + \tricomplexity)$ preprocessing. 

\parpic[r]{\includegraphics[width=4cm]{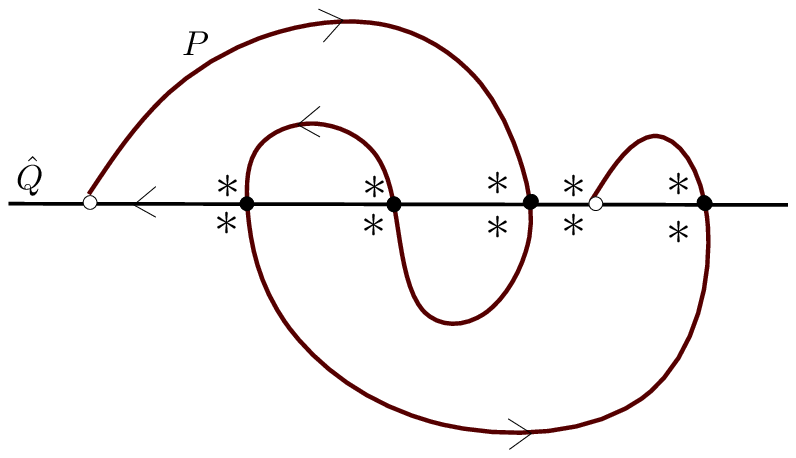}}
First, let us choose the representatives by taking two points around each intersection points $\Ipt_i$ between $P$ and $Q$ as shown in the right figure. 
Consider only those representatives to the right of $Q$ (which are those above $Q$ in the right figure), and denote them by $Z = \{\z_1, \ldots, \z_I \}$. $Z$ is sorted by their indices along $Q$. 
Those to the left of it can be handled similarly. 
The first observation is that for any two consecutive \representative{}s, $\wn(x; \z_i) - \wn(x; \z_{i+1})$ is $1$ or $-1$, depending on the orientation of $P$-arc that separating them. 

Now to compute which $\z_i$ will give \allpositive{} winding numbers, we first compute the winding number of each cell in $\arr(P+ Q)$ for $\z_1$. Next, take the cells $R_1$ and $R_2$ with minimum and maximum winding numbers, and assume that $\z_{i_1}$ and $\z_{i_2}$ are their representatives. If there are more than one cells with largest (or smallest) winding numbers, just pick an arbitrary one. By Observation \ref{obs:samecellwn}, the closed curve $C[r,j] := P[\Ipt_r, \Ipt_j]\concatenate Q[\Ipt_r, \Ipt_j]$ has all non-negative winding number w.r.t $\z_{i_1}$, and all non-positive winding number w.r.t. $\z_{i_2}$. Hence we simply compute the total winding number $\totalwn(C[r,j]; \z_{i_1})$ and $\totalwn(C[r,j]; \z_{i_2})$, and return the one with the smaller absolute value as $\totalwn^*(C[r,j])$. We refer to the indices $i_1$ and $i_2$ as the \emph{\Wmin{} and \Wmax{} indices}, respectively, and these two total winding numbers as \emph{\valid{} total winding numbers}. Basically, by Lemma \ref{lem:validwn} below, the smaller of the absolute values of the two \valid{} total winding numbers is simply the best \energy{} to deform $P[\Ipt_r, \Ipt_j]$ to $Q[\Ipt_r, \Ipt_j]$ without using \breakpt{}s.  
This improves the time complexity of computing each $\totalwn^*(C[r,j])$ to $O(I)$ time, instead of the naive $O(In)$ time by computing all $\totalwn(C[r,j]; \z_i)$s, for $i \in [1,I]$, from scratch. 

\begin{lemma}
Given an arbitrary oriented (not necessarily simple) curve $C = P' \concatenate \text{rev}(Q')$ on $\sphere$, let $\arr(C)$ be the arrangement of $C$, and $Z = \{z_1, \ldots, z_k\}$ a set of representative points from each cell in $C$. Pick an arbitrary point, say $z_1$, and compute the winding number of $C$ w.r.t. each $z_i$. Let $i_1$ and $i_2$ be the wn-min and wn-max indices. Then the best cost to deform $P'$ to $Q'$ with no anchor point is $\min \{|\totalwn(C; z_{i_1})|, |\totalwn(C; z_{i_2})| \}$. 
\label{lem:validwn}
\end{lemma}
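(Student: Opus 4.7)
\medskip
\noindent\textbf{Proof plan for Lemma \ref{lem:validwn}.}

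The plan is to exploit Observation \ref{obs:samecellwn}, which reduces all of the winding-number functions $\wn(\cdot;z_i,C)$ to a single function shifted by integer constants. Fixing $z_1$ as reference, Observation \ref{obs:samecellwn} gives $\wn(x;z_i,C) = \wn(x;z_1,C) - \wn(z_i;z_1,C)$ for every $x$. Since $z_i$ lies in some cell of $\arr(C)$, the shift $\wn(z_i;z_1,C)$ is itself one of the cell winding numbers of $C$ w.r.t.\ $z_1$, so it lies between the minimum value attained at $R_1$ and the maximum value attained at $R_2$. Consequently $C$ can have \allpositive{} winding numbers w.r.t.\ $z_i$ only when $\wn(z_i;z_1,C)$ equals the global minimum or the global maximum of $\wn(\cdot;z_1,C)$. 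This forces $z_i$ to lie in $R_1$ (in which case all winding numbers w.r.t.\ $z_{i_1}$ are non-negative) or in $R_2$ (in which case all winding numbers w.r.t.\ $z_{i_2}$ are non-positive). Hence $z_{i_1}$ and $z_{i_2}$ are the only candidate ``points at infinity'' that meet the hypothesis of Lemma \ref{lem:positivewn}.

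For the upper bound, I would apply Lemma \ref{lem:positivewn} inside the punctured sphere $\sphereminus{z_{i_1}}$ (and symmetrically $\sphereminus{z_{i_2}}$), each of which is homeomorphic to $\reals^2$ and carries the area form inherited from $\sphere$. The lemma produces a \morph{} between $P'$ and $Q'$ that stays in $\sphereminus{z_{i_j}}$ and whose area equals $|\totalwn(C;z_{i_j})|$. Each such \morph{} is automatically a valid \morph{} on $\sphere$ with no \breakpt{}, so the best no-anchor cost is at most $\min\{|\totalwn(C;z_{i_1})|,|\totalwn(C;z_{i_2})|\}$.

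For the lower bound, consider any \morph{} $H$ from $P'$ to $Q'$ with no \breakpt{}. By Lemma \ref{lem:sensepreserving} it is \sensepreserving{}, and by Lemma \ref{lem:infpointexists} its image misses at least one point $z\in\sphere$. Observation \ref{obs:onecell} implies that the unswept region is a union of open cells of $\arr(C)$, so I may replace $z$ by the representative $z_i$ of the cell containing it without changing $\text{Area}(H)$ or the fact that $H$ avoids this point. Viewing $H$ as a \sensepreserving{} \morph{} in $\sphereminus{z_i}\cong\reals^2$, the analog of Lemma \ref{lem:negativewn} applied in $\sphereminus{z_i}$ shows that $C$ has \allpositive{} winding numbers w.r.t.\ $z_i$. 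By the characterization in the first paragraph, this forces $z_i\in\{z_{i_1},z_{i_2}\}$. Finally, Observation \ref{obs:lowerbound} in $\sphereminus{z_i}$ yields $\text{Area}(H)\ge |\totalwn(C;z_i)|\ge \min\{|\totalwn(C;z_{i_1})|,|\totalwn(C;z_{i_2})|\}$, matching the upper bound.

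The main obstacle will be verifying that Lemmas \ref{lem:negativewn} and \ref{lem:positivewn}, both stated for the plane, transfer verbatim to the punctured sphere $\sphereminus{z}$ endowed with its lifted area form. This should follow from the fact that those proofs only rely on local \sensepreserving{} behaviour and on an induction that peels off a highest-winding-number cell, both of which are intrinsically local and topological; the paper's own footnote already anticipates use of Lemma \ref{lem:positivewn} on arbitrary planar Riemannian domains. Nevertheless, some care is needed to confirm that the base case of the induction behaves correctly when what used to be the unbounded face of $\reals^2$ is now the cell containing the puncture $z$, and to keep the sign conventions straight so that the $z_{i_1}$ branch produces all non-negative winding numbers while the $z_{i_2}$ branch produces all non-positive ones.
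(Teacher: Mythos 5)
Your proposal is correct and follows essentially the same route as the paper: the heart of both arguments is Observation~\ref{obs:samecellwn}, which writes $\wn(x; z_i) = \wn(x; z_1) - \wn(z_i; z_1)$ and shows that a base point can give \allpositive{} winding numbers only when its shift attains the global minimum or maximum, so only the \Wmin{} and \Wmax{} representatives matter, while the upper and lower bounds you assemble from Lemmas~\ref{lem:sensepreserving}, \ref{lem:negativewn}, \ref{lem:positivewn}, \ref{lem:infpointexists} and Observations~\ref{obs:pointofinfinity}, \ref{obs:onecell}, \ref{obs:lowerbound} are exactly what the paper's proof compresses into its opening assertion that the no-anchor cost equals $\min_{\text{valid } z\in Z}|\totalwn(C;z)|$. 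One minor refinement: when several cells tie for the extreme winding number, a valid $z_i$ need not literally lie in $R_1$ or $R_2$; the paper resolves this by showing via the integral identity that $\totalwn(C;z)$ then \emph{equals} $\totalwn(C;z_{i_1})$ or $\totalwn(C;z_{i_2})$, which is also what your final inequality $|\totalwn(C;z_i)|\ge\min\{|\totalwn(C;z_{i_1})|,|\totalwn(C;z_{i_2})|\}$ implicitly relies on.
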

\myproofbegin
First, call a point $z$ \emph{valid} if $\wn(x; z, C)$ is consistent for all $x \in \sphere$. 
The optimal cost to deform $P'$ to $Q'$ with no anchor point is $\min_{\text{valid~}z \in Z} |\totalwn(C; z)|$. 
For any base point $z$, note that by Observation \ref{obs:samecellwn}, we have $\wn(x; z) = \wn(x; z_1) - \wn(z; z_1)$. 
In order for the winding number to be consistent, we need that either $\wn(x; z) \ge 0$ for any $x\in \sphere$, or $\wn(x; z) \le 0$ for any $x \in \sphere$. 
Assume it is the former case. 
Then $\wn(z; z_1) \le \wn(x; z_1)$ for all $x \in \sphere$, implying that $\wn(z; z_1) = \wn(z_{i_1}; z_1)$. 
Furthermore, note that 
$$\totalwn(C; z) = \int_\sphere \wn(x; z) dx = \int_\sphere [\wn(x; z_1) - \wn(z; z_1)] dx = \int_\sphere [\wn(x; z_1) - \wn(z_{i_1}; z_1)] dx = \totalwn(C; z_{i_1}). $$
If it is the latter case, 
then $\wn(z; z_1) \ge \wn(x; z_1)$ for all $x \in \sphere$, implying that $\wn(z; z_1) = \wn(z_{i_2}; z_1)$.
In this case we have that 
$$\totalwn(C; z) = \int_\sphere \wn(x; z) dx = \int_\sphere [\wn(x; z_1) - \wn(z; z_1)] dx = \int_\sphere [\wn(x; z_1) - \wn(z_{i_2}; z_1)] dx = \totalwn(C; z_{i_2}). $$
The optimal cost $\min_{\text{valid~}z \in Z} |\totalwn(C; z)|$ is thus achieved by 
the smaller one of the absolute value of $\totalwn(C; z_{i_1})$ and $\totalwn(C; z_{i_2})$. 
\myproofend

\vspace*{0.08in}\noindent To further improve the time complexity, we will start with $C[r,r+1]$, and update the winding number information in each cell as well as the \valid total winding numbers, as we traverse $P$ and pass through each intersection point $\Ipt_i$. To this end, we use the same range tree data structure as in Section \ref{subsec:algorithm}. Specifically, we use this data structure to maintain the winding number information w.r.t a fixed based point $\z_1$. The \Wmin{} and \Wmax{} indices can be easily maintained by storing at each internal node the minimum and maximum winding number within its subtree. We can also maintain the total winding number w.r.t. the base point $\z_1$. The time complexity for updates remains the same as before (i.e, $O(\log I)$ time per update). 

The remaining question is to compute the \valid{} total winding numbers as $i$ increases. 
Let $A$ denote the total area of topological sphere $\sphere$. 
First, observe that for a fixec curve $C$, by Observation \ref{obs:samecellwn}, we have $\wn(x; \z, C) = \wn(x; \z_1, C) - \wn(z; \z_1, C)$ with respect the fixed based point $\z_1$. 
Hence we have that: 
\begin{align}
\totalwn(C; z) &= \int_\sphere \wn(x; z, C) dx = \int_\sphere \wn(x; \z_1, C) dx - A \cdot \wn(z; \z_1, C) \nonumber \\
&= \totalwn(C; \z_1) - A \cdot \wn(z; \z_1, C). 
\label{eqn:computetw}
\end{align}
Assume that $i_1$ is the \Wmin{} index and $i_2$ is the \Wmax{} index. 
Hence we can compute $\totalwn(C; \z_{i_1})$ and $\totalwn(C; \z_{i_2})$ in $O(1)$ time using Eqn (\ref{eqn:computetw}), since $\totalwn(C; \z_1)$, $\wn(\z_{i_1}; \z_1, C)$ and $\wn(\z_{i_2}; \z_2, C)$ are all maintained as $C$ changes from $C[r,u]$ to $C[r,u+1]$. 


Putting everything together, with $O(n\log n + \tricomplexity)$ pre-processing time, we can compute all $\simC(r, j)$s for all $j > r$  for any fixed $r$, in $O(I\log I)$ time, and thus computing all $\totalwn^*(C[r,u])$s for all $r \in [1,I]$ and all $u < r$, in $O(I^2 \log I)$ total time. Putting everything together, the dynamic programming problem can be solved in $O(n\log n + I^2\log I + \tricomplexity)$ total time.

\end{document}